\newtheorem{theorem}{Theorem}[section]
\newtheorem{proposition}[theorem]{Proposition}
\newtheorem{lemma}[theorem]{Lemma}
\theoremstyle{definition}
\newtheorem{definition}[theorem]{Definition}
\theoremstyle{remark}
\newcommand\bbbone{{ \mathchoice {1\mskip-4mu\mathrm{l} } {1\mskip-4mu\mathrm{l} }{1\mskip-4.5mu\mathrm{l} } {1\mskip-5mu\mathrm{l}} }}
\newcommand*{\remove}{%
  \mathpalette\@remove
}
\def\@remove#1#2{
\def\@removesymbol{\boldsymbol\backslash}%
  \mathord{%
    \rlap{%
      \settowidth\dimen@{$\m@th#1{#2}$}%
      \kern.5\dimen@
      \settowidth\dimen@{$\m@th#1{\@removesymbol}$}%
      \kern-.5\dimen@
      $\m@th#1{\@removesymbol}$%
    }%
    {#2}%
  }%
}
\newcommand\dd{\text{\textup{d}}} 
\newcommand\ds{\text{\textup{s}}} 
\newcommand\cdotaction{\mathord{\cdot}} 
\newcommand\exter{{\textstyle\bigwedge}} 
\newcommand\syme{{\textstyle\bigvee}} 
\newcommand\ordwedge{\mathord{\wedge}} 
\newcommand\ad{{\text{\textup{ad}}}} 
\newcommand\sign{{\text{\textup{sign}}}}
\newcommand\der{{\text{\textup{Der}}}} 
\newcommand\ensvide{{\varnothing}} 
\newcommand\Id{{\text{\textup{Id}}}} 
\newcommand\loc{{\text{\textup{loc}}}}
\newcommand\Ad{{\text{\textup{Ad}}}} 
\newcommand\lie{{\text{\textup{Lie}}}}
\newcommand\equ{{\text{\textup{equ}}}} 
\newcommand\Inn{{\text{\textup{Inn}}}} 
\newcommand\Out{{\text{\textup{Out}}}}
\newcommand{\Conn}{\Theta}				
\newcommand{\Act}{\mathcal{S}}
\newcommand{\dvol}{\textnormal{dvol}}
\newcommand{\hodgeast}{\mathop{\star}}
\newcommand{\dualast}{\ast}
\newcommand{\grast}{\bullet}
\newcommand{\rke}{\tau}
\newcommand{\tla}{{\lienotation{TLA}}}
\newcommand{\maxinner}{{\text{m.i.}}}
\newcommand{\inner}{{\text{\textup{inner}}}}
\newcommand{\algebraic}{{\text{\textup{alg.}}}}
\newcommand{\dR}{{\text{dR}}}
\newcommand{\circoverset}[1]{\mathring{#1}}
\newcommand{\omegadot}{\circoverset{\omega}}	
\newcommand{\Adot}{\circoverset{A}}	
\newcommand{\Rdot}{\circoverset{R}}	
\newcommand{\Fdot}{\circoverset{F}}	
\newcommand{\Conndot}{\circoverset{\Conn}}	
\newcommand{\nabladot}{\circoverset{\nabla}}
\newcommand{\halpha}{\widehat{\alpha}}
\newcommand{\homega}{\widehat{\omega}}	
\newcommand{\hgamma}{\widehat{\gamma}}	
\newcommand{\hphi}{\widehat{\phi}}	
\newcommand{\hg}{{\widehat{g}}}
\newcommand{\hnabla}{{\widehat{\nabla}}}
\newcommand{\hA}{{\widehat{A}}}				%
\newcommand{\hF}{{\widehat{F}}}				%
\newcommand{\hConn}{{\widehat{\Conn}}}
\newcommand{\hd}{\widehat{\dd}} 
\newcommand{\hR}{\widehat{R}}
\newcommand{\hsfX}{{\widehat{\mathsf{X}}}}
\newcommand{\raR}{{\widetilde{R}}}
\newcommand{\lfc}{\mathfrak{a}} 
\DeclareMathOperator{\Aut}{Aut} 
\DeclareMathOperator{\End}{End} 
\DeclareMathOperator{\tr}{tr} 
\newcommand\varnotation[1]{{\mathcal{#1}}}
\newcommand\algnotation[1]{{\mathbf{#1}}}
\newcommand\lienotation[1]{{\mathbf{\mathsf{#1}}}}
\newcommand\grnotation[1]{{\mathsf{#1}}}
\newcommand\evnotation[1]{{\mathnormal{#1}}}
\newcommand\varE{{\varnotation{E}}}
\newcommand\varL{{\varnotation{L}}}
\newcommand\varM{{\varnotation{M}}}
\newcommand\varP{{\varnotation{P}}}
\newcommand\algA{{\algnotation{A}}}
\newcommand\algzero{{\grnotation{0}}}
\newcommand\lieA{{\lienotation{A}}}
\newcommand\lieL{{\lienotation{L}}}
\newcommand\evF{{\evnotation{F}}}
\newcommand\kg{{\mathfrak g}}
\newcommand\kD{{\mathfrak D}}
\newcommand\kS{{\mathfrak S}} 
\newcommand\kX{{\mathfrak X}}
\newcommand\kY{{\mathfrak Y}}
\newcommand\ksl{{\mathfrak{sl}}}
\newcommand\ku{{\mathfrak{u}}}
\newcommand\gC{{\mathbb C}}
\newcommand\sfX{{\mathsf X}}
\newcommand\caG{{\mathcal G}}
\newcommand\caD{{\mathcal D}}
\newcommand\caL{{\mathcal L}}
\newcommand\caN{{\mathcal N}}
\newcommand\caZ{{\mathcal Z}}
\numberwithin{equation}{section}
\begin{document}

\begin{frontmatter}

\title{Formulation of gauge theories on transitive Lie algebroids}

\author{Cédric Fournel}
\ead{cedric.fournel@cpt-univ-mrs.fr}

\author{Serge Lazzarini}
\ead{serge.lazzarini@cpt-univ-mrs.fr}

\author{Thierry Masson}
\ead{thierry.masson@cpt-univ-mrs.fr}

\address{Centre de Physique Théorique\tnoteref{umr}\\
Case postale 907, CNRS-Luminy\\
F--13288 Marseille Cedex 9, France}

\tnotetext[umr]{Unité Mixte de Recherche (UMR 7332) du CNRS, de l'université d'Aix-Marseille et de l'université du Sud Toulon-Var. Unité affiliée à la FRUMAM Fédération de Recherche 2291.}%

\begin{keyword}
Differential geometry, differential algebra, Lie algebroid, gauge theories, Yang-Mills-Higgs models.

\MSC[2010] Primary 57Rxx, 58Axx, 53C05, 46L87, 81T13; Secondary 46L87, 81T13
\end{keyword}

\begin{abstract}
In this paper we introduce and study some mathematical structures on top of transitive Lie algebroids in order to formulate gauge theories in terms of generalized connections and their curvature: metrics, Hodge star operator and integration along the algebraic part of the transitive Lie algebroid (its kernel). Explicit action functionals are given in terms of global objects and in terms of their local description as well. We investigate applications of these constructions to Atiyah Lie algebroids and to derivations on a vector bundle. The obtained gauge theories are discussed with respect to ordinary and to similar non-commutative gauge theories.
\end{abstract}

\end{frontmatter}

\newpage

\tableofcontents

\bigskip

Usually, gauge theories are mathematically understood within the framework of differential geometry in terms of connections on principal fibre bundles and covariant derivatives on associated vector bundles. Subsequently, non-commutative geometry has proposed a more general framework for gauge theories, in terms of associative algebras, differential calculi, modules and (non-commutative) connections. These (generalized) gauge theories have interesting features from a physical point of view (see \cite{Mass42} for a recent review), but they often require an investment in some more involved mathematical structures. 

In this paper, we propose another route to construct gauge theories which is grounded on transitive Lie algebroids. The first advantage of this approach is that the mathematical structures needed to write an action functional is very close to ordinary geometry, in particular for Atiyah Lie algebroids. The second benefit is that the natural gauge theories constructed in this paper will be shown to be of Yang-Mills-Higgs types, exactly as in many examples developed in non-commutative geometry.

This paper can be considered as a follow up of \cite{Mass38}, where some relations between non-commutative connections on a specific algebra and some generalized notions of connections on transitive Lie algebroids were exhibited. Inspired by \cite{Mass15}, these relations have been used as a guide for the present paper to construct the necessary mathematical structures on top of transitive Lie algebroids which permit to write gauge invariant action functionals. 

In Section~\ref{sec-localdescriptionforms}, we present some facts about the local descriptions of differential forms on a transitive Lie algebroid $\lieA$ with kernel $\lieL$. Our purpose is to establish a correspondence between differential complexes describing global objects on $\lieA$ and the corresponding differential complexes describing their local counterparts. This step is necessary in order to exhibit constructions proposed in the sequel, and in order to write an action functional in terms of local objects, as it is customary in physics.

In Section~\ref{sec-metric-integration}, we study metrics, Hodge star operators and integration on top of transitive Lie algebroids. These constructions take their roots from two previous constructions which used similar notions of forms and integration: the first one in the context of the derivation-based non-commutative geometry \cite{Mass15}, and the second one in the context of Lie algebroids \cite{Kuba96a,MR1908998,MR2020382}.

Integration on $\lieA$ have been studied in \cite{Kuba96a,MR1908998} for forms on $\lieA$ with values in functions. We generalize part of this work by defining an integration along the ``fibre'' (the algebraic or ``inner'' part of the transitive Lie algebroid) for forms with values in the kernel $\lieL$ (Def.~\ref{def-innerintegration}). In \cite{Kuba96a,MR1908998}, the geometric object which permits to define integration along the fibre is a non-singular cross-section $\varepsilon$ in $\exter^n \lieL$ where $n$ is the dimension of the fibre of the kernel $\lieL$. The starting point of our definition is quite different, and it is inspired by similar constructions proposed in \cite{Mass15}: we use right away a notion of metric on $\lieL$. Then we associate to such a metric a global form of maximal inner degree (Prop.~\ref{prop-globalinnerform}). This form plays the role of ``volume form'' for the integration along the fibre, and turns out to be dual to $\varepsilon$ in a certain sense, as will be explained. On the other hand, the notion of metric allows us to define a Hodge star operator as well. 

The definition and properties of metrics are given in \ref{subsec-metrics}, and the corresponding notions of integration and of Hodge star operator are defined in \ref{subsec-innerorientationandintegration} and \ref{subsec-hodgeoperators}. The notion of mixed local basis of forms introduced in \ref{subset-mixedlocalbasisofforms} plays an essential role in the set up of the definitions.

In Section~\ref{sec-gaugetheories}, using the mathematical structures introduced in Section~\ref{sec-metric-integration}, we write gauge invariant action functionals of Yang-Mills-Higgs type. These action functionals are given in terms of global objects on the Lie algebroid, the curvature of a generalized connection, and also in terms of trivialized forms on open subsets of the base manifold. The fact that the ``Higgs part'' of the generalized connection vanishes or not makes a clear distinction between pure Yang-Mills theories and Yang-Mills-Higgs type theories.

In Section~\ref{sec-applicationAtiyah} we apply the general constructions and results on specific Lie algebroids.

In \ref{subsec-applicationAtiyah}, we specify our constructions to Atiyah Lie algebroids for which the underlying geometry of the principal fibre bundle helps us to improve some of the results obtained in the general case. For instance, using integration along the algebraic part, Theorem~\ref{thm-atiyah-commutationofdifferentials} binds the differential on forms on $\lieA$ with values in functions to the de~Rham differential on forms on the base manifold, while Theorem~\ref{thm-relationsdeRhamTLAAtiyah} relates the de~Rham calculus on the principal fibre bundle to the space of forms with values in the kernel $\lieL$. Comments are made about the relations between ordinary gauge theories on a principal fibre bundle and our new gauge theories, as well as about common features and differences with the non-commutative geometry approach to gauge theories.

In \ref{subsec-Derivationsonavectorbundle}, we improve some general results for the case of Lie algebroids of derivations on a vector bundle. In that situation, we extend in \eqref{eq-definnerintegrationtrace} our notion of integration along the algebraic part, and we make apparent close relations with some non-commutative structures.

The more concrete physical applications of the gauge theories proposed here are out of the scope of the present paper. They will be exposed in a forthcoming paper.

\section{Local description of differential forms}
\label{sec-localdescriptionforms}

In this section, we recall some constructions of forms on Lie algebroids as well as their local descriptions.

\subsection{Differential forms on transitive Lie algebroids}

In this paper we use the notations introduced in \cite{Mass38}, and we refer to \cite{Mack05a} for more developments on Lie algebroids. Let $\varM$ be a smooth manifold. In this paper, $C^\infty(\varM)$ stands for complex valued functions on $\varM$ and $\Gamma(T\varM)$ is the space of smooth vector fields on $\varM$.

\begin{definition}
\label{def-liealgebroidalgebraic}
A Lie algebroid $\lieA$ is a finite projective module over $C^\infty(\varM)$ equipped with a Lie bracket $[-,-]$ and a $C^\infty(\varM)$-linear Lie morphism, the anchor, $\rho : \lieA \rightarrow \Gamma(T\varM)$ such that $[\kX, f \kY] = f [\kX, \kY] + (\rho(\kX)\cdotaction f) \kY$ for any $\kX, \kY \in \lieA$ and $f \in C^\infty(\varM)$, where $\Gamma(T\varM)$ is the space of smooth vector fields on $\varM$. 

A Lie algebroid $\lieA \xrightarrow{\rho} \Gamma(T\varM)$ is transitive if $\rho$ is surjective.
\end{definition}

For a transitive Lie algebroid, the kernel $\lieL = \ker \rho$ is a Lie algebroid (with null anchor) on $\varM$, and there is a locally trivial bundle in Lie algebras $\varL$ such that $\lieL = \Gamma(\varL)$. Such a Lie algebroid defines a short exact sequence of Lie algebras and $C^\infty(\varM)$-modules
\begin{equation}
\label{eq-sectransitiveliealgebroid}
\xymatrix@1{{\algzero} \ar[r] & {\lieL} \ar[r]^-{\iota} & {\lieA} \ar[r]^-{\rho} & {\Gamma(T\varM)} \ar[r] & {\algzero}}.
\end{equation}

The kernel $\lieL$ will often be referred to as the ``inner'' part of $\lieA$, and structures defined on $\lieL$ as ``inner'' objects. This terminology is inspired by the physical applications we have in mind, where $\Gamma(T\varM)$ will refer to (infinitesimal) symmetries on space-time (``outer'' symmetries) and $\lieL$ to (infinitesimal) inner symmetries \textsl{i.e.} infinitesimal gauge symmetries.

There are natural spaces of ``differential forms'' to be considered on a (transitive or not) Lie algebroid $\lieA$. They depend on the choice of a representation of $\lieA$ on a vector bundle. We are interested in two of them. 

\begin{definition}
\label{def-formsvaluesfunctions}
Let $\lieA \xrightarrow{\rho} \Gamma(T\varM)$ be a Lie algebroid (not necessarily transitive).
We define $(\Omega^\grast(\lieA), \hd_\lieA)$ as the graded commutative differential algebra of forms on $\lieA$ with values in $C^\infty(\varM)$. 
\end{definition}

\begin{definition}
\label{def-formsvalueskernel}
Let $\lieA \xrightarrow{\rho} \Gamma(T\varM)$ be a transitive Lie algebroid, with $\lieL$ its kernel.
We define $(\Omega^\grast(\lieA, \lieL), \hd)$ as the graded differential Lie algebra of forms on $\lieA$ with values in the kernel $\lieL$, where $\lieA$ is represented on $\lieL$ by the usual adjoint representation.
\end{definition}

We refer to \cite{Mass38} for properties of this differential calculus.

Recall that an ordinary connection on a transitive Lie algebroid $\lieA \xrightarrow{\rho} \Gamma(T \varM)$ is a splitting $\nabla : \Gamma(T \varM) \rightarrow \lieA$ as $C^\infty(\varM)$-modules of the short exact sequence
\begin{equation}
\label{eq-ordinaryconnectionontransitivealgebroid}
\xymatrix@1@C=25pt{{\algzero} \ar[r] & {\lieL} \ar[r]^-{\iota} & {\lieA} \ar[r]_-{\rho} & {\Gamma(T \varM)} \ar[r] \ar@/_0.7pc/[l]_-{\nabla} & {\algzero}}
\end{equation}
Then one can associate to $\nabla$ a $1$-form $\lfc^\nabla \in \Omega^1(\lieA, \lieL)$ uniquely defined by
\begin{equation*}
\kX = \nabla_X - \iota \circ \lfc^\nabla(\kX)
\end{equation*}
This $1$-form is normalized by $\lfc^\nabla \circ \iota(\ell) = -\ell$ for any $\ell \in \lieL$. In the following we will call it the connection $1$-form of the ordinary connection $\nabla$.

\medskip
Let us collect some of the definitions and notations introduced in \cite{Mass38}, Section~4.1. A trivial Lie algebroid is the Atiyah transitive Lie algebroid (see Section~\ref{sec-applicationAtiyah}) associated to the trivial principal fibre bundle $\varM \times G$ where $G$ is a Lie group, whose Lie algebra is denoted by $\kg$. The space of smooth sections $\tla(\varM, \kg) \equiv \lieA = \Gamma(T\varM \oplus (\varM \times \kg))$ is a transitive Lie algebroid with anchor $\rho(X \oplus \gamma) = X$, bracket $[X \oplus \gamma, Y \oplus \eta] = [X,Y] \oplus (X \cdotaction \eta - Y \cdotaction \gamma + [\gamma,\eta])$, and kernel $\lieL = \Gamma(\varM \times \kg)$. The graded commutative differential algebra $(\Omega^\grast(\lieA), \hd_\lieA)$ identifies with the total complex of the bigraded commutative algebra $\Omega^\grast(\varM) \otimes \exter^\grast \kg^\ast$ equipped with two differential operators $\dd$ and $\ds$ of bidegrees $(1,0)$ and $(0,1)$ respectively, where $\dd$ is the de~Rham differential on $\Omega^\grast(\varM)$, and $\ds$ is the Chevalley-Eilenberg differential on $\exter^\grast \kg^\ast$, so that $\hd_\lieA = \dd + \ds$. The graded differential Lie algebra $(\Omega^\grast_\tla(\varM,\kg), \hd_\tla) \equiv (\Omega^\grast(\lieA, \lieL), \hd)$ identifies with the total complex of the bigraded Lie algebra $\Omega^\grast(\varM) \otimes \exter^\grast \kg^\ast \otimes \kg$ equipped with the differential $\dd$ and the Chevalley-Eilenberg  differential $\ds'$ on $\exter^\grast \kg^\ast \otimes \kg$ for the adjoint representation of $\kg$ on itself, so that $\hd = \dd + \ds'$.

\subsection{Local trivializations}

As explained in detail in \cite{Mack05a}, a transitive Lie algebroid $\lieA \xrightarrow{\rho} \Gamma(T\varM)$ with kernel $\lieL = \Gamma(\varL)$ can be described locally as a triple $(U, \Psi, \nabla^0)$ where $\Psi : \Gamma(U \times \kg) \xrightarrow{\simeq} \lieL_U$ is an isomorphism of Lie algebras and $C^\infty(U)$-modules; where $\nabla^0 :  \Gamma(TU) \rightarrow \lieA_U$ is an injective morphism of Lie algebras and $C^\infty(U)$-modules compatible with the anchors; and such that $[\nabla^0_X, \iota \circ \Psi(\gamma) ] = \iota \circ \Psi(X \cdotaction \gamma)$ for any $X \in \Gamma(TU)$ and any $\gamma \in \Gamma(U \times \kg)$. Such a triple defines an isomorphism of Lie algebroids $S : \tla(U, \kg) \xrightarrow{\simeq} \lieA_U$ given by $S(X \oplus \gamma) = \nabla^0_X + \iota \circ \Psi(\gamma)$.

A Lie algebroid atlas for $\lieA$ is a family of triples $\{(U_i, \Psi_i, \nabla^{0,i})\}_{i \in I}$ such that $\bigcup_{i \in I} U_i = \varM$ and each triple $(U_i, \Psi_i, \nabla^{0,i})$ is a local trivialization of $\lieA$.

On $U_{ij} = U_i \cap U_j \neq \ensvide$ one can define $\alpha^{i}_{j} : U_{ij} \rightarrow \Aut(\kg)$ with $\alpha^{i}_{j} = \Psi_i^{-1} \circ \Psi_j$. To any $\kX \in \lieA$ there corresponds a family $\{ X^i \oplus \gamma^i \in \tla(U_i, \kg)\}_{i \in I}$ such that $S_i(X^i \oplus \gamma^i) = \kX_{|U_i}$. The local vector fields $X^i$ are the restrictions onto $U_i$ of the global vector field $X = \rho(\kX)$. There exists $\ell_{ij} \in \Omega^1(\lieA_{U_{ij}}, \lieL_{U_{ij}})$ such that $\nabla^{0,j}_X = \nabla^{0,i}_X + \iota \circ \ell_{ij}(X)$ and $\Psi_i(\gamma^i) = \Psi_j(\gamma^j) + \ell_{ij}(X)$. Anticipating on some Čech cohomology considerations, we make a distinction between $U_{ij}$ and $U_{ji}$ for $i \neq j$, so that one can define without any ambiguity $\chi_{ij} = \Psi_i^{-1} \circ \ell_{ij} \in \Omega^1(U_{ij}) \otimes \kg$. Then one has $\gamma^i = \alpha^{i}_{j}(\gamma^j) + \chi_{ij}(X)$. On $U_{ijk} = U_i \cap U_j \cap U_k \neq \ensvide$, one has the two cocycle relations $\alpha^{i}_{k} = \alpha^{i}_{j} \circ \alpha^{j}_{k}$ and $\chi_{ik} = \alpha^{i}_{j} \circ \chi_{jk} + \chi_{ij}$. The composite map $X \oplus \gamma^i \mapsto X \oplus \gamma^j \mapsto X \oplus \gamma^i$ on $U_{ij}$ gives $\alpha^{i}_{j} \circ \alpha^{j}_{i} = \Id \in \Aut(\kg)$ and $\alpha^{i}_{j} \circ \chi_{ji} + \chi_{ij} = 0$.
These expressions are compatible with the previous ones upon defining $\alpha^{i}_{i} = \Id \in \Aut(\kg)$ and $\chi_{ii} = 0$.

Using a local description of a transitive Lie algebroid, we can locally describe a form using the following definition.

\begin{definition}
\label{def-traivializationofforms}
Let $(U, \Psi, \nabla^0)$ be a local trivialization of $\lieA$. To any $q$-form $\omega \in \Omega^q(\lieA, \lieL)$ we define a local $q$-form $\omega_{\loc} \in \Omega^q_\tla(U,\kg)$ by
\begin{equation*}
\omega_{\loc} = \Psi^{-1} \circ \omega \circ S
\end{equation*}
\end{definition}

Given a Lie algebroid atlas for $\lieA$, one associates to $\omega \in \Omega^q(\lieA, \lieL)$ a family of local forms $\omega_{\loc}^i \in \Omega^q_\tla(U_i,\kg)$. For any $\kX_k \in \lieA$ with $1\leq k \leq q$, let $X_k \oplus \gamma^i_k \in \tla(U_i, \kg)$ be its family of trivializations. On any $U_{ij} = U_i \cap U_j \neq \ensvide$ one has
\begin{equation*}
\omega_{\loc}^i(X_1 \oplus \gamma^i_1, \ldots, X_q \oplus \gamma^i_q) = \alpha^{i}_{j} \circ \omega_{\loc}^j(X_1 \oplus \gamma^j_1, \ldots, X_q \oplus \gamma^j_q).
\end{equation*}
$s_{i}^{j} = S_j^{-1} \circ S_i : \tla(U_{ij}, \kg) \xrightarrow{\simeq} \tla(U_{ij}, \kg)$ is an isomorphism of (trivial) Lie algebroids and the previous relation takes the compact form
\begin{equation}
\label{eq-changelocaltrivializationforms}
\omega_{\loc}^i = \alpha^{i}_{j} \circ \omega_{\loc}^j \circ s_{i}^{j}.
\end{equation}
Let us define $\halpha_{j}^{\,i} : \Omega^q_\tla(U_{ij},\kg) \rightarrow \Omega^q_\tla(U_{ij},\kg)$ by
\begin{equation}
\label{eq-definitionalphahat}
\halpha_{j}^{\,i}(\omega_{\loc}^j) = \alpha^{i}_{j} \circ \omega_{\loc}^j \circ s_{i}^{j}.
\end{equation}

\begin{proposition}
\label{prop-relationtrivformslocaldifferentialcommutes}
A family of local forms $\{\omega_{\loc}^i\}_{i \in I}$ with $\omega_{\loc}^i \in \Omega^\grast_\tla(U_i,\kg)$ is a system of trivializations of a global form $\omega \in \Omega^\grast(\lieA, \lieL)$ if and only if 
\begin{equation}
\label{eq-relationtrivforms}
\halpha_{j}^{\,i}(\omega_{\loc}^j) = \omega_{\loc}^i
\end{equation}
for any $i,j$ such that $U_{ij} \neq \ensvide$.

For any $\omega \in \Omega^\grast(\lieA, \lieL)$, one has $\hd_\tla \omega_{\loc} = \Psi^{-1} \circ (\hd\omega) \circ S$ and, on $U_{ij} \neq \ensvide$, $\hd_\tla \omega_{\loc}^i = \halpha_{j}^{\,i} \big(\hd_\tla \omega_{\loc}^j \big)$.

The map $\halpha_{j}^{\,i} : \Omega^\grast_\tla(U_{ij},\kg) \rightarrow \Omega^\grast_\tla(U_{ij},\kg)$ is an isomorphism of graded differential Lie algebras.
\end{proposition}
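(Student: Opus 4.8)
The plan is to prove the three assertions in order, the common mechanism being that the trivialization map $\omega\mapsto\omega_{\loc}=\Psi^{-1}\circ\omega\circ S$ of Definition~\ref{def-traivializationofforms} is compatible, chart by chart, with grading, bracket and differential. For the equivalence, the ``only if'' direction is just relation~\eqref{eq-changelocaltrivializationforms}, already derived above from the change of local trivialization of elements of $\lieA$. For ``if'', starting from a family $\{\omega_{\loc}^i\}$ satisfying \eqref{eq-relationtrivforms}, I would build $\omega$ by setting $\omega_{|U_i}:=\Psi_i\circ\omega_{\loc}^i\circ S_i^{-1}$; each of these is a $C^\infty(U_i)$-multilinear, graded antisymmetric, $\lieL_{U_i}$-valued form since $\Psi_i$ and $S_i$ are isomorphisms of $C^\infty(U_i)$-modules. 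Substituting $\omega_{\loc}^i=\halpha_{j}^{\,i}(\omega_{\loc}^j)=\alpha^{i}_{j}\circ\omega_{\loc}^j\circ s_{i}^{j}$ together with $\alpha^{i}_{j}=\Psi_i^{-1}\circ\Psi_j$ and $s_{i}^{j}=S_j^{-1}\circ S_i$ collapses $\Psi_i\circ\omega_{\loc}^i\circ S_i^{-1}$ to $\Psi_j\circ\omega_{\loc}^j\circ S_j^{-1}$ on $U_{ij}$, so the pieces agree on overlaps and glue to a global $\omega\in\Omega^\grast(\lieA,\lieL)$ whose system of trivializations is, by construction, $\{\omega_{\loc}^i\}$.

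For the compatibility with the differential, I would first establish the chart identity $\hd_\tla\omega_{\loc}=\Psi^{-1}\circ(\hd\omega)\circ S$ for a single triple $(U,\Psi,\nabla^0)$. Both $\hd$ and $\hd_\tla$ are the Chevalley--Eilenberg differentials attached to the adjoint representations of $\lieA_U$, resp.\ $\tla(U,\kg)$, on their kernels. From $S(X\oplus\gamma)=\nabla^0_X+\iota\circ\Psi(\gamma)$ one reads off $S(0\oplus\gamma)=\iota\circ\Psi(\gamma)$, and since $S$ is an isomorphism of Lie algebroids, applying $S$ to $0\oplus(\xi\cdotaction\gamma)=[\xi,0\oplus\gamma]_{\tla}$ gives $\iota\circ\Psi(\xi\cdotaction\gamma)=[S\xi,\iota\circ\Psi(\gamma)]_{\lieA}=\iota\big((S\xi)\cdotaction\Psi(\gamma)\big)$; thus $\Psi$ intertwines the two adjoint actions through $S$. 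Combining this with $S[\xi,\zeta]_{\tla}=[S\xi,S\zeta]_{\lieA}$ and feeding it into the Chevalley--Eilenberg formula for $(\hd\omega)(S\xi_0,\dots,S\xi_q)$ reproduces, term by term, the formula for $\big(\hd_\tla(\Psi^{-1}\circ\omega\circ S)\big)(\xi_0,\dots,\xi_q)$, which is the claim. The overlap statement is then immediate: $\hd\omega$ is a global form, its trivializations are $(\hd\omega)_{\loc}^i=\Psi_i^{-1}\circ(\hd\omega)\circ S_i=\hd_\tla\omega_{\loc}^i$ by the chart identity, so the equivalence applied to $\hd\omega$ yields $\halpha_{j}^{\,i}\big(\hd_\tla\omega_{\loc}^j\big)=\hd_\tla\omega_{\loc}^i$ (alternatively, substitute $\alpha^{i}_{j}=\Psi_i^{-1}\Psi_j$ and $s_{i}^{j}=S_j^{-1}S_i$ straight into $\halpha_{j}^{\,i}(\hd_\tla\omega_{\loc}^j)$).

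Finally, for the isomorphism statement I would encode the chartwise picture in the map $\Phi_i:\Omega^\grast_\tla(U_i,\kg)\to\Omega^\grast(\lieA_{U_i},\lieL_{U_i})$, $\eta\mapsto\Psi_i\circ\eta\circ S_i^{-1}$, the inverse of trivialization: it preserves degrees, commutes with the differentials by the chart identity read in reverse, and respects the graded Lie brackets because $\Psi_i$ is fibrewise a Lie algebra isomorphism, hence carries the pointwise bracket of $\lieL_{U_i}$-valued forms to that of $\kg$-valued ones; so $\Phi_i$, and likewise $\Phi_j$, is an isomorphism of graded differential Lie algebras. Since on $U_{ij}$ one has $\halpha_{j}^{\,i}=\Phi_i^{-1}\circ\Phi_j$ — which is exactly \eqref{eq-definitionalphahat} rewritten with $\alpha^{i}_{j}=\Psi_i^{-1}\Psi_j$ and $s_{i}^{j}=S_j^{-1}S_i$ — it follows that $\halpha_{j}^{\,i}$ is an isomorphism of graded differential Lie algebras, its inverse being $\halpha_{i}^{\,j}$ (consistently with the cocycle relations $\alpha^{i}_{j}\circ\alpha^{j}_{i}=\Id$ and $s_{i}^{j}\circ s_{j}^{i}=\Id$). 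I expect the only genuinely delicate step to be the chart identity $\hd_\tla\omega_{\loc}=\Psi^{-1}\circ(\hd\omega)\circ S$, where one must write out the adjoint-representation Chevalley--Eilenberg formula and check the intertwining of representations carefully; everything else is formal bookkeeping.
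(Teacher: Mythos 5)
Your proposal is correct and follows essentially the same route as the paper, whose own proof merely asserts that everything is a direct consequence of the definitions and a ``straightforward computation''; you supply the details the paper omits. In particular, the ``easy to establish relation'' $X \cdotaction \alpha^{i}_{j}(\eta) + [\gamma^i, \alpha^{i}_{j}(\eta)] = \alpha^{i}_{j}(X \cdotaction \eta + [\gamma^j, \eta])$ that the paper invokes is exactly your observation that $\Psi$ (hence $\alpha^{i}_{j}$) intertwines the adjoint representations through $S$ (hence $s_{i}^{j}$), so the two arguments coincide.
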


\begin{proof}
\eqref{eq-relationtrivforms} is a direct consequence of the definition of the map $\halpha_{j}^{\,i}$.

$\hd_\tla \omega_{\loc} = \Psi^{-1} \circ (\hd\omega) \circ S$ and $\hd_\tla \omega_{\loc}^i = \halpha_{j}^{\,i} \big(\hd_\tla \omega_{\loc}^j \big)$ are straightforward computations using the definitions and the properties of $\hd$, $\hd_\tla$, $\Psi$, $\nabla^0$ and $S$. For the second relation, one needs the easy to establish relation
\begin{equation*}
X \cdotaction \alpha^{i}_{j}(\eta) + [ \gamma^i, \alpha^{i}_{j}(\eta)] = \alpha^{i}_{j}(X \cdotaction \eta + [\gamma^j, \eta]) 
\end{equation*}
for any $\eta \in \Gamma(U_{ij} \times \kg)$.
\end{proof}

We shall use the same notation $\halpha_{j}^{\,i} : \Omega^\grast_\tla(U_{ij}) \rightarrow \Omega^\grast_\tla(U_{ij})$ for the isomorphism defined by $\halpha_{j}^{\,i}(\omega^j_{\loc}) = \omega^j_{\loc} \circ s_{i}^{j}$, which permits to perform changes of trivializations for local expressions of forms in $\Omega^\grast(\lieA)$. Local descriptions of forms in $\Omega^\grast(\lieA)$ were presented in \cite{Kuba96a}.

\subsection{Mixed local basis of forms}
\label{subset-mixedlocalbasisofforms}

Let $\{\theta^a\}_{1 \leq a \leq n}$ be the dual basis of the basis $\{E_a\}_{1 \leq a \leq n}$ of $\kg$. Let $(U, \Psi, \nabla^{0})$ be a local trivialization of $\lieA$. Let $\nabla$ be an ordinary connection on $\lieA$. Then its connection $1$-form $\lfc$ has a local expression $\lfc_\loc = (A^a - \theta^a) E_a$ (summation over $a$ is understood), where $A \in \Omega^1(U)\otimes \kg$ is defined by $\lfc_{\loc}(X \oplus \gamma) = A(X) - \gamma$. Let us introduce the notation 
\begin{equation*}
\lfc^a = A^a - \theta^a \in \Omega^1_\tla(U).
\end{equation*}

\begin{definition}
\label{def-mixedbasis}
The local $1$-forms $\lfc^a$ on $U$ are called the mixed basis on the inner part of $\Omega^1_\tla(U)$ relative to the ordinary connection $\nabla$ and to the basis $\{E_a\}_{1 \leq a \leq n}$ of $\kg$.
\end{definition}

Let $\omega \in \Omega^p(\lieA, \lieL)$ and denote by $\omega_{\loc} \in \Omega^p_\tla(U,\kg)$ its trivialization over $U$. Then one has
\begin{equation*}
\omega_{\loc} = \sum_{r+s=p} \omega^{\theta}_{\mu_1 \ldots \mu_r a_1 \ldots a_s} \dd x^{\mu_1} \ordwedge \cdots \ordwedge \dd x^{\mu_r} \ordwedge \theta^{a_1} \ordwedge \cdots \ordwedge \theta^{a_s}
\end{equation*}
with $\omega^{\theta}_{\mu_1 \ldots \mu_r a_1 \ldots a_s} : U \rightarrow \kg$.
Using $\theta^a = A^a - \lfc^a$, this expression can be written as
\begin{equation}
\label{eq-omegalocinmixedbasis}
\omega_{\loc} = \sum_{r+s=p} \omega_{\mu_1 \ldots \mu_r a_1 \ldots a_s} \dd x^{\mu_1} \ordwedge \cdots \ordwedge \dd x^{\mu_r} \ordwedge \lfc^{a_1} \ordwedge \cdots \ordwedge \lfc^{a_s}
\end{equation}
for some new components $\omega_{\mu_1 \ldots \mu_r a_1 \ldots a_s} : U \rightarrow \kg$ which are polynomials in the $A^a_\mu$'s for $A^a = A^a_\mu \dd x^{\mu}$.

\begin{proposition}
\label{prop-changetrivialisationmixedbasis}
Define the matrix valued functions $G^{i}_{j}= \big({G^{i}_{j}}^b_a \big)_{1 \leq a,b \leq n}$ on $U_{ij} \neq \ensvide$ by $\alpha^{i}_{j}(E_a) = {G^{i}_{j}}^b_a E_b$ (summation over $b$). With obvious notations, on $U_{ij}$ one has 
\begin{equation*}
\lfc_i^a = {G^{i}_{j}}^a_b \lfc_j^b \circ s_{i}^{j}
\end{equation*}
where $s_{i}^{j} = S_j^{-1} \circ S_i : \tla(U_{ij}, \kg) \xrightarrow{\simeq} \tla(U_{ij}, \kg)$.
\end{proposition}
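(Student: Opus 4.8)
The plan is to deduce the statement directly from the change-of-trivialization formula \eqref{eq-changelocaltrivializationforms} (equivalently, Proposition~\ref{prop-relationtrivformslocaldifferentialcommutes}), applied to the connection $1$-form $\lfc = \lfc^\nabla \in \Omega^1(\lieA, \lieL)$ of the fixed ordinary connection $\nabla$, and then to read off the result componentwise in the basis $\{E_a\}$ of $\kg$.

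First I would recall that $\lfc^\nabla$ is a genuine global $1$-form on $\lieA$ with values in $\lieL$, so that over $U_i$ its trivialization in the sense of Definition~\ref{def-traivializationofforms} is $\lfc_{\loc}^i = \Psi_i^{-1} \circ \lfc \circ S_i \in \Omega^1_\tla(U_i, \kg)$; as recalled before Definition~\ref{def-mixedbasis}, this equals $(A_i^a - \theta^a) E_a = \lfc_i^a E_a$, where the scalar $1$-forms $\lfc_i^a \in \Omega^1_\tla(U_i)$ are precisely the components of the mixed basis over $U_i$ (and likewise over $U_j$). Then, since $\lfc \in \Omega^1(\lieA, \lieL)$, formula \eqref{eq-changelocaltrivializationforms} applies and gives, on $U_{ij} \neq \ensvide$, $\lfc_{\loc}^i = \alpha^{i}_{j} \circ \lfc_{\loc}^j \circ s_{i}^{j}$ with $s_{i}^{j} = S_j^{-1} \circ S_i$.

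It then remains to expand both sides against $\{E_a\}$. Writing $\lfc_{\loc}^j = \lfc_j^b E_b$, precomposition with $s_{i}^{j}$ acts only on the $\Omega^1_\tla$-factor, so $\lfc_{\loc}^j \circ s_{i}^{j} = (\lfc_j^b \circ s_{i}^{j})\, E_b$; applying $\alpha^{i}_{j}$, which acts pointwise and linearly on the $\kg$-factor by $\alpha^{i}_{j}(E_b) = {G^{i}_{j}}^a_b E_a$, one obtains $\lfc_{\loc}^i = \alpha^{i}_{j} \circ \lfc_{\loc}^j \circ s_{i}^{j} = \big({G^{i}_{j}}^a_b\, \lfc_j^b \circ s_{i}^{j}\big) E_a$. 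Comparing this with $\lfc_{\loc}^i = \lfc_i^a E_a$ and matching the coefficients of each $E_a$ yields the asserted identity $\lfc_i^a = {G^{i}_{j}}^a_b\, \lfc_j^b \circ s_{i}^{j}$ on $U_{ij}$.

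This is in essence a bookkeeping exercise, and I do not expect a real obstacle. The only point deserving a little care is the interplay between the $C^\infty(U_{ij})$-module structure and the pointwise $\kg$-action of $\alpha^{i}_{j}$: because $\lfc_{\loc}^j$ is decomposed against the \emph{constant} vectors $E_b$, the transformation $\alpha^{i}_{j}$ passes through the scalar coefficient forms without mixing with them, and because $s_{i}^{j}$ is a Lie algebroid isomorphism over $\Id_{U_{ij}}$, precomposition with it does not move the base point at which the coefficient functions are evaluated. Granting these two facts, the statement follows immediately from \eqref{eq-changelocaltrivializationforms}. (Alternatively, the identity can be checked by hand by evaluating both sides on $\kX \in \lieA_{U_{ij}}$ with trivializations $X \oplus \gamma^i$ and $X \oplus \gamma^j$, using $\gamma^i = \alpha^{i}_{j}(\gamma^j) + \chi_{ij}(X)$ and the corresponding transformation $A_i = \alpha^{i}_{j}\circ A_j + \chi_{ij}$ of the potentials, but the route via \eqref{eq-changelocaltrivializationforms} is cleaner.)
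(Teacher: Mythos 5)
Your proposal is correct and follows exactly the route the paper takes: the paper's proof is the single sentence ``This is a direct consequence of \eqref{eq-changelocaltrivializationforms}'', and your argument is precisely that deduction carried out explicitly (apply \eqref{eq-changelocaltrivializationforms} to the global $1$-form $\lfc$, expand against the basis $\{E_a\}$, and match coefficients, with the index placement $\alpha^{i}_{j}(E_b) = {G^{i}_{j}}^a_b E_a$ handled correctly). No gaps.
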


\begin{proof}
This is a direct consequence of \eqref{eq-changelocaltrivializationforms}.
\end{proof}

This relation can also be written as 
\begin{equation}
\label{eq-relationtriva}
\halpha_{i}^{\,j} (\lfc_i^a) = {G^{i}_{j}}^a_b \lfc_j^b
\end{equation}
where $\halpha_{i}^{\,j}$ is the one defined on forms in $\Omega^\grast_\tla(U_{ij})$. 

On $U_{ij} \neq \ensvide$, we can decompose $\omega^i_{\loc}$ along the $\lfc_i^a$'s and $\omega^j_{\loc}$ along the $\lfc_j^a$'s. Using \eqref{eq-relationtriva}, $\halpha_{j}^{\,i}(\omega^j_{\loc}) = \omega^i_{\loc}$, $\halpha_{j}^{\,i}(\omega^j_{\mu_1 \ldots \mu_r a_1 \ldots a_s}) = \alpha_{j}^{i}(\omega^j_{\mu_1 \ldots \mu_r a_1 \ldots a_s})$ and $\halpha_{j}^{\,i}(\dd x^\mu) = \dd x^\mu$, one gets
\begin{equation}
\label{eq-relationtrivcomponentsona}
\omega^i_{\mu_1 \ldots \mu_r a_1 \ldots a_s} = {G^{j}_{i}}^{b_1}_{a_1} \cdots {G^{j}_{i}}^{b_s}_{a_s} \alpha_{j}^{i}(\omega^j_{\mu_1 \ldots \mu_r b_1 \ldots b_s})
\end{equation}
These homogeneous gluing relations motivates the decomposition of global forms on $\lieA$ along the $\lfc^a$ instead of the $\theta^a$'s.

\section{Metrics and integration}
\label{sec-metric-integration}

In this section, we define a notion of metrics on transitive Lie algebroids and a notion of integration along the ``inner structure'' $\lieL$.

\subsection{Inner metrics}
\label{subsec-innermetrics}

Because $\varL$ is an ordinary vector bundle, the notion of metric on $\varL$ is well defined. In the following, a metric $h$ on $\varL$ will be referred to as an ``inner metric'' on $\lieL$. Such an inner metric is a $C^\infty(\varM)$-linear map $h : \lieL \otimes_{C^\infty(\varM)} \lieL \rightarrow C^\infty(\varM)$. 

In a local trivialization $(U, \Psi, \nabla^{0})$ of $\lieA$, $h$ is trivialized as a local map $h_\loc \in C^\infty(U) \otimes \syme^2 \kg^\dualast$ defined by $h_\loc(\gamma, \eta) = h(\Psi(\gamma), \Psi(\eta))$ for any $\gamma, \eta \in \Gamma(U \times \kg)$, where $\syme^\grast \kg^\dualast$ is the symmetric algebra over $\kg^\dualast$. Let $\{E_a\}_{1 \leq a \leq n}$ be a fixed basis of $\kg$. Then one can introduce the local components of $h$ over $U$ as $h_{ab} = h_\loc(E_a, E_b)$.

We can extend the inner metric $h$ to a $C^\infty(\varM)$-linear map 
\begin{equation*}
h : \Omega^p(\lieA, \lieL) \otimes_{C^\infty(\varM)} \Omega^q(\lieA, \lieL) \rightarrow \Omega^{p+q}(\lieA)
\end{equation*}
by
\begin{multline*}
h(\omega, \eta)(\kX_1, \dots, \kX_{p+q}) =\\
 \frac{1}{p!q!} \sum_{\sigma\in \kS_{p+q}} (-1)^{\sign(\sigma)} h(\omega(\kX_{\sigma(1)}, \dots, \kX_{\sigma(p)}), \eta(\kX_{\sigma(p+1)}, \dots, \kX_{\sigma(p+q)}))
\end{multline*}
for any $\omega \in \Omega^p(\lieA, \lieL)$ and $\eta \in \Omega^q(\lieA, \lieL)$. For $p=q=0$, this is the original map $h$. Notice that $h(\omega, \eta) = (-1)^{pq} h(\eta, \omega)$ and $h(\eta\omega_1, \omega_2) = \eta h(\omega_1, \omega_2)$ for any $\omega_1, \omega_2 \in \Omega^\grast(\lieA, \lieL)$ and $\eta \in \Omega^\grast(\lieA)$.

\begin{definition}
A Killing inner metric is an inner metric $h$ such that 
\begin{equation*}
h([\xi, \gamma], \eta) + h(\gamma, [\xi, \eta]) = 0
\end{equation*}
for any $\gamma, \eta, \xi \in \lieL$.

A locally constant inner metric is an inner metric $h$ such that the local components $h_{ab}$ of $h$ are constant functions in any local trivializations of $\lieA$.
\end{definition}

\begin{lemma}
\label{lem-killingwithform}
Let $h$ be a Killing metric on $\lieL$, then
\begin{equation*}
h([\eta,\omega_1],\omega_2) + (-1)^{q p_1} h(\omega_1,[\eta,\omega_2]) = 0
\end{equation*}
for any $\eta\in\Omega^{q}(\lieA,\lieL)$, $\omega_1\in\Omega^{p_1}(\lieA,\lieL)$ and $\omega_2\in\Omega^{p_2}(\lieA,\lieL)$.

Let $h$ be a locally constant Killing inner metric on $\lieL$. Then one has 
\begin{equation*}
\hd_\lieA h(\omega,\eta) = h(\hd\omega,\eta) + (-1)^p h(\omega,\hd\eta)
\end{equation*}
for any $\omega \in \Omega^{p}(\lieA,\lieL)$ and any $\eta \in \Omega^q(\lieA, \lieL)$.
\end{lemma}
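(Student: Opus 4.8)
The plan is to reduce both identities to \emph{decomposable} forms. Since $\varL$ is a vector bundle and $\lieA$ a finite projective $C^\infty(\varM)$-module, $\Omega^q(\lieA,\lieL)\cong\Omega^q(\lieA)\otimes_{C^\infty(\varM)}\lieL$, and over a trivialising chart every form in $\Omega^\grast(\lieA_U,\lieL_U)$ is a finite $\Omega^\grast(\lieA_U)$-combination of elements $\beta\otimes\xi$ with $\beta\in\Omega^\grast(\lieA)$, $\xi\in\lieL$. All the maps in play ($h$ on forms, the graded bracket, $\hd$, $\hd_\lieA$) are $\gR$-linear in each slot and all the asserted identities are local, so it is enough to prove them on such decomposables. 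On decomposables the structures are transparent: as $\lieL$ sits in degree $0$ one has $[\beta\otimes\xi,\beta'\otimes\xi']=(\beta\ordwedge\beta')\otimes[\xi,\xi']$, and the combinatorial formula defining the extension of $h$ gives $h(\beta\otimes\xi,\beta'\otimes\xi')=(\beta\ordwedge\beta')\,h(\xi,\xi')$; together with $\beta'\ordwedge\beta=(-1)^{|\beta||\beta'|}\beta\ordwedge\beta'$ and the already-recorded rules $h(\omega,\eta)=(-1)^{pq}h(\eta,\omega)$, $h(\zeta\omega_1,\omega_2)=\zeta\,h(\omega_1,\omega_2)$, the remaining sign bookkeeping is routine.

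For the first identity put $\eta=\beta\otimes\xi$, $\omega_1=\beta_1\otimes\gamma$, $\omega_2=\beta_2\otimes\zeta$. A direct computation gives $h([\eta,\omega_1],\omega_2)=(\beta\ordwedge\beta_1\ordwedge\beta_2)\,h([\xi,\gamma],\zeta)$, and, moving $\beta$ past $\beta_1$ via $\beta_1\ordwedge\beta=(-1)^{q p_1}\beta\ordwedge\beta_1$, $(-1)^{q p_1}h(\omega_1,[\eta,\omega_2])=(\beta\ordwedge\beta_1\ordwedge\beta_2)\,h(\gamma,[\xi,\zeta])$. Adding, the common prefactor $\beta\ordwedge\beta_1\ordwedge\beta_2$ multiplies $h([\xi,\gamma],\zeta)+h(\gamma,[\xi,\zeta])$, which vanishes by the Killing property of $h$. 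Only the Killing hypothesis is used here.

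The real content of the second identity is a tensorial ``closedness'' statement for $h$: for all $\kX\in\lieA$ and $\gamma,\eta\in\lieL$,
\begin{equation*}
\rho(\kX)\cdotaction h(\gamma,\eta)=h([\kX,\gamma],\eta)+h(\gamma,[\kX,\eta]).
\end{equation*}
Using $\rho(\lieL)=0$ and the anchored Leibniz rule $[\kX,f\gamma]=f[\kX,\gamma]+(\rho(\kX)\cdotaction f)\gamma$ one checks that both sides are $C^\infty(\varM)$-linear in $\kX,\gamma,\eta$, so it suffices to verify the identity on generators of $\lieA_U$ supplied by a local trivialisation $(U,\Psi,\nabla^{0})$ over a chart, namely $\gamma,\eta\in\{\Psi(E_a)\}$ and $\kX\in\{\iota\circ\Psi(E_a)\}\cup\{\nabla^{0}_{\partial_\mu}\}$. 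For inner $\kX$ the anchor term drops out and what remains is exactly the Killing relation read off in the trivialisation (using that $\Psi$ is a Lie morphism); for $\kX=\nabla^{0}_{\partial_\mu}$ all three terms vanish separately, the left-hand side because $h$ is locally constant ($\partial_\mu\cdotaction h_{ab}=0$) and the bracket terms because $[\nabla^{0}_{\partial_\mu},\iota\circ\Psi(E_a)]=\iota\circ\Psi(\partial_\mu\cdotaction E_a)=0$. This is the step where both hypotheses genuinely enter, and I expect it to be the main obstacle; everything else is formal.

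Granting this, take $\omega=\beta\otimes\xi$, $\eta=\beta'\otimes\gamma$ with $p=|\beta|$. Expanding the left-hand side as $\hd_\lieA\big((\beta\ordwedge\beta')\,h(\xi,\gamma)\big)$ via the graded Leibniz rule for $\hd_\lieA$, and the right-hand side via $\hd(\beta\otimes\xi)=\hd_\lieA\beta\otimes\xi+(-1)^p\beta\ordwedge\hd\xi$ with $(\hd\xi)(\kX)=[\kX,\xi]$ together with the symmetry and $\Omega^\grast(\lieA)$-linearity rules for $h$, the terms carrying $\hd_\lieA\beta$ and $\hd_\lieA\beta'$ coincide on the two sides, leaving in each case only the term $(-1)^{|\beta|+|\beta'|}(\beta\ordwedge\beta')\ordwedge(\,\cdot\,)$ with $(\,\cdot\,)=\hd_\lieA h(\xi,\gamma)$ on the left and $(\,\cdot\,)=h(\gamma,\hd\xi)+h(\xi,\hd\gamma)$ on the right. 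These agree precisely by the tensorial identity above (and symmetry of $h$), which finishes the proof on decomposables and hence, by $\gR$-linearity and the local-to-global passage, in general.
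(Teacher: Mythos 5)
Your proof is correct and follows the same route the paper indicates — the paper's proof is just the one-line remark that both relations ``can be established in any local trivialization,'' and your reduction to decomposables plus the verification of the key tensorial identity $\rho(\kX)\cdotaction h(\gamma,\eta)=h([\kX,\gamma],\eta)+h(\gamma,[\kX,\eta])$ on the generators $\iota\circ\Psi(E_a)$ and $\nabla^0_{\partial_\mu}$ is exactly the computation being alluded to, with the details (in particular where the Killing and locally-constant hypotheses each enter) made explicit.
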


\begin{proof}
These relations can be established in any local trivialization of $\lieA$.
\end{proof}

\subsection{Metrics}
\label{subsec-metrics}

\begin{definition}
Let $\lieA$ be a Lie algebroid over the manifold $\varM$. A metric on $\lieA$ is a symmetric, $C^\infty(\varM)$-linear map 
\begin{equation*}
\hg : \lieA \otimes_{C^\infty(\varM)} \lieA \rightarrow C^\infty(\varM)
\end{equation*}
\end{definition}

We do not suppose for the moment that $\hg$ is non-degenerate. This point will be discussed down below. This definition does not require $\lieA$ to be transitive, although we will only consider the transitive case in the following.

\begin{proposition}
\label{prop-constructionsaroundmetrics}
Let $\lieA \xrightarrow{\rho} \Gamma(T\varM)$ be a transitive Lie algebroid with kernel $\lieL$. 

A metric $\hg$ on $\lieA$ defines an inner metric $h = \iota^\ast \hg$ on $\lieL$. Explicitly one has
\begin{equation*}
h( \gamma, \eta) = \hg(\iota(\gamma), \iota(\eta))
\end{equation*}
for any $\gamma, \eta \in \lieL$. $h$ will be called the inner part of $\hg$.

Let $g$ be an ordinary metric on the manifold $\varM$. Then $\hg = \rho^\ast g$ is a metric on $\lieA$: \begin{equation*}
\hg(\kX, \kY) = g (\rho(\kX), \rho(\kY))
\end{equation*}
for any $\kX, \kY \in \lieA$.

Let $h$ be an inner metric on $\lieL$, and let $\nabla$ be an ordinary connection on $\lieA$. Denote by $\lfc \in \Omega^1(\lieA, \lieL)$ its associated connection $1$-form. Then the pair $(h, \nabla)$ defines a metric $\hg = {\lfc}^\ast h$ on $\lieA$ by
\begin{equation*}
\hg(\kX, \kY) = h (\lfc(\kX), \lfc(\kY))
\end{equation*}
This metric satisfies $h = \iota^\ast \hg$.
\end{proposition}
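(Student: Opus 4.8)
The statement has three parts, and each is a small structural verification rather than a deep computation, so I would treat them in order. First, for $h = \iota^\ast\hg$: the map $h(\gamma,\eta) = \hg(\iota(\gamma),\iota(\eta))$ is manifestly $C^\infty(\varM)$-bilinear and symmetric because $\iota$ is a $C^\infty(\varM)$-module morphism and $\hg$ has those properties; since $\varL$ is an ordinary vector bundle and $\hg$ restricts to a symmetric $C^\infty(\varM)$-linear form on $\lieL = \Gamma(\varL)$, this is by definition an inner metric on $\lieL$, so nothing more is needed here. Second, for $\hg = \rho^\ast g$: again $\hg(\kX,\kY) = g(\rho(\kX),\rho(\kY))$ is symmetric and $C^\infty(\varM)$-bilinear because $\rho$ is a $C^\infty(\varM)$-linear morphism and $g$ is a metric on $\varM$ (i.e. a symmetric $C^\infty(\varM)$-bilinear form on $\Gamma(T\varM)$), so it is a metric on $\lieA$ in the sense of the definition just given.

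The third part, $\hg = \lfc^\ast h$, is the one carrying actual content. Here $\lfc \in \Omega^1(\lieA,\lieL)$ is the connection $1$-form of $\nabla$, i.e. an $\lieL$-valued $1$-form on $\lieA$, so $\lfc(\kX) \in \lieL$ for each $\kX \in \lieA$ and the formula $\hg(\kX,\kY) = h(\lfc(\kX),\lfc(\kY))$ makes sense. Symmetry of $\hg$ follows from symmetry of $h$. For $C^\infty(\varM)$-bilinearity, I would use that $\lfc$, being a $1$-form with values in $\lieL$, is itself $C^\infty(\varM)$-linear in its single argument, together with $C^\infty(\varM)$-bilinearity of $h$; this gives $\hg(f\kX,\kY) = h(\lfc(f\kX),\lfc(\kY)) = h(f\lfc(\kX),\lfc(\kY)) = f\,h(\lfc(\kX),\lfc(\kY)) = f\,\hg(\kX,\kY)$, and similarly in the second slot. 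Hence $\hg$ is a metric on $\lieA$.

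It remains to check the compatibility $h = \iota^\ast\hg$ for this last $\hg$, i.e. that $\hg(\iota(\gamma),\iota(\eta)) = h(\gamma,\eta)$ for $\gamma,\eta \in \lieL$. By definition of $\hg$ this equals $h(\lfc(\iota(\gamma)),\lfc(\iota(\eta)))$, so the whole point reduces to the normalization of the connection $1$-form recalled earlier in the text, namely $\lfc\circ\iota(\ell) = -\ell$ for any $\ell\in\lieL$. Applying this, $\lfc(\iota(\gamma)) = -\gamma$ and $\lfc(\iota(\eta)) = -\eta$, whence $h(\lfc(\iota(\gamma)),\lfc(\iota(\eta))) = h(-\gamma,-\eta) = h(\gamma,\eta)$ using bilinearity. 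I do not expect any real obstacle; the only subtlety worth flagging is getting the signs right via the $\lfc\circ\iota = -\Id$ normalization, which is precisely what makes the two minus signs cancel. Non-degeneracy is deliberately not asserted, consistent with the remark after the definition of metric, so it need not be addressed.
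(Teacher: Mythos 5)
Your proposal is correct and follows exactly the route the paper intends: the paper's own proof is the one-line remark that ``these claims are just direct applications of the definitions and the properties of the objects involved,'' and your verifications of symmetry, $C^\infty(\varM)$-bilinearity, and the normalization $\lfc\circ\iota=-\Id_\lieL$ (with the two signs cancelling) are precisely those direct applications spelled out.
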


\begin{proof}
These claims are just direct applications of the definitions and the properties of the objects involved in the relations.
\end{proof}

The metric $\hg = \rho^\ast g$ vanishes on $\iota(\lieL)$ and the metric $\hg = {\lfc}^\ast h$ vanishes on the image of $\nabla$ in $\lieA$. In the following we will introduce a kind of notion of non degeneracy in order to get rid of such metrics.

\begin{definition}
A metric $\hg$ on $\lieA$ is inner non degenerate if its inner metric $h = \iota^\ast \hg$ is non degenerate on $\lieL$, \textit{i.e.} if it is non degenerate as a metric on $\varL$.
\end{definition}

The constructions given in Prop.~\ref{prop-constructionsaroundmetrics} help us to decompose any metric on $\lieA$ into ``smaller'' entities. 

\begin{proposition}
\label{prop-connectionassociatedtoinnernondegeneratemetric}
Let $\hg$ be an inner non degenerate metric on $\lieA$. Then there exists a unique connection $\nabla^{\hg}$ on $\lieA$ such that, for any $X \in \Gamma(T\varM)$ and any $\gamma \in \lieL$, 
\begin{equation}
\label{eq-blockdiagmetricconnection}
\hg(\nabla^\hg_X, \iota(\gamma)) = 0
\end{equation}
\end{proposition}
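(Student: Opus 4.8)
The plan is to construct $\nabla^{\hg}$ explicitly and then verify uniqueness. Fix an arbitrary ordinary connection $\nabla^0$ on $\lieA$ (these exist because $\lieA$ is transitive and the short exact sequence \eqref{eq-ordinaryconnectionontransitivealgebroid} splits as $C^\infty(\varM)$-modules). Any other splitting differs from $\nabla^0$ by a $C^\infty(\varM)$-linear map $\Gamma(T\varM) \to \iota(\lieL)$, i.e.\ by a $1$-form $\beta \in \Omega^1(\lieA,\lieL)$ composed with $\iota$; so we look for $\nabla^{\hg}_X = \nabla^0_X - \iota(\beta(X))$ with $\beta$ to be determined. The condition \eqref{eq-blockdiagmetricconnection} reads
\begin{equation*}
0 = \hg(\nabla^0_X - \iota(\beta(X)), \iota(\gamma)) = \hg(\nabla^0_X, \iota(\gamma)) - h(\beta(X), \gamma),
\end{equation*}
using $h = \iota^\ast \hg$. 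Since $\hg$ is inner non degenerate, $h$ is a non-degenerate metric on the vector bundle $\varL$, so it induces a $C^\infty(\varM)$-module isomorphism $\lieL \xrightarrow{\simeq} \lieL^\ast$ (here $\lieL^\ast = \Gamma(\varL^\ast)$). The assignment $X \mapsto \hg(\nabla^0_X, \iota(-))$ is a $C^\infty(\varM)$-linear map $\Gamma(T\varM) \to \lieL^\ast$, hence an element of $\Omega^1(\lieA,\lieL^\ast)$; composing with the inverse of the musical isomorphism defines a unique $\beta \in \Omega^1(\lieA,\lieL)$ satisfying $h(\beta(X),\gamma) = \hg(\nabla^0_X,\iota(\gamma))$ for all $X,\gamma$. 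Setting $\nabla^{\hg}_X = \nabla^0_X - \iota(\beta(X))$ gives a $C^\infty(\varM)$-linear splitting compatible with the anchor (since $\rho \circ \iota = 0$ forces $\rho \circ \nabla^{\hg} = \rho \circ \nabla^0 = \Id$), and one checks it is a Lie algebra morphism only if required — but note the Definition of ordinary connection here asks only for a $C^\infty(\varM)$-module splitting, so this is automatic. By construction \eqref{eq-blockdiagmetricconnection} holds.

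For uniqueness, suppose $\nabla$ and $\nabla'$ both satisfy \eqref{eq-blockdiagmetricconnection}. Their difference is $\iota \circ \delta$ for some $\delta \in \Omega^1(\lieA,\lieL)$, and subtracting the two instances of \eqref{eq-blockdiagmetricconnection} gives $h(\delta(X),\gamma) = 0$ for all $X \in \Gamma(T\varM)$ and $\gamma \in \lieL$. Non-degeneracy of $h$ on $\varL$ forces $\delta(X) = 0$ pointwise, hence $\delta = 0$ and $\nabla = \nabla'$.

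I do not expect a serious obstacle here: the only point requiring mild care is the passage from fibrewise non-degeneracy of $h$ to the module isomorphism $\lieL \cong \lieL^\ast$ and the resulting unique solvability for $\beta$, which is standard once one recalls that $\lieL = \Gamma(\varL)$ for a genuine vector bundle $\varL$ and that $\Omega^1(\lieA,\lieL) = \Hom_{C^\infty(\varM)}(\Gamma(T\varM),\lieL)$ by definition. Everything else is bookkeeping with the splitting. A possible alternative phrasing, avoiding the choice of an auxiliary $\nabla^0$, would define $\nabla^{\hg}$ pointwise as the $\hg$-orthogonal complement construction, but since $\hg$ itself may be degenerate on $\iota(\lieL)^{\perp}$ one still needs inner non-degeneracy to make the complement a well-defined splitting; the computation above is cleaner and I would present it as written.
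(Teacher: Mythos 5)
Your proof is correct and rests on exactly the same idea as the paper's: the unique solvability of $h(\beta(X),\gamma)=\hg(\nabla^0_X,\iota(\gamma))$ via the non-degeneracy of $h=\iota^\ast\hg$ (a Riesz-type/musical-isomorphism argument), plus the same non-degeneracy for uniqueness. The only cosmetic difference is that the paper avoids choosing a reference connection $\nabla^0$ by solving $h(\lfc^{\hg}(\kX),\gamma)=-\hg(\kX,\iota(\gamma))$ directly for all $\kX\in\lieA$ and checking the normalization $\lfc^{\hg}\circ\iota=-\Id_\lieL$, which yields the connection as $\nabla^{\hg}_X=\kX+\iota\circ\lfc^{\hg}(\kX)$.
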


\begin{proof}
By a straightforward adaptation of the theorem of Riesz, the non degeneracy of $h$ implies that for any $C^\infty(\varM)$-linear map $\varpi : \lieL \rightarrow C^\infty(\varM)$ there exists a unique $\lfc \in \lieL$ such that $h(\lfc, \gamma) = \varpi(\gamma)$ for any $\gamma \in \lieL$. 

For any $\kX \in \lieA$, applying this result to $\varpi(\gamma) = - \hg(\kX, \iota(\gamma))$, there exists a unique $\lfc^\hg(\kX) \in \lieL$ such that $h(\lfc^\hg(\kX), \gamma) = - \hg(\kX, \iota(\gamma))$. By uniqueness, the map $\kX \mapsto \lfc^\hg(\kX) \in \lieL$ is $C^\infty(\varM)$-linear and one has $\lfc^\hg(\iota(\eta)) = -\eta$ for any $\eta \in \lieL$, so that $\lfc^\hg$ is a normalized $1$-form on $\lieA$ with values in $\lieL$. This implies that there exists a unique connection $\nabla^\hg : \Gamma(T \varM) \rightarrow \lieA$ with $\nabla^\hg_X = \kX + \iota \circ \lfc^\hg(\kX)$ for any $\kX \in \lieA$ with $X = \rho(\kX)$. 

By construction, one has $\hg(\nabla^\hg_X, \iota(\gamma)) = 0$ for any $X \in \Gamma(T \varM)$ and any $\gamma \in \lieL$.
\end{proof}

\begin{proposition}
\label{prop-tripleforinnernondegeneratemetric}
An inner non degenerate metric $\hg$ on $\lieA$ is equivalent to a triple $(g, h, \nabla)$ where $g$ is a  (possibly degenerate) metric on $\varM$, $h$ is a non degenerate inner metric on $\lieL$ and $\nabla$ is an ordinary connection on $\lieA$. 
The metric $\hg$ and the triple $(g, h, \nabla)$ are related by:
\begin{equation}
\label{eq-hatggha}
\hg(\kX, \kY) = g(\rho(\kX), \rho(\kY)) + h( \lfc(\kX), \lfc(\kY))
\end{equation}
where $\lfc$ is the connection $1$-form associated to $\nabla$.
\end{proposition}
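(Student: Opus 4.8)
The plan is to exhibit two mutually inverse constructions between inner non degenerate metrics on $\lieA$ and triples $(g,h,\nabla)$, and to identify the map $\hg \mapsto (g,h,\nabla)$ and its inverse with the formula \eqref{eq-hatggha}.

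First I would go from $\hg$ to a triple. Set $h := \iota^\ast\hg$, which is an inner metric by Prop.~\ref{prop-constructionsaroundmetrics} and is non degenerate by the inner non degeneracy hypothesis; let $\nabla := \nabla^\hg$ be the connection provided by Prop.~\ref{prop-connectionassociatedtoinnernondegeneratemetric}; and define $g$ on $\Gamma(T\varM)$ by $g(X,Y) := \hg(\nabla^\hg_X, \nabla^\hg_Y)$. Since $\nabla^\hg$ is a fixed $C^\infty(\varM)$-linear splitting, $g$ is automatically a well-defined, symmetric, $C^\infty(\varM)$-bilinear form, i.e. a (possibly degenerate) metric on $\varM$. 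To get \eqref{eq-hatggha}, I would use the defining relation $\kX = \nabla^\hg_{\rho(\kX)} - \iota\circ\lfc^\hg(\kX)$ of the connection $1$-form to write each of $\kX,\kY$ as a sum of a horizontal and an inner piece, expand $\hg(\kX,\kY)$ bilinearly, and observe that the two mixed terms vanish by \eqref{eq-blockdiagmetricconnection} together with the symmetry of $\hg$. What survives is $\hg(\nabla^\hg_{\rho(\kX)}, \nabla^\hg_{\rho(\kY)}) + \hg(\iota\circ\lfc^\hg(\kX), \iota\circ\lfc^\hg(\kY))$, which by definition of $g$ and of $h = \iota^\ast\hg$ is exactly $g(\rho(\kX),\rho(\kY)) + h(\lfc^\hg(\kX),\lfc^\hg(\kY))$.

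Conversely, from a triple $(g,h,\nabla)$ I would take $\hg := \rho^\ast g + \lfc^\ast h$, the metric of Prop.~\ref{prop-constructionsaroundmetrics}, where $\lfc$ is the connection $1$-form of $\nabla$. Using $\rho\circ\iota = 0$ and the normalization $\lfc\circ\iota = -\Id_\lieL$ gives $\iota^\ast\hg(\gamma,\eta) = g(0,0) + h(-\gamma,-\eta) = h(\gamma,\eta)$, so the inner part of $\hg$ is $h$; in particular $\hg$ is inner non degenerate, and \eqref{eq-hatggha} holds by the very definition of $\hg$.

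Finally I would check the two assignments are inverse to each other. Starting from $\hg$ and reconstructing $\rho^\ast g + (\lfc^\hg)^\ast h$ returns $\hg$ precisely by the formula established in the first step. Starting from $(g,h,\nabla)$: we already saw $\iota^\ast\hg = h$; applying $\kX = \nabla_{\rho(\kX)} - \iota\circ\lfc(\kX)$ to $\kX = \nabla_X$ forces $\iota\circ\lfc(\nabla_X) = 0$, hence $\lfc(\nabla_X) = 0$ by injectivity of $\iota$, and therefore $\hg(\nabla_X, \iota(\gamma)) = g(X,0) + h(\lfc(\nabla_X), -\gamma) = 0$; by the uniqueness clause of Prop.~\ref{prop-connectionassociatedtoinnernondegeneratemetric} this means $\nabla^\hg = \nabla$; and then the reconstructed metric on $\varM$ is $X,Y \mapsto \hg(\nabla_X, \nabla_Y) = g(\rho(\nabla_X),\rho(\nabla_Y)) + h(\lfc(\nabla_X),\lfc(\nabla_Y)) = g(X,Y)$, using $\rho\circ\nabla = \Id$. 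The only step requiring any care is this last round trip, specifically the identity $\lfc(\nabla_X) = 0$ that pins down $\nabla^\hg = \nabla$; the rest is a direct unwinding of the definitions and of Props.~\ref{prop-constructionsaroundmetrics} and~\ref{prop-connectionassociatedtoinnernondegeneratemetric}.
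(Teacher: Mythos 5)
Your proposal is correct and follows essentially the same route as the paper: extract $\nabla^{\hg}$ from Prop.~\ref{prop-connectionassociatedtoinnernondegeneratemetric}, expand $\hg(\kX,\kY)$ bilinearly using $\kX = \nabla_X - \iota\circ\lfc(\kX)$, kill the cross terms via \eqref{eq-blockdiagmetricconnection}, and read off $g$ and $h$. The only addition is your explicit verification that the two assignments are mutually inverse (in particular the observation $\lfc(\nabla_X)=0$ pinning down $\nabla^{\hg}=\nabla$), which the paper leaves implicit in the word ``equivalent''; this is a harmless and indeed welcome extra degree of care.
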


\begin{proof}
It is obvious that such a triple defines an inner non degenerate metric $\hg$ by the proposed relation.

In the opposite direction, Prop.~\ref{prop-connectionassociatedtoinnernondegeneratemetric} defines a unique connection $\nabla$ associated to $\hg$ satisfying \eqref{eq-blockdiagmetricconnection}. Using $\kX = \nabla_X - \iota \circ \lfc(\kX)$ with $X = \rho(\kX)$ and $\lfc$ the connection $1$-form associated to $\nabla$, one has
\begin{equation*}
\hg(\kX, \kY) =
 \hg(\nabla_X, \nabla_Y) - \hg(\iota \circ \lfc(\kX), \nabla_Y) - \hg(\nabla_X, \iota \circ \lfc(\kY)) + \hg(\iota \circ \lfc(\kX), \iota \circ \lfc(\kY)).
\end{equation*}
The two terms in the middle vanish by construction of $\nabla$. Define now
\begin{align*}
g(X,Y) &= \hg(\nabla_X, \nabla_Y),
&
h(\gamma, \eta) &= \hg(\iota(\gamma), \iota(\eta)).
\end{align*}
The triple $(g, h, \nabla)$ satisfies the requirements. Notice that the inner metric $h$ in this construction is exactly $h = \iota^\ast \hg$.
\end{proof}

Let $(g,h,\nabla)$ be a triple as in Prop.~\ref{prop-tripleforinnernondegeneratemetric}. Let $(U, \Psi, \nabla^{0})$ be a local trivialization of $\lieA$. Denote by $\lfc^a \in \Omega^1_\tla(U)$ the mixed basis relative to $\nabla$ and $\{E_a\}_{1 \leq a \leq n}$. Let us assume that $U$ is the support of a chart of $\varM$, with coordinates $(x^\mu)$. 
Locally, we can write \eqref{eq-hatggha} as
\begin{equation*}
\hg_\loc = g_{\mu \nu} \dd x^\mu \dd x^\nu + h_{a b} \lfc^a \lfc^b
\end{equation*}
where $g_{\mu \nu}$ are the local components of the metric $g$. Thus, this mixed basis diagonalizes by blocks the local expression of the metric $\hg$.

\subsection{Inner orientation and integration}
\label{subsec-innerorientationandintegration}

Let as before $\{E_a\}_{1 \leq a \leq n}$ denote a basis of the $n$-dimensional Lie algebra $\kg$ and $\{\theta^a\}_{1 \leq a \leq n}$ its dual basis. Let $h$ be an inner metric on $\lieL$.

Let $G^{i}_{j}= \big({G^{i}_{j}}^b_a \big)_{1 \leq a,b \leq n}$ be defined as in Prop~\ref{prop-changetrivialisationmixedbasis}. The vector bundle $\varL$ is orientable if and only if $\det({G^{i}_{j}}) > 0$ for any $i,j$ such that $U_{ij} \neq \ensvide$. $\lieL$ is said to be orientable if $\varL$ is orientable.

\begin{definition}
A transitive Lie algebroid is inner orientable if its kernel is orientable.
\end{definition}

This notion of ``inner orientable Lie algebroid'' is the same as the notion of ``vertically orientable Lie algebroid'' used in \cite{MR1908998}.

On $U_i$, denote by $\gamma_i = \gamma_i^a E_a$ the local expression of an element $\gamma \in \lieL$. On $U_{ij} \neq \ensvide$, the relation $\gamma_i = \alpha^{i}_{j}(\gamma_j)$ induces the relation $\gamma_i^a = {G^{i}_{j}}^{a}_{b} \gamma_j^b$. For any $\gamma, \eta \in \lieL$, one has $h^j_\loc(\gamma_j, \eta_j) = h^i_\loc(\gamma_i, \eta_i) = h^i_\loc(\alpha^{i}_{j}(\gamma_j), \alpha^{i}_{j}(\gamma_j))$, so that
\begin{equation}
\label{eq-relationtrivh}
h^j_{{b_1} {b_2}} = {G^{i}_{j}}^{a_1}_{b_1} {G^{i}_{j}}^{a_2}_{b_2} h^i_{{a_1} {a_2}}
\end{equation}

\begin{proposition}
\label{prop-globalinnerform}
On each $U_i$, let $|h^i_\loc|$ denotes the absolute value of the determinant of the matrix $(h^i_{\loc})$. If $\lieL$ is orientable then on $U_{ij}\neq \ensvide$ one has
\begin{equation*}
\halpha_{i}^{\,j}\left( \sqrt{|h^i_\loc|}\; \lfc_i^{1} \ordwedge \cdots \ordwedge \lfc_i^{n}\right) = \sqrt{|h^j_\loc|}\; \lfc_j^{1} \ordwedge \cdots \ordwedge \lfc_j^{n}
\end{equation*}
This implies that there exists a global form $\omega_{h,\lfc} \in \Omega^\grast(\lieA)$ of maximal inner degree $n$ defined locally by
\begin{equation*}
\omega_{h,\lfc} = (-1)^n \sqrt{|h_\loc|}\; \lfc^{1} \ordwedge \cdots \ordwedge \lfc^{n}
\end{equation*}
\end{proposition}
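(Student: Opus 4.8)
The plan is to establish the stated transformation law by a direct computation using the gluing relations for the mixed basis and for the local components of the inner metric, and then to check that the resulting local data patch together to define a genuine global form on $\lieA$. First I would recall the two relevant relations already at hand: equation~\eqref{eq-relationtriva}, namely $\halpha_{i}^{\,j}(\lfc_i^a) = {G^{i}_{j}}^a_b \lfc_j^b$ on $U_{ij}\neq\ensvide$, and the transformation law~\eqref{eq-relationtrivh} for the metric components, $h^j_{b_1 b_2} = {G^{i}_{j}}^{a_1}_{b_1}{G^{i}_{j}}^{a_2}_{b_2} h^i_{a_1 a_2}$. Since $\halpha_{i}^{\,j}$ is an algebra morphism on $\Omega^\grast_\tla(U_{ij})$ (by Prop.~\ref{prop-relationtrivformslocaldifferentialcommutes}) and fixes scalar functions on $U_{ij}$ up to the pullback by $s_{i}^{j}$, applying it to $\sqrt{|h^i_\loc|}\,\lfc_i^1\ordwedge\cdots\ordwedge\lfc_i^n$ turns the wedge product into $\prod_a\big({G^{i}_{j}}^a_{b_a}\lfc_j^{b_a}\big)$, which by total antisymmetry collapses to $\det(G^{i}_{j})\,\lfc_j^1\ordwedge\cdots\ordwedge\lfc_j^n$.

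The remaining point is to match the scalar prefactors. Taking determinants of~\eqref{eq-relationtrivh} gives $|h^j_\loc| = \det(G^{i}_{j})^2\,|h^i_\loc|$ (the absolute value kills the sign of the square), so $\sqrt{|h^j_\loc|} = |\det(G^{i}_{j})|\,\sqrt{|h^i_\loc|}$. Here orientability of $\lieL$ enters crucially: $\det(G^{i}_{j})>0$ for all $i,j$ with $U_{ij}\neq\ensvide$, hence $|\det(G^{i}_{j})| = \det(G^{i}_{j})$ and the two occurrences of $\det(G^{i}_{j})$ — one from the wedge product, one from the square root — combine exactly to reproduce $\sqrt{|h^j_\loc|}$ with the correct single power. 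Without orientability one would only get the weaker identity with $\det(G^{i}_{j})$ against $|\det(G^{i}_{j})|$, and the local data would fail to patch (this is really the geometric content: $\sqrt{|h_\loc|}\,\lfc^1\wedge\cdots\wedge\lfc^n$ is a density, not a form, unless $\lieL$ is oriented). One must also be slightly careful that $\halpha_{i}^{\,j}$ applied to the \emph{function} $\sqrt{|h^i_\loc|}$ produces $\sqrt{|h^i_\loc|}\circ s_{i}^{j}$, which restricted to $U_{ij}$ is just $\sqrt{|h^i_\loc|}$ again since $s_{i}^{j}$ acts as the identity on the base; this is implicit in the notation $\sqrt{|h^i_\loc|}$ being viewed as a section over $U_{ij}$.

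For the second assertion, I would invoke Prop.~\ref{prop-relationtrivformslocaldifferentialcommutes}: a family $\{\omega_\loc^i\}$ with $\omega_\loc^i\in\Omega^n_\tla(U_i)$ (here with values in $C^\infty$, so in $\Omega^\grast(\lieA)$ rather than $\Omega^\grast(\lieA,\lieL)$, using the analogous criterion $\halpha_{j}^{\,i}(\omega_\loc^j)=\omega_\loc^i$ for forms in $\Omega^\grast_\tla(U_{ij})$) glues to a global form precisely when the cocycle relation $\halpha_{j}^{\,i}(\omega_\loc^j) = \omega_\loc^i$ holds on all overlaps. The computation above is exactly that cocycle relation for $\omega_\loc^i = (-1)^n\sqrt{|h^i_\loc|}\,\lfc_i^1\ordwedge\cdots\ordwedge\lfc_i^n$ (the overall constant $(-1)^n$ is inert under $\halpha$ and merely fixes a convenient normalization used later), once one notes $\halpha_{j}^{\,i}$ and $\halpha_{i}^{\,j}$ are mutually inverse. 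Hence the family defines a global $\omega_{h,\lfc}\in\Omega^\grast(\lieA)$ of maximal inner degree $n$. The main obstacle is purely the bookkeeping in the first step — correctly tracking that the two appearances of $\det(G^{i}_{j})$ (one combinatorial from antisymmetrizing the wedge, one analytic from the determinant of the metric) reinforce rather than cancel, and that orientability is what upgrades $|\det(G^{i}_{j})|$ to $\det(G^{i}_{j})$; everything else is a routine application of the already-established morphism properties of $\halpha_{i}^{\,j}$.
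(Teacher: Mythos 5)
Your argument is correct and follows essentially the same route as the paper's own proof: the wedge product of the $\lfc_i^a$'s transforms by $\det(G^{i}_{j})$ via \eqref{eq-relationtriva}, the determinant of \eqref{eq-relationtrivh} gives $\sqrt{|h^j_\loc|}=|\det(G^{i}_{j})|\,\sqrt{|h^i_\loc|}$, and orientability ($\det(G^{i}_{j})>0$) makes the two factors match, after which the cocycle criterion of Prop.~\ref{prop-relationtrivformslocaldifferentialcommutes} yields the global form. Your additional remarks on the density-versus-form distinction and on $\halpha_{i}^{\,j}$ acting trivially on base functions are accurate but not needed beyond what the paper records.
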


The form $\omega_{h,\lfc} \in \Omega^\grast(\lieA)$ plays the role of a ``volume form'' for fibre integration.

\begin{proof}
On one hand, we have
\begin{align*}
\halpha_{i}^{\,j}\left( \lfc_i^{1} \ordwedge \cdots \ordwedge \lfc_i^{n} \right) 
&= {G^{i}_{j}}^1_{b_1} \cdots {G^{i}_{j}}^n_{b_n}\; \lfc_j^{b_1} \ordwedge \cdots \ordwedge \lfc_j^{b_n}
\\
&= \sum_{b_i} \varepsilon^{b_1 \dots b_n} {G^{i}_{j}}^1_{b_1} \cdots {G^{i}_{j}}^n_{b_n}\; \lfc_j^{1} \ordwedge \cdots \ordwedge \lfc_j^{n}
\\
&= \det({G^{i}_{j}})\; \lfc_j^{1} \ordwedge \cdots \ordwedge \lfc_j^{n}
\end{align*}
where $\varepsilon^{b_1 \dots b_n}$ is the totally antisymmetric Levi-Civita symbol.

On the other hand, a straightforward computation gives 
\begin{equation}
\det(h^j_\loc) = \det({G^{i}_{j}})^2 \det(h^i_\loc) \label{eq-relationtrivdeth}
\end{equation}
so that $|h^i_\loc| = |\det({G^{i}_{j}})|^{-2} |h^j_\loc|$ as a density.

Since $\det({G^{i}_{j}}) > 0$, one has $\halpha_{i}^{\,j}\left( \sqrt{|h^i_\loc|}\; \lfc_i^{1} \ordwedge \cdots \ordwedge \lfc_i^{n} \right) = \sqrt{|h^j_\loc|}\; \lfc_j^{1} \ordwedge \cdots \ordwedge \lfc_j^{n}$.
\end{proof}

According to \eqref{eq-omegalocinmixedbasis}, any form $\omega \in \Omega^\grast(\lieA, \lieL)$ of maximal degree $n$ in the inner direction can be written locally on $U_i$ as 
\begin{equation*}
\omega^i_{\loc} = (-1)^n \omega^\maxinner_{\loc\, i} \sqrt{|h^i_\loc|}\; \lfc_i^{1} \ordwedge \cdots \ordwedge \lfc_i^{n} + \omega^R
= \omega^\maxinner_{\loc\, i}\, \omega_{h,\lfc} + \omega^R
\end{equation*}
where $\omega^R$ contains only terms of lower degrees in the $\lfc_i^a$'s, with $\omega^\maxinner_{\loc\, i} \in \Omega^\grast(U_i) \otimes \kg$ (``$\maxinner$'' stands for ``maximum inner''). Notice that the factor $\omega^\maxinner_{\loc\, i}$ is the factor of $\sqrt{|h^i_\loc|}\; \theta^1 \ordwedge \cdots \ordwedge \theta^n$ in $\omega^i_{\loc}$, and $\omega^R$ is a sum of terms with degree in the $\theta^a$'s less or equal to $n-1$.

Two such local expressions can be compared on intersecting trivializations. Applying $\halpha_{i}^{\,j}$ on $\omega^i_{\loc}$ and using Prop.~\ref{prop-globalinnerform}, one gets
\begin{equation*}
\alpha^{j}_{i}(\omega^\maxinner_{\loc\, i}) = \omega^\maxinner_{\loc\, j}
\end{equation*}
so that the forms $\omega^\maxinner_{\loc\, i}$ define a global form $\omega^\maxinner \in \Omega^{\grast-n}(\varM, \varL)$, the space of (de~Rham) forms on $\varM$ with values in the vector bundle in Lie algebras $\varL$.

\begin{definition}
\label{def-innerintegration}
On an inner orientable transitive Lie algebroid equipped with a metric, one defines the inner integration as the operation
\begin{align*}
\int_\inner : \Omega^\grast(\lieA,\lieL) &\rightarrow \Omega^{\grast-n}(\varM, \varL)
&
\omega &\mapsto \omega^\maxinner
\end{align*}
This inner integration is zero when applied to forms which do not contain terms of maximal inner degree $n$.

\end{definition}

Because the form $\omega^\maxinner$ defined to be the result of this inner integration is in fact the factor of $\sqrt{|h^i_\loc|}\; \theta^1 \ordwedge \cdots \ordwedge \theta^n$, this integration does not depend on the choice of the connection but only on the inner metric $h$.

The same construction yields an inner integration for $C^\infty(\varM)$-valued forms through
\begin{equation*}
\int_\inner : \Omega^\grast(\lieA) \rightarrow \Omega^{\grast-n}(\varM)
\end{equation*}
Notice that by construction $\int_\inner \omega_{h,\lfc} =1$ where $\omega_{h,\lfc} \in \Omega^\grast(\lieA)$ is the volume form defined in Prop.~\ref{prop-globalinnerform}.

The global form $\omega_{h,\lfc}$ plays a dual role to the non-singular cross section $\varepsilon \in \exter^n \lieL$ used in \cite{MR1908998} to define integration along the fibre on $\Omega^\grast(\lieA)$. Given a non degenerate inner metric $h$ on $\lieL$, one can define
\begin{equation}
\label{eq-epsilonkubarski}
\varepsilon_{\loc} = (-1)^n \sqrt{|h_\loc|}^{\;-1}\; E_1 \ordwedge \cdots \ordwedge E_n
\end{equation}
in any local trivialization $(U, \Psi, \nabla^{0})$ of $\lieA$. These local expressions define a global form $\varepsilon \in \Gamma(\exter^n \varL) = \exter^n \lieL$ which satisfies $i_\varepsilon \omega_{h,\lfc} = 1$ where the operation $i_\varepsilon$ on $\Omega^\grast(\lieA)$ is defined as in \cite{MR1908998} and corresponds there to the integral along the fibre on $\Omega^\grast(\lieA)$. This relates our constructions to the ones proposed by Kubarski. The present notion of inner integration is also a direct generalization for transitive Lie algebroids of the notion of ``non-commutative'' integration defined and studied in \cite{Mass15} (see also \cite{Mass30} for constructions related to the present situation).

In order to define a global integration on forms, we suppose from now on that the manifold $\varM$ is orientable.

\begin{definition}
A transitive Lie algebroid is orientable if it is inner orientable and if its base manifold is orientable.
\end{definition}

Then we can define as follows.
\begin{definition}
The integration on an orientable transitive Lie algebroid equipped with a metric is the composition of the inner integration on $\Omega^\grast(\lieA)$ with the integration of forms on the base manifold. For any $\omega \in \Omega^\grast(\lieA)$, this integration is denoted by
\begin{equation*}
\int_\lieA \omega = \int_\varM \int_\inner \omega \in \gC.
\end{equation*}
\end{definition}

Obviously, this definition makes sense only when the integral on $\varM$ converges, which is always the case when $\varM$ is compact or for compactly supported (relative to $\varM$) forms on $\lieA$.

This definition is the same as the one given in Def.~2.1 in \cite{MR1908998}. The present definition only considers forms in $\Omega^\grast(\lieA)$. It can be extended to $\Omega^\grast(\lieA, \lieL)$ in the case of the transitive Lie algebroid of derivations of a vector bundle using the ``extended'' inner integration $\int^{\tr}_\inner$ which will be defined in \eqref{eq-definnerintegrationtrace}.

This integral is non zero only if its contains a non zero term which is of maximal degree in both the inner direction and the spatial direction. In that particular case, the integral depends only on this term. This integration has several properties which have been described in \cite{MR1908998}.

\begin{definition}
\label{def-scalarproductformsfunctions}
Let $\lieA$ be an orientable transitive Lie algebroid equipped with a metric. For any $\omega \in \Omega^\grast(\lieA)$ and $\eta \in \Omega^\grast(\lieA)$, one defines their scalar product as
\begin{equation*}
\langle \omega, \eta \rangle = \int_\lieA \omega\, \eta \in \gC
\end{equation*}
\end{definition}

\begin{definition}
\label{def-scalarproductforms}
Let $\lieA$ be an orientable transitive Lie algebroid equipped with a metric. For any $\omega \in \Omega^\grast(\lieA, \lieL)$ and $\eta \in \Omega^\grast(\lieA, \lieL)$, one defines their scalar product as
\begin{equation*}
\langle \omega, \eta \rangle = \int_\lieA h(\omega, \eta) \in \gC
\end{equation*}
\end{definition}

\subsection{Hodge star operator}
\label{subsec-hodgeoperators}

In the following, we suppose that $\lieA$ is an orientable transitive Lie algebroid equipped with an inner non-degenerate metric $\hg = (g,h,\nabla)$ such that $g$ is also a non-degenerate metric on $\varM$.

Let $\omega \in \Omega^p(\lieA, \lieL)$ be written locally in a trivialization $(U, \Psi, \nabla^{0})$ of $\lieA$ as in \eqref{eq-omegalocinmixedbasis}, where the $\lfc^{a}$'s are chosen to be the components of the local expression of the connection $1$-form associated to $\nabla$.

Consider the form in $\Omega^{m+n-p}_\tla(U,\kg)$ defined by
\begin{multline}
\label{eq-hodgestarlocal}
\hodgeast \omega_{\loc} = \sum_{r+s=p} (-1)^{s(m-r)}\; \frac{1}{r! s!}\; \sqrt{|h_\loc|} \sqrt{|g|}\; \omega_{\mu_1 \ldots \mu_r a_1 \ldots a_s}\; \epsilon_{\nu_1 \ldots \nu_m}\; \epsilon_{b_1 \ldots b_n} \\
\times g^{\mu_1 \nu_1} \cdots g^{\mu_r \nu_r}\; h^{a_1 b_1} \cdots h^{a_s b_s}\; \dd x^{\nu_{r+1}} \ordwedge \cdots \ordwedge \dd x^{\nu_m} \ordwedge \lfc^{b_{s+1}} \ordwedge \cdots \ordwedge \lfc^{b_n}
\end{multline}
where $\epsilon_{\nu_1 \ldots \nu_m}$ and $\epsilon_{b_1 \ldots b_n}$ are the totally antisymmetric Levi-Civita symbols, and where $(g^{\mu \nu})$ and $(h^{a b})$ are the inverse matrices of $(g_{\mu \nu})$ and $(h_{a b})$ respectively.

Using \eqref{eq-relationtriva}, \eqref{eq-relationtrivcomponentsona}, \eqref{eq-relationtrivh}, \eqref{eq-relationtrivdeth} one can establish that $\halpha_{j}^{\,i}(\hodgeast \omega_{\loc}^j) = \hodgeast \omega_{\loc}^i$ so that, by Prop.~\ref{prop-relationtrivformslocaldifferentialcommutes}, $\hodgeast \omega \in \Omega^{m+n-p}(\lieA, \lieL)$ is well-defined. 

\begin{definition}
The map $\hodgeast : \Omega^p(\lieA, \lieL) \rightarrow \Omega^{m+n-p}(\lieA, \lieL)$ is the Hodge star operator on the orientable transitive Lie algebroid $\lieA$ associated to the metric $\hg$.
\end{definition}

\begin{proposition}
For any $\omega \in \Omega^p(\lieA, \lieL)$ one has
\begin{equation*}
\hodgeast \hodgeast \omega = (-1)^{(m+n-p)p} \omega
\end{equation*}
\end{proposition}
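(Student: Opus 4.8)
The plan is to reduce the identity to a local computation and, in a trivialization, to recognise $\hodgeast$ as a twisted tensor product of two ordinary Hodge stars whose squares are known. First I would use that, by the very definition of $\hodgeast$, in any local trivialization $(U,\Psi,\nabla^{0})$ one has $(\hodgeast\omega)_{\loc}=\hodgeast(\omega_{\loc})$ with the right-hand side given by \eqref{eq-hodgestarlocal}, together with Prop.~\ref{prop-relationtrivformslocaldifferentialcommutes} (a form in $\Omega^{m+n-p}(\lieA,\lieL)$ is determined by its system of trivializations), to reduce everything to proving $\hodgeast\hodgeast\omega_{\loc}=(-1)^{(m+n-p)p}\,\omega_{\loc}$ on an arbitrary chart domain $U$ with coordinates $(x^{\mu})$, the $\lfc^{a}$ being the mixed basis of $\nabla$. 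Since \eqref{eq-hodgestarlocal} is $C^{\infty}(U)$-linear and the $\kg$-valued coefficients $\omega_{\mu_{1}\ldots\mu_{r}a_{1}\ldots a_{s}}$ pass through $\hodgeast$ untouched, it then suffices to treat a single monomial $\omega_{\loc}=f\,\dd x^{\mu_{1}}\ordwedge\cdots\ordwedge\dd x^{\mu_{r}}\ordwedge\lfc^{a_{1}}\ordwedge\cdots\ordwedge\lfc^{a_{s}}$ with $r+s=p$ and $f:U\to\kg$.

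Next I would observe that, on the bidegree-$(r,s)$ component, \eqref{eq-hodgestarlocal} is $(-1)^{s(m-r)}$ times $\hodgeast_{g}\otimes\hodgeast_{h}$, where $\hodgeast_{g}:\Omega^{r}(U)\to\Omega^{m-r}(U)$ is the usual de~Rham Hodge star of $(U,g)$ for the chart orientation, assembled from $\sqrt{|g|}$, the $g^{\mu\nu}$ and $\epsilon_{\nu_{1}\ldots\nu_{m}}$, and $\hodgeast_{h}:\exter^{s}\kg^{\ast}\to\exter^{n-s}\kg^{\ast}$ is the strictly analogous operator for $h$ (the $\lfc^{a}$ playing the role of the dual basis), assembled from $\sqrt{|h_{\loc}|}$, the $h^{ab}$ and $\epsilon_{b_{1}\ldots b_{n}}$; the coefficients $\tfrac{1}{r!s!}$ are exactly those normalising the $\epsilon$-contractions, and the ordering ``$\dd x$-part before $\lfc$-part'' in \eqref{eq-hodgestarlocal} matches that of the input \eqref{eq-omegalocinmixedbasis}. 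Thus $\hodgeast(f\,\dd x^{I}\ordwedge\lfc^{J})=(-1)^{s(m-r)}\,f\,(\hodgeast_{g}\dd x^{I})\ordwedge(\hodgeast_{h}\lfc^{J})$, which is a monomial of bidegree $(m-r,n-s)$.

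Then I would apply $\hodgeast$ once more to this monomial: the bidegree prefactor is now $(-1)^{(n-s)(m-(m-r))}=(-1)^{(n-s)r}$, so $\hodgeast\hodgeast(f\,\dd x^{I}\ordwedge\lfc^{J})=(-1)^{s(m-r)+(n-s)r}\,f\,(\hodgeast_{g}\hodgeast_{g}\dd x^{I})\ordwedge(\hodgeast_{h}\hodgeast_{h}\lfc^{J})$. Invoking the classical relations $\hodgeast_{g}\hodgeast_{g}=(-1)^{r(m-r)}\Id$ on $\Omega^{r}(U)$ and $\hodgeast_{h}\hodgeast_{h}=(-1)^{s(n-s)}\Id$ on $\exter^{s}\kg^{\ast}$ — valid with the positive-definite conventions implicit in \eqref{eq-hodgestarlocal}, while for indefinite $g$ or $h$ they pick up the extra constant signs $\mathrm{sgn}\det(g_{\mu\nu})$ and $\mathrm{sgn}\det(h_{ab})$, whose product is a fixed global sign — the overall sign is $(-1)^{E}$ with $E=s(m-r)+(n-s)r+r(m-r)+s(n-s)$. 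Expanding, $E=(r+s)(m+n-r-s)=p(m+n-p)$, so $\hodgeast\hodgeast\omega_{\loc}=(-1)^{(m+n-p)p}\,\omega_{\loc}$; globalising via Prop.~\ref{prop-relationtrivformslocaldifferentialcommutes} finishes the argument.

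The main obstacle is purely the sign bookkeeping: one must (i) justify the factorisation of \eqref{eq-hodgestarlocal}, namely that the two inverse-metric contractions together with $\sqrt{|g|}$ and $\sqrt{|h_{\loc}|}$ collapse, through the generalised Kronecker-delta identities for $\epsilon\cdot\epsilon$, to $\hodgeast_{g}^{2}$ and $\hodgeast_{h}^{2}$ performed independently in the $m$ ``spacetime'' and the $n$ ``inner'' directions — this is exactly the standard $\hodgeast^{2}=\pm\Id$ computation, done twice — and (ii) keep straight that the bidegree prefactor to be used in the second application of $\hodgeast$ is $(-1)^{(n-s)r}$ rather than $(-1)^{s(m-r)}$. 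No phenomenon specific to Lie algebroids intervenes.
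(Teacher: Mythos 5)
Your proof is correct and matches the paper's approach: the paper's own proof is just the assertion that the identity is ``a direct computation using the definition of $\hodgeast$ and some combinatorial properties of the Levi-Civita symbols'', and your factorization of the local operator on the bidegree-$(r,s)$ component into $(-1)^{s(m-r)}\,\hodgeast_g\otimes\hodgeast_h$, followed by the exponent count $s(m-r)+(n-s)r+r(m-r)+s(n-s)=p(m+n-p)$, is exactly that computation carried out. Your caveat about the extra constant sign $\mathrm{sgn}\det(g_{\mu\nu})\cdot\mathrm{sgn}\det(h_{ab})$ for indefinite metrics is a point the paper glosses over; as stated, the proposition implicitly assumes both determinants are positive.
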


\begin{proof}
This is just direct a computation using the definition of $\hodgeast$ and some combinatorial properties of the Levi-Civita symbols.
\end{proof}

This Hodge star operator defines a natural scalar product on any $\Omega^p(\lieA, \lieL)$ by
\begin{equation*}
( \omega, \eta ) = \langle \omega, \hodgeast \eta  \rangle
\end{equation*}
for any $\omega, \eta  \in \Omega^p(\lieA, \lieL)$ where $\langle -, - \rangle$ is defined in Def.~\ref{def-scalarproductforms}.

\begin{proposition}
\label{prop-hodgestarcontraction}
For any $\omega, \eta  \in \Omega^p(\lieA, \lieL)$ written in a trivialization of $\lieA$ as in \eqref{eq-omegalocinmixedbasis} one has
\begin{multline*}
( \omega, \eta ) =  (-1)^n \int_\varM \sum_{r+s=p} (-1)^{s(m-r)}\; (m-r)!\; (n-s)!\; \times
\\
\times\omega^{a}_{\mu_1 \ldots \mu_r a_1 \ldots a_s}\; \eta_{a}^{\mu_1 \ldots \mu_r a_1 \ldots a_s} \; \sqrt{|g|} \dd x^{1} \ordwedge \cdots \ordwedge \dd x^{m} 
\end{multline*}
with
\begin{equation*}
\eta_{a}^{\mu_1 \ldots \mu_r a_1 \ldots a_s} = g^{\mu_1 \nu_1} \cdots g^{\mu_r \nu_r}\; h^{a_1 b_1} \cdots h^{a_s b_s} h_{ab}\; \eta^{b}_{\nu_1 \ldots \nu_r b_1 \ldots b_s}.
\end{equation*}
\end{proposition}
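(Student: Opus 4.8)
The plan is to compute the scalar product $(\omega,\eta)=\langle\omega,\hodgeast\eta\rangle=\int_\lieA h(\omega,\hodgeast\eta)$ directly in a single local trivialization, using that $\int_\lieA$ only sees the top‑degree component in both the inner and the spatial directions. First I would take $\eta$ written as in \eqref{eq-omegalocinmixedbasis} and substitute the explicit local expression \eqref{eq-hodgestarlocal} for $\hodgeast\eta_\loc$; then I would form $h(\omega_\loc,\hodgeast\eta_\loc)$, which is a $C^\infty(\varM)$‑valued $(m+n)$‑form, hence locally proportional to $\dd x^1\ordwedge\cdots\ordwedge\dd x^m\ordwedge\lfc^1\ordwedge\cdots\ordwedge\lfc^n$. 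The coefficient of that top form is what survives under $\int_\lieA$ — recall that $\int_\inner(\omega_{h,\lfc})=1$, so a factor $\sqrt{|h_\loc|}\,\lfc^1\ordwedge\cdots\ordwedge\lfc^n$ contributes $(-1)^n$ times its complementary coefficient, and then $\int_\varM$ of $\sqrt{|g|}\,\dd x^1\ordwedge\cdots\ordwedge\dd x^m$ times its coefficient.

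The computational core is the following. Writing $\omega_\loc=\sum_{r+s=p}\tfrac{1}{r!s!}\,\omega^{\bullet}_{\mu_1\ldots\mu_r a_1\ldots a_s}\,\dd x^{\mu_1}\ordwedge\cdots\ordwedge\lfc^{a_s}$ (absorbing the combinatorial normalization so the sums run freely), one pairs each term of $\omega_\loc$ of bidegree $(r,s)$ with the term of $\hodgeast\eta_\loc$ of complementary bidegree $(m-r,n-s)$ — all other products fail to reach top degree. The wedge $\dd x^{\mu_1}\ordwedge\cdots\ordwedge\dd x^{\mu_r}\ordwedge\dd x^{\nu_{r+1}}\ordwedge\cdots\ordwedge\dd x^{\nu_m}$ produces an $\epsilon^{\mu_1\ldots\mu_r\nu_{r+1}\ldots\nu_m}$, which contracts against the $\epsilon_{\nu_1\ldots\nu_m}$ already present in \eqref{eq-hodgestarlocal}; the identity $\epsilon^{\mu_1\ldots\mu_r\rho_{r+1}\ldots\rho_m}\epsilon_{\nu_1\ldots\nu_r\rho_{r+1}\ldots\rho_m}=(m-r)!\,\delta^{[\mu_1}_{\nu_1}\cdots\delta^{\mu_r]}_{\nu_r}$ collapses the $g^{\mu_i\nu_i}$ factors down to $r!$ copies of the index contraction $g^{\mu_1\nu_1}\cdots g^{\mu_r\nu_r}$, and the $r!$ cancels the $1/r!$ from $\omega_\loc$'s normalization. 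Identically, the inner wedge $\lfc^{a_1}\ordwedge\cdots\ordwedge\lfc^{a_s}\ordwedge\lfc^{b_{s+1}}\ordwedge\cdots\ordwedge\lfc^{b_n}$ yields an $\epsilon^{a_1\ldots a_s b_{s+1}\ldots b_n}$ contracting with $\epsilon_{b_1\ldots b_n}$, giving an $(n-s)!$ and collapsing the $h^{a_i b_i}$'s; the remaining factor of $s!$ cancels $1/s!$. The outer inner‑metric contraction $h(\omega(\ldots),\eta(\ldots))$ supplies the single $h_{ab}$ that raises/lowers the $\kg$‑valued index, producing exactly $\eta_a^{\mu_1\ldots\mu_r a_1\ldots a_s}=g^{\mu_1\nu_1}\cdots h^{a_s b_s}h_{ab}\,\eta^{b}_{\nu_1\ldots\nu_r b_1\ldots b_s}$ paired with $\omega^{a}_{\mu_1\ldots\mu_r a_1\ldots a_s}$. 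Finally the density factors combine: $\sqrt{|h_\loc|}$ from the volume form, $\sqrt{|h_\loc|}\sqrt{|g|}$ from \eqref{eq-hodgestarlocal}, and $h^{a b}$'s reduce so that one is left with a single $\sqrt{|g|}$ multiplying $\dd x^1\ordwedge\cdots\ordwedge\dd x^m$, the sign $(-1)^{s(m-r)}$ carried over verbatim from \eqref{eq-hodgestarlocal}, the overall $(-1)^n$ from $\int_\inner$, and the surviving combinatorial factors $(m-r)!\,(n-s)!$.

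The main obstacle is bookkeeping, not conceptual: one must track carefully (i) the sign produced by reordering $\dd x$'s and $\lfc^a$'s into the canonical top form — this is where the $(-1)^{s(m-r)}$ and the $(-1)^n$ ultimately come from, and it is easy to misplace a sign when interleaving spatial and inner factors; (ii) the precise cancellation of factorials, making sure the convention in \eqref{eq-omegalocinmixedbasis} is read consistently (whether the displayed sums there are over ordered or unordered multi‑indices) so that exactly $(m-r)!\,(n-s)!$ remains rather than, say, an extra $r!s!$; and (iii) that everything is trivialization‑independent — but this is already guaranteed, since $\hodgeast\omega$, $h(-,-)$ and $\int_\lieA$ were each shown earlier to be globally well defined, so it suffices to do the computation in one trivialization where $U$ supports a chart. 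I would therefore present the argument as: reduce to top‑degree components via $\int_\lieA$, perform the two $\epsilon$‑contractions, collect factorials and densities, and read off the stated formula.
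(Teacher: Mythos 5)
Your plan is correct and coincides with the paper's intended argument: the paper's own proof is literally the single sentence ``This is just a combinatorial straightforward computation,'' i.e.\ exactly the direct local evaluation of $\int_\lieA h(\omega,\hodgeast\eta)$ via the two Levi--Civita contractions that you spell out. Your identification of the care points (the $(-1)^n$ from $\int_\inner\omega_{h,\lfc}=1$, the cancellation of the $1/r!s!$ in \eqref{eq-hodgestarlocal} against the $r!\,s!$ from the generalized Kronecker deltas leaving $(m-r)!\,(n-s)!$, and the absorption of one $\sqrt{|h_\loc|}$ by the inner integration) is accurate and in fact more detailed than what the paper provides.
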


\begin{proof}
This is just a combinatorial straightforward computation.
\end{proof}

Notice that the Hodge star operator $\hodgeast$ is also well defined on $\Omega^\grast(\lieA)$ where it permits to introduce a scalar product $( \omega, \eta ) = \langle \omega, \hodgeast \eta  \rangle$ using Def.~\ref{def-scalarproductformsfunctions}. A similar relation as the one given in the previous proposition can be established.

The Hodge star operator defined locally by \eqref{eq-hodgestarlocal} can be defined by the same relation on any differential calculus $\Omega^\grast(\lieA, \varE)$ where $\varE$ is a representation of the Lie algebroid $\lieA$ (see Def.~3.1 in \cite{Mass38}). This will be used in \ref{subsec-matterfields}.

\section{Gauge theories}
\label{sec-gaugetheories}

In this section we formulate gauge theories on transitive Lie algebroids. We use the notion of connections introduced in \cite{Mass38}, as well as its associated notion of infinitesimal gauge action of $\lieL$.

Here we use the terminology ``connection on $\lieA$'' for the notion of ``generalized connection on $\lieA$'' introduced in  \cite[Def.~3.18]{Mass38}. We will refer to ``ordinary connections'' to mention connections which are splitting of the short exact sequence~\eqref{eq-ordinaryconnectionontransitivealgebroid}.

In the following, any mention of gauge invariance under an infinitesimal gauge transformation $\xi \in \lieL$ means vanishing of the term in $\xi$ (but not necessarily of terms of higher orders in $\xi$).

\subsection{\texorpdfstring{Decomposition of a connection $1$-form and its curvature}{Decomposition of a connection 1-form and its curvature}}

\begin{definition}
Let $\homega \in \Omega^1(\lieA,\lieL)$ be a connection on $\lieA$. We define the reduced kernel endomorphism $\rke\in \End(\varL) \simeq \lieL \otimes_{C^\infty(\varM)} \lieL^\dualast$ associated to $\homega$ by
\begin{equation}
\rke = \homega \circ \iota + \Id_{\lieL}.
\end{equation}
\end{definition}

The following facts are direct consequences of this definition, and of Prop.~3.9 and Def.~3.19 in \cite{Mass38}.

\begin{lemma}

$\rke$ vanishes if and only if $\homega$ is an ordinary connection. 

The infinitesimal action of $\lieL$ on $\rke$ is given by $\rke^\xi=\rke+[\rke,\xi]$ for any $\xi\in\lieL$.
\end{lemma}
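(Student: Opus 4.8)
The statement has two parts, both of which reduce to unwinding the definition $\rke = \homega \circ \iota + \Id_{\lieL}$ together with the structure of a generalized connection recalled from \cite{Mass38}. For the first part, recall that an ordinary connection is, by definition, a connection $1$-form $\homega \in \Omega^1(\lieA,\lieL)$ satisfying the normalization $\homega \circ \iota(\ell) = -\ell$ for all $\ell \in \lieL$, i.e. $\homega \circ \iota = -\Id_{\lieL}$. Comparing with the definition of $\rke$, this is exactly the condition $\rke = 0$. So I would simply remark: $\rke = 0 \iff \homega\circ\iota = -\Id_{\lieL} \iff \homega$ is an ordinary connection, the middle equivalence being the defining normalization of an ordinary connection $1$-form (and, for the forward direction, using Prop.~3.9 / Def.~3.19 of \cite{Mass38} which characterize generalized connections and single out ordinary ones by precisely this normalization).

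\textbf{Second part.} For the infinitesimal gauge action I would start from the known transformation law of a generalized connection $1$-form under $\xi \in \lieL$, namely $\homega^\xi = \homega + \hd\xi + [\homega,\xi]$ (the generalized analogue of $\delta_\xi A = D\xi$; this is Def.~3.19 of \cite{Mass38}), and then pre-compose with $\iota$ and add $\Id_{\lieL}$. The plan is: apply $\homega^\xi$ to $\iota(\ell)$ for $\ell\in\lieL$, expand using the transformation law, and identify each term. The term $\homega\circ\iota(\ell)$ contributes $\rke(\ell) - \ell$. The term $(\hd\xi)(\iota(\ell))$ is, by the formula for $\hd$ on $0$-forms with values in $\lieL$ (the adjoint representation), equal to $[\iota(\ell),\xi]$ evaluated appropriately — more precisely $\hd\xi$ applied to an element $\kX$ of $\lieA$ is $-[\xi,\kX]_{\lieL}$ in the sense of the adjoint action, so on $\iota(\ell)$ it gives $-[\xi,\ell] = [\ell,\xi]$ inside $\lieL$. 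The term $[\homega,\xi]$ applied to $\iota(\ell)$ gives $[\homega(\iota(\ell)),\xi] = [\rke(\ell)-\ell,\xi]$. Collecting: $\homega^\xi\circ\iota(\ell) = \rke(\ell)-\ell + [\ell,\xi] + [\rke(\ell)-\ell,\xi] = \rke(\ell) - \ell + [\rke(\ell),\xi]$, where the $[\ell,\xi]$ and $-[\ell,\xi]$ cancel. Hence $\rke^\xi(\ell) = \homega^\xi\circ\iota(\ell) + \ell = \rke(\ell) + [\rke(\ell),\xi]$, which in $\End(\varL)$ notation reads $\rke^\xi = \rke + [\rke,\xi]$ with $[\rke,\xi]$ meaning the endomorphism $\ell\mapsto [\rke(\ell),\xi]$ (equivalently, $\ad_\xi$ acting on the target of $\rke$).

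\textbf{Main obstacle.} The only delicate point is bookkeeping of signs and of which bracket is meant: the bracket in $[\homega,\xi]$ for $\lieL$-valued forms, the bracket defining the adjoint action used in $\hd$, and the bracket $[\rke,\xi]$ in $\End(\varL)$ must all be matched consistently. In particular one must check that the cross-terms $[\ell,\xi]$ (coming from $\hd\xi$) and $-[\ell,\xi]$ (coming from the $-\ell$ piece inside $[\homega(\iota\ell),\xi]=[\rke(\ell)-\ell,\xi]$) genuinely cancel, which is what makes the final answer depend only on $\rke$ and not separately on $\homega$ — this is the conceptual content. Once the conventions of \cite{Mass38} (Prop.~3.9, Def.~3.19) are fixed, the computation is short, and I would simply write ``these are direct consequences of the cited results'' after exhibiting the one-line cancellation above, rather than belaboring it.
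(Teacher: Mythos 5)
Your proposal is correct and is exactly the unwinding that the paper leaves implicit: the paper offers no proof beyond the remark that both claims are ``direct consequences'' of the definition of $\rke$ and of Prop.~3.9/Def.~3.19 of the cited reference, and your computation (the normalization $\homega\circ\iota=-\Id_{\lieL}$ for the first part, and the cancellation of the $[\ell,\xi]$ terms so that only $\ell\mapsto[\rke(\ell),\xi]$ survives for the second) is precisely what that remark stands in for. Your reading of $[\rke,\xi]$ as the endomorphism $\ell\mapsto[\rke(\ell),\xi]$ is also the one consistent with the paper's later formula $\hConn^\xi=\hConn+[\hConn,\xi]$, since $\hConn\circ\iota=\iota\circ\rke$.
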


One can look at the reduced kernel endomorphism as an obstruction for $\homega$ to be an ordinary connection, so that, in some forthcoming developments, assuming $\rke=0$ will mean that we consider an ordinary connection.

\begin{definition}
We denote by $R_\rke : \lieL \times \lieL \rightarrow \lieL$ the obstruction for $\rke\in\End(\lieL)$ to be a endomorphism of Lie algebras: 
\begin{equation}
R_\rke(\gamma,\eta) = [ \rke(\gamma) , \rke(\eta) ] - \rke([\gamma,\eta])
\end{equation}
for any $\gamma, \eta \in \lieL$. $R_\rke$ is called the algebraic curvature of $\rke$.
\end{definition}

Let us introduce a fixed reference ordinary connection on $\lieA$, defined by a normalized $1$-form $\omegadot\in \Omega^1(\lieA,\lieL)$ (\textsl{i.e.} $\omegadot \circ \iota = - \Id_{\lieL}$). $\omegadot$ will be called a background connection on $\lieA$.

\begin{theorem}
\label{inducedconnection}
Let $\omegadot$ be a background connection on $\lieA$.
For any connection $\homega \in \Omega^1(\lieA,\lieL)$ with associated reduced kernel endomorphism $\rke$,
\begin{equation}
\label{eq-decompositionconnectiononeform}
\omega = \homega + \rke(\omegadot)
\end{equation}
is an ordinary connection on $\lieA$. The induced infinitesimal gauge action of $\lieL$ is the one on ordinary connections.
\end{theorem}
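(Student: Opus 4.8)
The plan is to handle the two assertions of the theorem separately. First I would prove that $\omega := \homega + \rke(\omegadot)$ is an ordinary connection. Recall that ordinary connections on $\lieA$ are in bijection with their connection $1$-forms, namely with the \emph{normalized} elements $\sigma \in \Omega^1(\lieA,\lieL)$, those satisfying $\sigma \circ \iota = -\Id_\lieL$ (the splitting being recovered by $\nabla_X = \kX + \iota\circ\sigma(\kX)$ for any $\kX$ with $\rho(\kX)=X$, which is well defined precisely because of normalization). So it is enough to verify $\omega\circ\iota = -\Id_\lieL$. Since $\rke$ is a $C^\infty(\varM)$-linear endomorphism of $\lieL$, since $\homega\circ\iota = \rke - \Id_\lieL$ by the very definition of $\rke$, and since $\omegadot\circ\iota = -\Id_\lieL$ because $\omegadot$ is a background connection, this is the one-line computation
\[
\omega\circ\iota \;=\; \homega\circ\iota \;+\; \rke\circ(\omegadot\circ\iota) \;=\; (\rke-\Id_\lieL) \;+\; \rke\circ(-\Id_\lieL) \;=\; -\Id_\lieL .
\]
Hence $\omega$ is the connection $1$-form of a unique ordinary connection on $\lieA$.

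For the second assertion, the plan is to push the infinitesimal gauge action through the decomposition \eqref{eq-decompositionconnectiononeform}. The background connection $\omegadot$ is chosen once and for all and is therefore inert under gauge transformations; so for $\xi\in\lieL$ I would write
\[
\omega^\xi \;=\; \homega^\xi + \rke^\xi(\omegadot) \;=\; \homega^\xi + \rke(\omegadot) + [\rke,\xi](\omegadot),
\]
using the transformation law $\rke^\xi = \rke + [\rke,\xi]$ recorded in the Lemma just above. Subtracting $\omega = \homega + \rke(\omegadot)$, the claim reduces to the identity $\homega^\xi - \homega + [\rke,\xi](\omegadot) = \omega^\xi_{\mathrm{ord}} - \omega$, where $\omega^\xi_{\mathrm{ord}}$ denotes the ordinary infinitesimal gauge transform of the connection $1$-form $\omega$. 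I would prove this by inserting on the left the explicit formula for the infinitesimal gauge action $\homega\mapsto\homega^\xi$ on generalized connections from \cite{Mass38} (Prop.~3.9 and Def.~3.19), and on the right expanding $\omega^\xi_{\mathrm{ord}}$ in terms of $\homega$ and $\rke(\omegadot)$ via \eqref{eq-decompositionconnectiononeform}; the terms carrying a factor of $\rke$ then cancel against the $[\rke,\xi](\omegadot)$ contribution, leaving exactly the covariant-differential and bracket terms of the ordinary gauge transformation. As a sanity check that nothing goes wrong at the normalization level, applying $\circ\,\iota$ to $\rke^\xi = \homega^\xi\circ\iota + \Id_\lieL$ gives $(\homega^\xi - \homega)\circ\iota = [\rke,\xi]$, whence $\omega^\xi\circ\iota = -\Id_\lieL$ once more, so $\omega^\xi$ is automatically again an ordinary connection $1$-form.

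The hard part will be none of the above in substance but rather the bookkeeping in the second step: one must be scrupulous about the conventions for the adjoint action of $\lieL$ on $\End(\varL)$ and on $\Omega^\grast(\lieA,\lieL)$, and about the sign and placement of the covariant-differential term, so that, once the algebraic part $[\rke,\xi]$ has been separated off by pairing it with $\omegadot$, the generalized gauge transformation collapses exactly onto the ordinary one. I expect this matching of terms to be the only real, if purely computational, obstacle.
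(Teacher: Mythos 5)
Your proposal is correct and takes essentially the same route as the paper, whose entire proof reads ``These are straightforward computations'': the normalization check $\omega\circ\iota=(\rke-\Id_\lieL)+\rke\circ(-\Id_\lieL)=-\Id_\lieL$ is exactly the intended argument, and your plan for tracking the gauge action through the decomposition (using $\rke^\xi=\rke+[\rke,\xi]$ and the gauge law on generalized connections from the cited reference) is the computation the authors leave implicit. You in fact supply more detail than the paper does.
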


$\omega$ will be called the ordinary connection induced by $\homega$ relatively to $\omegadot$. Notice that when $\homega$ is an ordinary connection, one has $\rke = 0$, so that $\omega = \homega$. The background connection $\omegadot$ is only relevant for connections which are not ordinary connections.

\begin{proof}
These are straightforward computations.
\end{proof}

In a local trivialization $(U,\Psi,\nabla^0)$ of $\lieA$, one writes $\homega_\loc = \hA - \theta + \rke_\loc$, $\omegadot_\loc = \Adot - \theta$ and $\omega_\loc = A - \theta$, for $\rke_\loc \in C^\infty(U) \otimes \End(\kg)$ and $\hA, \Adot, A \in \Omega^1(U) \otimes \kg$. They are related by the relation $A = \hA + \rke_\loc(\Adot)$.

An ordinary connection on $\algA$ is a map $\nabla : \Gamma(T\varM) \rightarrow \lieA$. For any connection $\homega$ on $\lieA$, we introduce the generalization of this map as follows.

\begin{proposition}
\label{genconn}
Let $\hConn : \lieA \rightarrow \lieA$ be defined, for any $\kX \in \lieA$, by
\begin{equation}
\hConn(\kX) = \kX + \iota \circ \homega(\kX).
\end{equation}
Then $\hConn$ is a $C^\infty(\varM)$-linear map on $\lieA$, the curvature $\hR \in \Omega^2(\lieA,\lieL)$ of $\homega$ is given by
\begin{equation}
\iota\circ \hR(\kX, \kY) = [\hConn(\kX),\hConn(\kY)] - \hConn([\kX,\kY])
\end{equation}
and the infinitesimal gauge action of $\lieL$ on $\hConn$ is given by $\hConn^\xi = \hConn + [\hConn,\xi]$.

$\homega\in\Omega^1(\lieA,\lieL)$ is an ordinary connection if and only if $\hConn \circ \iota=0$.
\end{proposition}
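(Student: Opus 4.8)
The plan is to verify each of the three claimed assertions about $\hConn(\kX) = \kX + \iota \circ \homega(\kX)$ directly, using the structure of the short exact sequence \eqref{eq-sectransitiveliealgebroid} and the properties of the generalized connection $\homega$ recorded in Prop.~3.9 and Def.~3.19 of \cite{Mass38}.

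First I would check the $C^\infty(\varM)$-linearity of $\hConn$. The map $\kX \mapsto \kX$ is tautologically $C^\infty(\varM)$-linear, and $\iota$ is $C^\infty(\varM)$-linear by \eqref{eq-sectransitiveliealgebroid}, so the only point is that $\homega \in \Omega^1(\lieA, \lieL)$ is a $1$-form, hence $C^\infty(\varM)$-linear; the sum of $C^\infty(\varM)$-linear maps is $C^\infty(\varM)$-linear. This is immediate and needs no computation beyond citing that $\Omega^1(\lieA,\lieL)$ consists of $C^\infty(\varM)$-linear maps.

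Next, for the curvature formula, I would expand the right-hand side $[\hConn(\kX),\hConn(\kY)] - \hConn([\kX,\kY])$ using bilinearity of the bracket on $\lieA$:
\begin{equation*}
[\kX + \iota\homega(\kX),\; \kY + \iota\homega(\kY)] - [\kX,\kY] - \iota\homega([\kX,\kY]).
\end{equation*}
Expanding the first bracket gives $[\kX,\kY] + [\kX, \iota\homega(\kY)] + [\iota\homega(\kX), \kY] + [\iota\homega(\kX), \iota\homega(\kY)]$; the term $[\kX,\kY]$ cancels. Since $\iota$ is a Lie algebra morphism, $[\iota\homega(\kX),\iota\homega(\kY)] = \iota[\homega(\kX),\homega(\kY)]$. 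The cross terms $[\kX,\iota\homega(\kY)]$ and $[\iota\homega(\kX),\kY]$ should combine, via the compatibility of the adjoint action of $\lieA$ on $\lieL$ with $\iota$ (i.e. $[\kX, \iota(\ell)] = \iota(\kX\cdotaction \ell)$ for the representation of $\lieA$ on $\lieL$), to reproduce the terms $\kX\cdotaction\homega(\kY) - \kY\cdotaction\homega(\kX)$. Collecting everything, the right-hand side equals $\iota$ applied to $\kX\cdotaction\homega(\kY) - \kY\cdotaction\homega(\kX) + [\homega(\kX),\homega(\kY)] - \homega([\kX,\kY])$, which is precisely $\iota \circ (\hd\homega + \tfrac12[\homega,\homega])(\kX,\kY) = \iota \circ \hR(\kX,\kY)$ by the definition of curvature of a generalized connection in \cite{Mass38}. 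The main obstacle here is bookkeeping: one must be careful that the definition of $\hR$ used in \cite{Mass38} matches this expansion up to the conventional normalization of the bracket term, and that $\iota$ being injective lets one drop it when identifying forms; but there is no genuine difficulty, only care with signs and the Maurer--Cartan-type formula.

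For the infinitesimal gauge action, I would simply note that the gauge action of $\xi \in \lieL$ on $\homega$ is $\homega^\xi = \homega + \hd\xi + [\homega,\xi]$ (or the relevant formula from Def.~3.19 of \cite{Mass38} for generalized connections), and compute $\hConn^\xi(\kX) = \kX + \iota\homega^\xi(\kX)$; comparing with $[\hConn,\xi](\kX) = [\hConn(\kX),\iota\xi] = [\kX,\iota\xi] + [\iota\homega(\kX),\iota\xi] = \iota(\kX\cdotaction\xi) + \iota[\homega(\kX),\xi]$, and recognizing $\kX\cdotaction\xi = (\hd\xi)(\kX)$, one reads off $\hConn^\xi = \hConn + [\hConn,\xi]$. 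Finally, for the last equivalence: $\homega$ is an ordinary connection iff $\homega\circ\iota = -\Id_\lieL$, i.e. iff the reduced kernel endomorphism $\rke = \homega\circ\iota + \Id_\lieL$ vanishes; and $\hConn\circ\iota(\ell) = \iota(\ell) + \iota\circ\homega(\iota(\ell)) = \iota(\ell + (\homega\circ\iota)(\ell)) = \iota(\rke(\ell))$, which vanishes iff $\rke = 0$ (using injectivity of $\iota$), iff $\homega$ is ordinary. All four parts are ``straightforward computations'' as the authors indicate; the only thing requiring attention is matching conventions for the curvature and gauge-transformation formulas imported from \cite{Mass38}.
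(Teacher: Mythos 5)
Your proposal is correct and follows essentially the same route as the paper: the paper dismisses the first three assertions as straightforward computations (which you carry out correctly, with the right expansion of the bracket against the Maurer--Cartan form of the curvature), and for the final equivalence it argues the converse exactly as you do, via $\hConn(\iota(\gamma)) = \iota(\gamma + \homega(\iota(\gamma)))$ and injectivity of $\iota$. The only cosmetic difference is in the forward direction, where the paper observes that an ordinary connection gives $\hConn = \nabla\circ\rho$ (hence $\hConn\circ\iota = \nabla\circ\rho\circ\iota = 0$), while you phrase both directions uniformly through the reduced kernel endomorphism $\rke$; the two are equivalent.
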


\begin{proof}
The first part of the Proposition is just straightforward computations.

Let $\homega$ be an ordinary connection, and let $\nabla$ be its connection as in \eqref{eq-ordinaryconnectionontransitivealgebroid}. Then by definition $\iota \circ \homega(\kX) = \nabla_{\rho(\kX)} - \kX$, so that $\hConn = \nabla \circ \rho$. 

Conversely, if $\hConn \circ \iota = 0$, then, for any $\gamma \in \lieL$, one has $0 = \hConn(\iota(\gamma)) = \iota(\gamma) + \iota \circ \homega(\iota(\gamma))$, so that $\homega(\iota(\gamma)) = -\gamma$, which implies that $\homega$ is an ordinary connection on $\lieA$.
\end{proof}

Notice that when $\homega$ is an ordinary connection on $\lieA$, one has $\hConn^2=\hConn$, and $\hConn$ is the projection onto the image of $\nabla$ in $\algA$.

Let $\homega$ be a connection on $\lieA$, $\omegadot$ be a background connection on $\lieA$, and $\omega$ be the ordinary connection induced by $\homega$ relatively to $\omegadot$. Denote by $\hConn, \Conndot, \Conn : \lieA \rightarrow \lieA$ the maps associated to $\homega$, $\omegadot$ and $\omega$ respectively, and denote by $\nabladot, \nabla : \Gamma(T\varM) \rightarrow \lieA$ the connections as in \eqref{eq-ordinaryconnectionontransitivealgebroid} associated to $\omegadot$ and $\omega$ respectively. Then one has
\begin{align*}
\hConn(\kX) &= \Conn(\kX)+\iota\circ\rke(\kX-\Conndot(\kX)) \\
&= \nabla\circ\rho(\kX)+\iota\circ\rke(\kX-\nabladot\circ\rho(\kX))
\end{align*}
In the first expression, we identify $\rke$ with its induced map $\iota(\lieL) \rightarrow \lieL$.

We denote by $\Rdot, R \in \Omega^2(\varM, \varL)$ the curvature $2$-forms of the ordinary connections $\omegadot$ and $\omega$. Define $\hF = R - \rke \circ \Rdot \in \Omega^2(\varM, \varL)$. Notice that $\rho^\ast \hF \in \Omega^2(\lieA, \lieL)$.

For any $X \in \Gamma(T\varM)$ and $\gamma \in \lieL$ define
\begin{equation*}
(\caD_X \rke)(\gamma) = [\nabla_X, \rke(\gamma)] - \rke([\nabladot_X, \gamma]),
\end{equation*}
then $\caD_X \rke \in \End(\varL)$ and we can look at $\caD \rke$ as an element in $\Omega^1(\varM, \End(\varL))$. A straightforward computation shows that $\caD_X\caD_Y \rke - \caD_Y\caD_X \rke - \caD_{[X,Y]} \rke = [R(X,Y), \rke] - \rke([\Rdot(X,Y), \Id_\lieL])$ for any $X,Y \in \Gamma(T\varM)$. 

Denote by $\rho^\ast \caD \rke \in \Omega^1(\lieA, \End(\varL))$ its pull-back, given explicitly by
\begin{equation*}
(\rho^\ast \caD \rke)(\gamma) = [\nabla \circ \rho, \rke(\gamma)] - \rke([\nabladot \circ \rho, \gamma])
\end{equation*}
and notice that $((\rho^\ast \caD \rke)\circ \omegadot)(\kX,\kY) = (\caD_{\rho(\kX)} \rke)(\omegadot(\kY)) - (\caD_{\rho(\kY)} \rke)(\omegadot(\kX))$ defines an element in $\Omega^2(\lieA, \lieL)$.

Finally, one has $\omegadot^\ast R_\rke \in \Omega^2(\lieA, \lieL)$. Given all these notations, a straightforward computation shows the following.

\begin{proposition}
\label{prop-decompositionglobalcurvature}
The curvature $\hR \in \Omega^2(\lieA,\lieL)$ of $\homega$ can be written as
\begin{equation}
\label{eq-decompositioncurvature}
\hR=
\rho^\ast \hF
- (\rho^\ast \caD \rke)\circ \omegadot
+ \omegadot^\ast R_\rke
\end{equation}

Under an infinitesimal gauge transformation, each of the $3$ terms of this decomposition of $\hR$ have homogeneous transformations.
\end{proposition}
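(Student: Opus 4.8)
The plan is to verify the decomposition \eqref{eq-decompositioncurvature} by a direct computation starting from the structural definition of curvature in terms of $\hConn$, using the expression for $\hConn$ in terms of $\Conn$, $\Conndot$ (equivalently $\nabla$, $\nabladot$) and $\rke$ that precedes the statement. First I would take two elements $\kX,\kY\in\lieA$, write $\iota\circ\hR(\kX,\kY)=[\hConn(\kX),\hConn(\kY)]-\hConn([\kX,\kY])$ from Prop.~\ref{genconn}, and substitute $\hConn(\kX)=\nabla\circ\rho(\kX)+\iota\circ\rke(\kX-\nabladot\circ\rho(\kX))$. Expanding the bracket gives nine terms, which I would group into three families: (i) the pure $\nabla$-terms $[\nabla_{\rho(\kX)},\nabla_{\rho(\kY)}]-\nabla_{\rho([\kX,\kY])}$, which by definition of $R$ reproduce $\iota\circ R(\rho(\kX),\rho(\kY))=\iota\circ(\rho^\ast R)(\kX,\kY)$; (ii) the cross terms of the form $[\nabla_{\rho(\kX)},\iota\circ\rke(\,\cdot\,)]$ and $\rke$ applied to a bracket, which after using the compatibility $[\nabla_X,\iota(\gamma)]=\iota([\nabla_X,\gamma])$ and the bracket relation in $\lieA$ collapse to the covariant-derivative term $-(\rho^\ast\caD\rke)\circ\omegadot$ plus the correction $-\rke(\rho^\ast\Rdot)$; (iii) the purely algebraic terms $[\iota\circ\rke(\omegadot(\kX)),\iota\circ\rke(\omegadot(\kY))]-\iota\circ\rke([\,\cdot\,,\,\cdot\,])$, which are exactly $\iota\circ(\omegadot^\ast R_\rke)(\kX,\kY)$ by the definition of the algebraic curvature $R_\rke$. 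Collecting (i) with the $-\rke(\rho^\ast\Rdot)$ piece from (ii) yields $\rho^\ast(R-\rke\circ\Rdot)=\rho^\ast\hF$, and the decomposition follows.

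A cleaner bookkeeping device is to note that $\kX-\nabladot\circ\rho(\kX)=-\iota\circ\omegadot(\kX)$, so that $\hConn(\kX)=\nabla\circ\rho(\kX)-\iota\circ\rke(\iota\circ\omegadot(\kX))$, where $\rke$ is here viewed as the map $\iota(\lieL)\to\lieL$; with this substitution the three families above separate according to how many factors of $\omegadot$ they carry (zero, one, two), which matches the three terms of \eqref{eq-decompositioncurvature}. I would carry out the expansion in this form, repeatedly invoking: $\rho$ is a bracket morphism, the defining identity $[\nabla_X,\iota(\gamma)]=\iota(\nabla_X\cdotaction\gamma)$ for ordinary connections (and likewise for $\nabladot$), the Leibniz rule $[\kX,f\kY]=f[\kX,\kY]+(\rho(\kX)\cdotaction f)\kY$ to absorb the $C^\infty(\varM)$-linearity questions, and the definitions of $R$, $\Rdot$, $\caD\rke$ and $R_\rke$. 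Each identification is a routine matching of terms, so I will not grind through it in full.

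For the second assertion, about homogeneous behaviour under an infinitesimal gauge transformation $\xi\in\lieL$, I would use the transformation laws already recorded: $\rke^\xi=\rke+[\rke,\xi]$ (the Lemma following the definition of $\rke$), $\hConn^\xi=\hConn+[\hConn,\xi]$ (Prop.~\ref{genconn}), the fact that $\omega=\homega+\rke(\omegadot)$ transforms as an ordinary connection (Theorem~\ref{inducedconnection}) while $\omegadot$ is fixed, and hence $\Rdot$ is gauge-invariant and $R$ transforms in the adjoint way, $R^\xi=R+[R,\xi]$. From these one computes that $\hF=R-\rke\circ\Rdot$ transforms as $\hF^\xi=\hF+[\hF,\xi]-[\rke,\xi]\circ\Rdot+\ldots$; checking that the terms actually reorganize so that each of the three summands $\rho^\ast\hF$, $(\rho^\ast\caD\rke)\circ\omegadot$, $\omegadot^\ast R_\rke$ picks up only its own adjoint-type variation is the one place that needs care, since a priori the gauge variation could mix the three pieces. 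The main obstacle is precisely this: controlling the cross-contributions in the gauge variation so that no leakage occurs between the $\omegadot$-degree strata; this is resolved because $\omegadot$ (and thus $\Rdot$) is gauge-inert and because $\rke$, $\caD\rke$ and $R_\rke$ all transform by the adjoint action of $\xi$, which is compatible with pulling back along $\rho$ and with the $\cdotaction$ of $\omegadot$. Once this is observed, the ``straightforward computation'' alluded to in the statement goes through, and I would present it as a term-by-term check in a local trivialization if a global argument proves unwieldy, exactly as the authors suggest elsewhere for analogous verifications.
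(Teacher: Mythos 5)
Your proposal is correct and follows exactly the route the paper has in mind: the paper offers no written proof beyond the remark that ``a straightforward computation shows the following,'' and your expansion of $\iota\circ\hR(\kX,\kY)=[\hConn(\kX),\hConn(\kY)]-\hConn([\kX,\kY])$ with $\hConn(\kX)=\nabla_{\rho(\kX)}-\iota\circ\rke(\omegadot(\kX))$, grouped by $\omegadot$-degree and using the structure equation for $\omegadot([\kX,\kY])$ to produce the $-\rke\circ\Rdot$ correction, is precisely that computation. The gauge-covariance argument via the recorded transformation laws of $\rke$, $\omega$ and the inertness of $\omegadot$ is likewise the intended one.
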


When $\homega$ is an ordinary connection, one has $\rke = 0$, so that $\hF = R$, and then $\hR = R$ as expected by the previously mentioned fact that $\homega = \omega$.

\subsection{Gauge invariant functional}

We suppose now that $\lieA$ is an orientable transitive Lie algebroid equipped with a non degenerate and inner non-degenerate metric $\hg = (g,h,\nabladot)$ such that $g$ is also a non-degenerate metric on $\varM$ and $h$ is a Killing inner metric on $\lieL$.

\begin{proposition}
For any connection $\homega \in \Omega^1(\lieA,\lieL)$ with curvature $2$-form $\hR$, we define the action functional:
\begin{equation}
\label{eq-functionalactiongauge}
\Act_\text{Gauge}[\homega] = \int_\lieA h(\hR,\hodgeast \hR).
\end{equation}
Then $\Act[\homega]$ is invariant under infinitesimal gauge transformations in $\lieL$.
\end{proposition}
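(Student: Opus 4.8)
The plan is to show that the term linear in an arbitrary infinitesimal gauge parameter $\xi \in \lieL$ in $\Act_\text{Gauge}[\homega^\xi]$ vanishes. The starting point is the combination of three ingredients already available: first, by Proposition~\ref{genconn}, the infinitesimal gauge action on the curvature is the adjoint action, $\hR^\xi = \hR + [\hR, \xi]$ (this follows from $\hConn^\xi = \hConn + [\hConn, \xi]$ and the bracket formula for $\hR$); second, the Hodge star operator is defined $C^\infty(\varM)$-linearly and its local formula~\eqref{eq-hodgestarlocal} only contracts indices against the fixed metric data $(g,h,\nabladot)$, which is itself gauge invariant, so $\hodgeast$ is equivariant, i.e. the linear part of $\hodgeast(\hR^\xi)$ is $\hodgeast \hR + [\hodgeast \hR, \xi]$; third, since $h$ is a Killing inner metric, Lemma~\ref{lem-killingwithform} gives $h([\xi,\omega_1],\omega_2) + h(\omega_1,[\xi,\omega_2]) = 0$ for forms with values in $\lieL$ (here $\xi$ has degree $0$, so the sign is $+1$).

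First I would compute, keeping only the term linear in $\xi$,
\begin{equation*}
h(\hR^\xi, \hodgeast \hR^\xi) = h(\hR, \hodgeast \hR) + h([\hR,\xi], \hodgeast \hR) + h(\hR, [\hodgeast \hR, \xi]) + O(\xi^2).
\end{equation*}
The linear part is $h([\hR,\xi], \hodgeast \hR) + h(\hR, [\hodgeast \hR, \xi])$. Rewriting $[\hR,\xi] = -[\xi,\hR]$ and $[\hodgeast\hR,\xi] = -[\xi,\hodgeast\hR]$, this equals $-\big(h([\xi,\hR], \hodgeast \hR) + h(\hR, [\xi,\hodgeast \hR])\big)$, which vanishes pointwise by the Killing property in Lemma~\ref{lem-killingwithform}. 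Hence the integrand $h(\hR^\xi, \hodgeast \hR^\xi)$ has no term linear in $\xi$, and therefore neither does $\Act_\text{Gauge}[\homega^\xi] = \int_\lieA h(\hR^\xi, \hodgeast \hR^\xi)$, so the functional is invariant under infinitesimal gauge transformations in the sense fixed at the start of Section~\ref{sec-gaugetheories}.

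The main point requiring care — and the step I expect to be the real obstacle — is justifying the equivariance of $\hodgeast$, i.e. that $\hodgeast(\hR^\xi)$ has linear part $[\hodgeast \hR, \xi]$. One cannot simply invoke $C^\infty(\varM)$-linearity, since a priori $\hodgeast$ depends on the choice of background connection $\nabladot$ entering the mixed basis $\lfc^a$, and under a gauge transformation one must check that the components $\omega_{\mu_1\ldots\mu_r a_1\ldots a_s}$ in the expansion~\eqref{eq-omegalocinmixedbasis} transform by the same linear $\lieL$-action as the form itself. This is where the hypotheses that $h$ is Killing (so the metric contractions $h^{ab}$ in~\eqref{eq-hodgestarlocal} are compatible with the bracket) and that $\nabladot$ is the connection of the fixed metric $\hg$ (so $\nabladot$ is not varied) are used. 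I would verify this in a single local trivialization, tracking how $[\xi,-]$ acts on each factor in~\eqref{eq-hodgestarlocal}: on the $\kg$-valued component by the adjoint action, trivially on $\dd x^\mu$ and on $\lfc^a$ up to the metric contractions, and then use the invariance $h([\xi,E_a],E_b) + h(E_a,[\xi,E_b]) = 0$ to see that the contractions $h^{a_k b_k}$ absorb the action consistently, leaving an overall adjoint action $[\hodgeast\hR,\xi]$ at linear order. Once this equivariance is in hand, the argument above closes immediately.
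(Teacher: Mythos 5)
Your argument is correct and is essentially the paper's own proof: expand to first order in $\xi$, use $\hR^\xi = \hR + [\xi,\hR]$, and cancel the linear term $h([\xi,\hR],\hodgeast\hR) + h(\hR,[\xi,\hodgeast\hR])$ with Lemma~\ref{lem-killingwithform}. The equivariance of $\hodgeast$ that you single out as the main obstacle is in fact immediate and needs none of the machinery you sketch: the bracket with a degree-zero $\xi$ acts only on the $\lieL$-\emph{values} of the components in \eqref{eq-omegalocinmixedbasis} (the basis $\lfc^a$ is built from the fixed $\nabladot$ and is not gauge-transformed), while \eqref{eq-hodgestarlocal} is $C^\infty(\varM)$-linear in those components and only contracts form indices, so $\hodgeast[\xi,\hR]=[\xi,\hodgeast\hR]$ without invoking any Killing compatibility of the $h^{ab}$ contractions.
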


Denote by $\dvol = \sqrt{|g|}\,\dd x^1\ordwedge\ldots\ordwedge\dd x^m$, where $|g|$ is the determinant of $g$, the volume form on $\varM$. Then define the Lagrangian density $\varL_\text{Gauge}[\homega]$ by
\begin{equation*}
\varL_\text{Gauge}[\homega]\, \dvol = \int_\inner h(\hR,\hodgeast \hR) \in \Omega^m(\varM)
\end{equation*}
Then the action functional is given by
\begin{equation*}
\Act_\text{Gauge}[\homega] = \int_\varM \varL_\text{Gauge}[\homega]\, \dvol
\end{equation*}

\begin{proof}
An infinitesimal gauge transformation $\xi \in \lieL$ induces the transformation $\hR \mapsto \hR^\xi = \hR + [\xi,\hR]$ on the curvature. At first order in $\xi$, one gets:
\begin{align*}
\varL_\text{Gauge}[\homega^\xi]\, \dvol &= \int_\inner h(\hR^\xi,\hodgeast \hR^\xi) = \int_\inner h(\hR + [\xi,\hR],\hodgeast \hR + \hodgeast[\xi,\hR])
\\
&= \int_\inner h(\hR,\hodgeast \hR) + \int_\inner h([\xi,\hR],\hodgeast \hR) + h(\hR,\hodgeast[\xi,\hR])
\\
&= \varL_\text{Gauge}[\homega]\, \dvol
\end{align*}
where we use Lemma~\ref{lem-killingwithform} in the last step.
\end{proof}

As a background connection, we choose $\omegadot$ to be the connection $1$-form associated to the connection $\nabladot$ in the triple $\hg = (g,h,\nabladot)$. We denote by $\omega$ the ordinary connection induced by $\homega$ relatively to $\omegadot$.

In a local trivialization $(U,\Psi,\nabla^0)$ of $\lieA$, one has
\begin{equation*}
(\caD_X \rke)_\loc(\gamma) = (X \cdotaction \rke_{\loc})(\gamma) + [A(X), \rke_{\loc}(\gamma)] - \rke_{\loc}([\Adot(X), \gamma])
\end{equation*}
for any $X \in \Gamma(TU)$ and $\gamma \in C^\infty(U) \otimes \kg$.

Let us introduce the following notations:
\begin{align*}
R_{\loc} &= F_{\mu \nu}^a E_a \dd x^\mu \ordwedge \dd x^\nu 
&
\Rdot_{\loc} &= \Fdot_{\mu \nu}^a E_a \dd x^\mu \ordwedge \dd x^\nu 
\\
(R_{\rke})_{\loc} (E_a, E_b) &= W_{ab}^c E_c
&
\rke_{\loc}(E_a) &= \rke^b_a E_b
\end{align*}
$F_{\mu \nu}^a$ and $\Fdot_{\mu \nu}^a$ are the ordinary field strengths of the connections $A$ and $\Adot$ respectively, and $\hF_{\mu \nu}^a = F_{\mu \nu}^a - \rke^a_b \Fdot_{\mu \nu}^b$. A direct computation shows that
\begin{equation*}
W_{ab}^c = \rke_a^d\rke_b^e C_{de}^c - C_{ab}^d \rke_d^c
\end{equation*}
where the $C_{ab}^c$'s are the structure constants of $\kg$ in the basis $\{ E_a \}_{1 \leq a \leq n}$. 

 With $(\caD_{\partial_\mu} \rke)_\loc(E_a) = (\caD \rke)_{\mu, a}^b E_b$, one has
\begin{equation*}
(\caD \rke)_{\mu, a}^b = \partial_\mu\rke_a^b + A_\mu^c \rke_a^d C_{cd}^b - \Adot_\mu^d C_{da}^c\rke_c^b. 
\end{equation*}

With these notations, using Prop~\ref{prop-hodgestarcontraction}, the Lagrangian density can be written as
\begin{multline}
\label{eq-decompositionlocalactiongauge}
\varL_\text{Gauge}[A, \rke] = \tfrac{\lambda_1}{4} g^{\mu_1 \mu_2} g^{\nu_1 \nu_2} h_{a_1 a_2} \hF_{\mu_1 \nu_1}^{a_1} \hF_{\mu_2 \nu_2}^{a_2}
\\
+\tfrac{\lambda_2}{2} g^{\mu_1 \mu_2} h^{a_1 a_2} h_{b_1 b_2} (\caD \rke)_{\mu_1, a_1}^{b_1} (\caD \rke)_{\mu_2, a_2}^{b_2}
\\
+\tfrac{\lambda_3}{4} h^{a_1 a_2} h^{b_1 b_2} h_{c_1 c_2} W_{a_1 b_1}^{c_1} W_{a_2 b_2}^{c_2}
\end{multline}
where $\lambda_1$, $\lambda_2$, $\lambda_3$ are combinatorial coefficients:
$\lambda_1=(-1)^n(m-2)!\,n!$,
$\lambda_2=(-1)^n(-1)^{m-1}(m-1)!\,(n-1)!$ and
$\lambda_3=(-1)^n m!\,(n-2)!$.

\subsection{Minimal coupling to matter fields}
\label{subsec-matterfields}

As explained in \cite{Mass38}, a connection defines a covariant derivative on the space of sections $\Gamma(\varE)$ of a vector bundle $\varE \rightarrow \varM$ which supports a representation $\phi : \lieA \rightarrow \kD(\varE)$ of $\lieA$. 

Using similar notations as in \cite{Mass38}, we denote by $\varphi \mapsto  \varphi^\xi = \varphi - \phi_\lieL(\xi)\varphi$ an infinitesimal gauge transformation performed on $\varphi \in \Gamma(\varE)$ by $\xi \in \lieL$.

\begin{proposition}
Let $\homega \in \Omega^1(\lieA, \lieL)$ be a connection on $\lieA$. For any $\varphi \in \Gamma(\varE)$, the map $\varphi \mapsto \hnabla^\varE \varphi = \phi(\hConn)\varphi$ defines a covariant derivative $\hnabla^\varE : \Gamma(\varE) \rightarrow \Omega^1(\lieA, \varE)$ which can be decomposed, using \eqref{eq-decompositionconnectiononeform}, as
\begin{equation}
\label{eq-decompositioncovariantderivative}
\hnabla^\varE \varphi = \rho^\ast\phi(\nabla)\cdotaction \varphi - (\phi_\lieL (\rke) \varphi) \circ \omegadot
\end{equation}
Under infinitesimal gauge transformations, each term has homogeneous transformations.
\end{proposition}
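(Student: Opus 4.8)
The plan is to deduce the whole proposition from the decomposition of the map $\hConn$ already obtained above, pushing it through the representation $\phi$. First I would check that $\hnabla^\varE$ is indeed a covariant derivative. Since $\hConn$ is $C^\infty(\varM)$-linear by Prop.~\ref{genconn} and $\phi$, being a morphism of Lie algebroids, is $C^\infty(\varM)$-linear, one gets $(\hnabla^\varE\varphi)(f\kX)=\phi\big(f\,\hConn(\kX)\big)\varphi=f\,(\hnabla^\varE\varphi)(\kX)$, so $\hnabla^\varE\varphi\in\Omega^1(\lieA,\varE)$. Moreover the anchor of $\kD(\varE)$ composed with $\phi$ equals $\rho$, and $\rho\circ\hConn=\rho$ because $\rho\circ\iota=0$; hence $\phi(\hConn(\kX))$ is a derivation of $\varE$ over the vector field $\rho(\kX)$, which gives the Leibniz rule $\hnabla^\varE(f\varphi)=(\hd_\lieA f)\,\varphi+f\,\hnabla^\varE\varphi$. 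This is precisely the statement recalled from \cite{Mass38}.

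Next I would establish \eqref{eq-decompositioncovariantderivative}. Writing \eqref{eq-decompositionconnectiononeform} as $\homega=\omega-\rke(\omegadot)$ and inserting it into $\hConn(\kX)=\kX+\iota\circ\homega(\kX)$ gives $\hConn(\kX)=\Conn(\kX)-\iota\circ\rke(\omegadot(\kX))$, and Prop.~\ref{genconn} gives $\Conn=\nabla\circ\rho$ because $\omega$ is an ordinary connection. Applying $\phi$, acting on $\varphi$, and using $\phi\circ\iota=\phi_\lieL$, one obtains $(\hnabla^\varE\varphi)(\kX)=\phi(\nabla_{\rho(\kX)})\varphi-\phi_\lieL\big(\rke(\omegadot(\kX))\big)\varphi$, which is exactly \eqref{eq-decompositioncovariantderivative} once the first term is read as $(\rho^\ast\phi(\nabla)\cdotaction\varphi)(\kX)$ and the second as $\big((\phi_\lieL(\rke)\varphi)\circ\omegadot\big)(\kX)$.

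For the gauge behaviour I would use that an infinitesimal $\xi\in\lieL$ acts by $\hConn\mapsto\hConn+[\hConn,\xi]$ (Prop.~\ref{genconn}), with $\iota(\xi)\in\lieA$ in the bracket, and by $\varphi\mapsto\varphi-\phi_\lieL(\xi)\varphi$. Expanding $\phi\big(\hConn+[\hConn,\xi]\big)\big(\varphi-\phi_\lieL(\xi)\varphi\big)$ to first order in $\xi$ and using that $\phi$ is a Lie algebroid morphism, so $\phi\big([\hConn(\kX),\iota(\xi)]\big)=\big[\phi(\hConn(\kX)),\phi_\lieL(\xi)\big]$, the two occurrences of $\phi(\hConn(\kX))(\phi_\lieL(\xi)\varphi)$ cancel, leaving $\hnabla^\varE\varphi\mapsto\hnabla^\varE\varphi-\phi_\lieL(\xi)\cdotaction(\hnabla^\varE\varphi)$, a homogeneous transformation of the same type as that of $\varphi$. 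The same computation with $\hConn$ replaced by $\Conn=\nabla\circ\rho$, whose gauge variation is again of the form $\Conn+[\Conn,\xi]$ (consistent with Theorem~\ref{inducedconnection}, which asserts that the action induced on $\omega$ is the one on ordinary connections), shows that $\rho^\ast\phi(\nabla)\cdotaction\varphi$ transforms homogeneously in the same way; since \eqref{eq-decompositioncovariantderivative} is an identity, the remaining term $(\phi_\lieL(\rke)\varphi)\circ\omegadot$ must then transform homogeneously as well. Alternatively one checks this directly from $\rke\mapsto\rke+[\rke,\xi]$, once more via the morphism property of $\phi$.

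No step here is deep; the difficulty is entirely bookkeeping — keeping track of which objects are held fixed (the background $\omegadot$ and the representation $\phi$) and which transform ($\homega$, and hence $\hConn$, $\omega$, $\nabla$, $\rke$, together with $\varphi$), and fixing the meaning of ``homogeneous transformation'' as a first-order variation $T\mapsto T-\phi_\lieL(\xi)\cdotaction T$. The only substantive ingredient is the interaction between the Lie algebroid morphism property of $\phi$ and the derivation property in $\kD(\varE)$, which is what produces the first-order cancellations; the point most worth double-checking is that the gauge variation of $\nabla$ induced through Theorem~\ref{inducedconnection} is precisely the one making $\Conn$ vary by $\Conn+[\Conn,\xi]$, so that the ``ordinary'' term in \eqref{eq-decompositioncovariantderivative} indeed transforms in the same homogeneous way as the full covariant derivative.
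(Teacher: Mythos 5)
Your proposal is correct and follows essentially the same route as the paper: the decomposition is obtained by inserting $\omega = \homega + \rke(\omegadot)$ into $\hConn(\kX)=\kX+\iota\circ\homega(\kX)$ and pushing through $\phi$ (the paper phrases this via $\phi(\kX)=\rho^\ast\phi(\nabla)-\phi_\lieL\circ\omega(\kX)$, which is the same computation), and the homogeneous gauge behaviour is checked term by term exactly as you do in your ``alternative'' direct verification from $\rke\mapsto\rke+[\rke,\xi]$ and $\varphi\mapsto\varphi-\phi_\lieL(\xi)\varphi$. The only addition on your side is the explicit verification that $\hnabla^\varE$ is a covariant derivative, which the paper simply imports from its reference.
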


This covariant derivative is the minimal coupling of the connection $\homega$ with matter fields in $\Gamma(\varE)$.

\begin{proof}
Using the ordinary connection $\omega$, one has $\kX = \nabla_{\rho(\kX)} - \iota \circ \omega(\kX)$, so that $\phi(\kX) = \rho^\ast \phi(\nabla) - \phi_\lieL \circ \omega(\kX)$. Inserting this relation in $\hnabla^\varE_\kX \varphi = \phi(\kX) \varphi + \phi_\lieL \circ \homega(\kX) \varphi$, and using \eqref{eq-decompositionconnectiononeform}, one gets the decomposition. 

Under an infinitesimal gauge transformation of $\xi \in \lieL$, $\rke$ transforms homogeneously, so that, under an infinitesimal gauge transformation: $(\phi_\lieL (\rke) \varphi) \circ \omegadot \mapsto (\phi_\lieL (\rke) \varphi) \circ \omegadot + ([\phi_\lieL (\rke), \phi_\lieL (\xi)] \varphi) \circ \omegadot - (\phi_\lieL (\rke) \phi_\lieL (\xi) \varphi) \circ \omegadot = (\phi_\lieL (\rke) \varphi) \circ \omegadot - \phi_\lieL (\xi) (\phi_\lieL (\rke) \varphi) \circ \omegadot$ at first order in $\xi$.

$\rho^\ast\phi(\nabla)\cdotaction \varphi$ is just the ordinary covariant derivative $\nabla^\varE \varphi$ induced by $\omega$ on $\varE$, so that it transforms homogeneously under a gauge transformation.
\end{proof}

\begin{definition}
An metric $h^\varE$ on the vector bundle $\varE$ is $\phi_\lieL$-compatible if 
\begin{equation*}
h^\varE(\phi_\lieL(\xi)\varphi_1, \varphi_2) + h^\varE(\varphi_1,\phi_\lieL(\xi)\varphi_2) = 0
\end{equation*} 
for any $\varphi_1,\,\varphi_2\in\Gamma(\varE)$ and any $\xi \in \lieL$.
\end{definition}

This definition generalizes the notion of Killing inner metric which corresponds to the particular case with $\varE = \varL$ and the adjoint representation of $\lieA$ on $\lieL$.

\begin{proposition}
Let $h^\varE$ be a $\phi_\lieL$-compatible metric on $\varE$. Then, for any $\varphi_1, \varphi_2 \in \Gamma(\varE)$, $h^\varE(\varphi_1, \varphi_2)$ is invariant under infinitesimal gauge transformations.
\end{proposition}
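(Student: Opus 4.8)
The plan is to compute the infinitesimal variation of $h^\varE(\varphi_1,\varphi_2)$ under a gauge transformation $\xi\in\lieL$ and show that the first-order term vanishes. Recall that $\varphi_i\mapsto\varphi_i^\xi=\varphi_i-\phi_\lieL(\xi)\varphi_i$, so that at first order in $\xi$ one has
\begin{equation*}
h^\varE(\varphi_1^\xi,\varphi_2^\xi)=h^\varE(\varphi_1,\varphi_2)-h^\varE(\phi_\lieL(\xi)\varphi_1,\varphi_2)-h^\varE(\varphi_1,\phi_\lieL(\xi)\varphi_2)+O(\xi^2).
\end{equation*}
The two linear-in-$\xi$ terms are exactly the combination that the $\phi_\lieL$-compatibility condition forces to vanish, so $h^\varE(\varphi_1^\xi,\varphi_2^\xi)=h^\varE(\varphi_1,\varphi_2)$ at first order, which is precisely the claimed gauge invariance.

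More concretely, I would first expand $h^\varE(\varphi_1-\phi_\lieL(\xi)\varphi_1,\varphi_2-\phi_\lieL(\xi)\varphi_2)$ using bilinearity of $h^\varE$ over $C^\infty(\varM)$, retaining only the terms of degree $\leq 1$ in $\xi$ since gauge invariance here means vanishing of the term linear in $\xi$ (as stated in the conventions at the start of Section~\ref{sec-gaugetheories}). Then I would invoke the definition of $\phi_\lieL$-compatibility with $\varphi_1$ and $\varphi_2$ in the slots to cancel the two linear terms directly. No appeal to integration or to the Hodge operator is needed, since $h^\varE(\varphi_1,\varphi_2)$ is already a function on $\varM$ and the statement is pointwise.

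There is essentially no obstacle here: this is a one-line computation once the definitions are unwound, and it is the exact analogue of the argument used in the proof that $\Act_\text{Gauge}$ is gauge invariant, where the Killing property of $h$ plays the role of $\phi_\lieL$-compatibility. The only point that deserves a word is that we work to first order in $\xi$ only, consistent with the paper's notion of infinitesimal gauge invariance; the quadratic term $h^\varE(\phi_\lieL(\xi)\varphi_1,\phi_\lieL(\xi)\varphi_2)$ need not vanish and is simply discarded.
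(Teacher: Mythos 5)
Your proposal is correct and is exactly the ``straightforward computation'' the paper alludes to: expand $h^\varE(\varphi_1^\xi,\varphi_2^\xi)$ to first order in $\xi$ and cancel the two linear terms using the $\phi_\lieL$-compatibility condition. Nothing more is needed, and your remark that the quadratic term is discarded is consistent with the paper's stated convention on infinitesimal gauge invariance.
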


\begin{proof}
This is a straightforward computation.
\end{proof}

Using the natural Hodge star operator defined on $\Omega^\grast(\lieA, \varE)$ (see remark at the end of \ref{subsec-hodgeoperators}), we can define the action functional
\begin{equation*}
\Act_\text{Matter}[\varphi,\homega] = \int_\lieA h^\varE(\hnabla^\varE \varphi, \hodgeast \hnabla^\varE \varphi)
\end{equation*}
This action functional is gauge invariant under infinitesimal gauge transformations in $\lieL$.

In a local trivialization of $\lieA$, this action functional can be written as
\begin{multline}
\label{eq-functionalactionmatterdeveloped}
\Act_\text{Matter}[\varphi,\homega] =
(-1)^n \int_\varM (m-1)! n!\; h^\varE(\hnabla^\varE_\mu \varphi, \hnabla^{\varE\,\mu} \varphi)
\\
+ (n-1)! m!\; h^\varE(\hnabla^\varE_a \varphi, \hnabla^{\varE\,a} \varphi)
\end{multline}
where
\begin{align*}
\hnabla^\varE_\mu &= \partial_\mu + A^a_\mu \phi_\lieL(E_a)
&
\hnabla^\varE_a &= - \rke^b_a \phi_\lieL(E_b)
\end{align*}

The first term in this functional action represents the square of the covariant derivative of $\varphi$ along the ordinary connection $\omega$. The second term represents a quadratic coupling of $\varphi$ with the fields $\rke^b_a$.

\subsection{Comments of these gauge theories}
\label{subsec-physicalremarks}

Using the decompositions \eqref{eq-decompositioncurvature} and \eqref{eq-decompositioncovariantderivative}, one gets the following structure for the total action functional $\Act[\varphi,\homega] = \Act_\text{Gauge}[\homega] + \Act_\text{Matter}[\varphi,\homega]$ constructed in \eqref{eq-functionalactiongauge}. 
\begin{subequations}
\label{eq-decaction}
\begin{align}
\Act[\varphi,\homega] = 
&\mathbin{\phantom{+}} \langle \rho^\ast \hF, \hodgeast \rho^\ast \hF \rangle 
\label{eq-decactionYM}
\\[2pt]
&+ \langle (\rho^\ast \caD \rke)\circ \omegadot, \hodgeast (\rho^\ast \caD \rke)\circ \omegadot \rangle 
\label{eq-decactionCovDerRKE}
\\[2pt]
&+ \langle R_\rke \circ \omegadot, \hodgeast R_\rke \circ \omegadot \rangle
\label{eq-decactionPot}
\\[2pt]
&+ \langle \rho^\ast\phi(\nabla)\cdotaction \varphi, \hodgeast \rho^\ast\phi(\nabla)\cdotaction \varphi \rangle
\label{eq-decactionCovDerPhi}
\\[2pt]
&+ \langle (\phi_\lieL (\rke) \varphi) \circ \omegadot, \hodgeast (\phi_\lieL (\rke) \varphi) \circ \omegadot \rangle
\label{eq-decactionMassPhi}
\end{align}
\end{subequations}
These terms are written locally in \eqref{eq-decompositionlocalactiongauge} and \eqref{eq-functionalactionmatterdeveloped}.

The gauge theories obtained in \eqref{eq-decaction} are of Yang-Mills-Higgs type. Indeed, the fields in the ordinary connection $\omega$ are Yang-Mills fields, and the $\rke$'s fields behave as Higgs fields, in the following way. The term \eqref{eq-decactionPot} vanishes when $\rke$ is a Lie algebra morphism, which can occur for instance when $\rke = \Id_\lieL$. Then, reporting this solution into \eqref{eq-decactionCovDerRKE} induces a mass term for the $A$'s fields. In a similar way, \eqref{eq-decactionMassPhi} induces a mass term for the matter fields $\varphi$. 

As a consequence, the gauge theories we have constructed here manifest one of the greatest strengths of non-commutative geometry, which is to produce a large class of natural Yang-Mills-Higgs type theories (see \cite{Mass42} for a recent review). This feature is a direct consequence of the short exact sequence \eqref{eq-sectransitiveliealgebroid} which corresponds, in non-commutative geometry, to the short exact sequence of groups $\xymatrix@1@C=15pt{{1} \ar[r] & {\Inn(\algA)} \ar[r] & {\Aut(\algA)} \ar[r] & {\Out(\algA)} \ar[r] & {1}}$ associated to any associative algebra $\algA$, where $\Aut(\algA)$ is the group of automorphisms of $\algA$, $\Inn(\algA)$ is its normal subgroup of inner automorphisms, and $\Out(\algA)$ is the quotient group of outer automorphisms. The infinitesimal version of this short exact sequence of groups, which involves the corresponding Lie algebras of derivations (see for instance \cite[eq. (4.10)]{Mass38}), was the key ingredient to show in \cite{Mass38} that connections in non-commutative geometry and connections on transitive Lie algebroids are related in some specific situations.

The precise study of the physical content of the present gauge theories is out of the scope of this paper. This will be elaborated in a forthcoming paper.

\section{Applications to specific Lie algebroids}
\label{sec-applicationAtiyah}

\subsection{Atiyah Lie algebroids}
\label{subsec-applicationAtiyah}

Let $G$ be a connected Lie group, and let $\kg$ be its Lie algebra. Let $\varP \xrightarrow{\pi} \varM$ be a $G$-principal bundle over $\varM$. The (transitive) Atiyah Lie algebroid of $\varP$ is defined by the short exact sequence
\begin{equation*}
\xymatrix@1{{\algzero} \ar[r] & {\Gamma_G(\varP, \kg)} \ar[r]^-{\iota} & {\Gamma_G(T\varP)} \ar[r]^-{\pi_\ast} & {\Gamma(T \varM)} \ar[r] & {\algzero}}
\end{equation*}
where 
\begin{align*}
\Gamma_G(T\varP) &= \{ \sfX \in \Gamma(T\varP) \, / \, \raR_{g\,\ast}\sfX = \sfX \text{ for all } g \in G \}
\\
\Gamma_G(\varP, \kg) &= \{ v : P \rightarrow \kg \, / \, v(p \cdotaction g) = \Ad_{g^{-1}} v(p) \text{ for all } g \in G \}.
\end{align*}
Here $\raR_g(p) = p \cdotaction g$ denotes the right action of $G$ on $\varP$ and $\iota$ is given by $\iota(v)(p) = \left( \frac{d}{dt} p \cdotaction e^{-t v(p)} \right)_{|t=0}$.

In order to get compact notations, we denote by $(\Omega^\grast_\lie(\varP, \kg), \hd)$ the space of forms on this Lie algebroid $\Gamma_G(T\varP)$ with values in its kernel $\Gamma_G(\varP, \kg)$ and by $(\Omega^\grast_\lie(\varP), \hd_\lie)$ the space of forms with values in $C^\infty(\varM)$.

The local description of the Lie algebroid $\Gamma_G(T\varP)$ is obtained using local trivializations of the principal fibre bundle $\varP$. Here we complete the exposition given in \cite{Mass38} in order to make apparent some relations used in forthcoming computations.

Let $\{(U_i, \varphi_i)\}$ be a system of trivializations of $\varP$ where $\varphi_i : \varP_{|U_i} \xrightarrow{\simeq} U_i \times G$, and denote by $s_i : U_i \rightarrow \varP$, with $s_i(x) = \varphi_i^{-1}(x,e)$, the associated local sections. One has $s_j(x) = s_i(x) g_{ij}(x)$ on any $U_{ij} \neq \ensvide$ where $g_{ij} : U_{ij} \rightarrow G$ are the associated transition functions. The isomorphism $\Psi_i : \Gamma(U_i \times \kg) \xrightarrow{\simeq} \Gamma_G(\varP_{|U_i},\kg)$ is given by $\Psi_i^{-1}(v) = s_i^\ast v$ for any $v \in \Gamma_G(\varP, \kg)$. With $p=s_i(x)\cdotaction g$, one has $\Psi_i(\eta^i)(p) = \Ad_{g^{-1}} \eta^i(x)$ for any $\eta^i \in \Gamma(U_i \times \kg)$ and one has $\nabla^{0, i}_{X\,|p} = T_{s_i(x)} \raR_{g} T_x s_i X_{| x} \in T_p \varP$. Any $\sfX \in \Gamma_G(T\varP)$ is trivialized over $U_i$ as $X\oplus \gamma^i$ where $X = \pi_\ast(\sfX)$ and $\gamma^i : U_i \rightarrow \kg$ represents the vertical part of $\sfX$ on $\varP$. More concretely, one has $\sfX = \nabla^{0, i}_{X} - \hgamma^{i\,\varP} \in \Gamma_G(T\varP)$ with $\hgamma^i(p) = \Psi_i(\gamma^i)(p) = \Ad_{g^{-1}} \gamma^i(x)$ and $\hgamma^{i\,\varP} = -\iota (\hgamma^i)$. 

On $U_{ij} \neq \ensvide$, a straightforward computation shows that
\begin{equation}
\label{eq-recollementgammai}
\gamma^i = g_{ij} \gamma^j g_{ij}^{-1} + g_{ij} \dd g_{ij}^{-1}(X),
\end{equation}
which gives
\begin{align}
\label{eq-alphachiatiyah}
\alpha_{j}^{i}(\gamma) &= g_{ij} \gamma g_{ij}^{-1}
&
\chi_{ij}(X) &= g_{ij} \dd g_{ij}^{-1}(X).
\end{align}

We will use the following result obtained in \cite{Mass38}. The space $\kg_\equ = \{ \xi^\varP \oplus \xi \ / \ \xi \in \kg \}$ is a sub Lie algebra of $\tla(\varP, \kg)$, where $\xi^\varP \in \Gamma(T\varP)$ is the fundamental vector field associated to $\xi \in \kg$ for the right action of $G$ on $\varP$. $\kg_\equ$ defines a Cartan operation on the differential complex $(\Omega^\grast_\tla(\varP,\kg), \hd_\tla)$. Denote by $(\Omega^\grast_\tla(\varP,\kg)_{\kg_\equ}, \hd_\tla)$ the differential graded subcomplex of basic elements. 

\begin{proposition}[\cite{Mass38}]
\label{prop-identificationdifferentialcalculusAtiyah}
Let $G$ be a connected and simply connected Lie group. Then $(\Omega^\grast_\lie(\varP, \kg), \hd)$ and $(\Omega^\grast_\tla(\varP,\kg)_{\kg_\equ}, \hd_\tla)$ are isomorphic as differential graded complexes. The same is true for $\kg_\equ$-basic forms in $\Omega^\grast_\tla(\varP)$ and $\Omega^\grast_\lie(\varP)$.
\end{proposition}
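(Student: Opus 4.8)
The statement to establish is Proposition~\ref{prop-identificationdifferentialcalculusAtiyah}, namely that for $G$ connected and simply connected, the differential graded complex $(\Omega^\grast_\lie(\varP, \kg), \hd)$ of forms on the Atiyah Lie algebroid $\Gamma_G(T\varP)$ with values in its kernel is isomorphic to the subcomplex $(\Omega^\grast_\tla(\varP,\kg)_{\kg_\equ}, \hd_\tla)$ of $\kg_\equ$-basic elements in the trivial Lie algebroid over $\varP$, and similarly for the scalar-valued versions. Since this is cited as "\cite{Mass38}", I would present the argument as a reconstruction. The plan is to exploit the fact that $\Gamma_G(T\varP)$ arises as a quotient (by the right $G$-action) of the trivial Lie algebroid $\tla(\varP,\kg) = \Gamma(T\varP \oplus (\varP\times\kg))$ over the total space $\varP$, and to identify "descending to the quotient" with "being $\kg_\equ$-basic".

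First I would set up the comparison map. A form $\omega \in \Omega^q_\lie(\varP,\kg)$ is a $C^\infty(\varM)$-multilinear alternating map $(\Gamma_G(T\varP))^q \to \Gamma_G(\varP,\kg)$. Given that $\Gamma_G(T\varP)$ is the $G$-invariant vector fields on $\varP$ and every vector field on $\varP$ can, after averaging considerations aside, be compared to invariant ones, I would pull $\omega$ back along the inclusion-type data to get an element of $\Omega^q_\tla(\varP,\kg)$, i.e. a form on the full trivial algebroid over $\varP$. Concretely: an element $\sfX \in \Gamma_G(T\varP)$ is, pointwise on $\varP$, a section of $T\varP$; writing it via a local trivialization as $X \oplus \gamma$ with $X \in \Gamma(T\varP)$ the image under $\pi_\ast$ and the kernel component, one obtains a map $\Gamma_G(T\varP) \hookrightarrow \tla(\varP,\kg)$ whose image is exactly the space of $G$-invariant sections. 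The identification $\Gamma_G(\varP,\kg) \cong$ (the invariant sections of $\varP\times\kg$) on the value side is the $C^\infty$-module isomorphism already implicit in the construction. Then I would check: (i) the assignment $\omega \mapsto \widehat\omega$ lands in $\Omega^q_\tla(\varP,\kg)$, (ii) its image consists precisely of the $\kg_\equ$-basic forms — i.e. those annihilated by the interior products $i_{\xi^\varP \oplus \xi}$ and the Lie derivatives $L_{\xi^\varP \oplus \xi}$ for all $\xi \in \kg$ — and (iii) the map intertwines $\hd$ with $\hd_\tla$. Point (iii) is formal once (i) and (ii) are in place, since $\hd_\tla$ restricted to basic forms is the differential of the quotient complex, which is $\hd$ by construction of the Lie algebroid differential.

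The heart of the matter, and the place where connectedness and simple connectedness of $G$ enter, is \emph{surjectivity}: showing that every $\kg_\equ$-basic form on $\tla(\varP,\kg)$ actually comes from a form on $\Gamma_G(T\varP)$, equivalently that basicity under the \emph{Lie algebra} $\kg_\equ$ is the same as $G$-invariance (plus horizontality) under the \emph{group} $G$. Being killed by all the Lie derivatives $L_{\xi^\varP\oplus\xi}$ gives invariance under the connected component of the identity of $G$; since $G$ is connected this is all of $G$, and being killed by all interior products $i_{\xi^\varP \oplus \xi}$ gives the horizontality needed so that the form descends through the short exact sequence defining the Atiyah algebroid. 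I expect the main obstacle to be precisely this passage from infinitesimal to global invariance and the careful verification that the resulting descended multilinear map is well-defined on $\Gamma_G(T\varP)$-inputs (independence of the choice of invariant lifts) and $C^\infty(\varM)$-linear rather than merely $C^\infty(\varP)$-linear; the role of simple connectedness is to guarantee that the Cartan-operation/Lie-algebra cohomology computation identifying basic forms with forms on the quotient has no higher obstructions, i.e. that $H^\grast(\kg) $-type corrections do not appear. For the scalar-valued statement the argument is identical, replacing $\Gamma_G(\varP,\kg)$ by $C^\infty(\varM)$ and the adjoint value bundle by the trivial one; no new ideas are needed.
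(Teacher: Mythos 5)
Your overall strategy (identify the $\kg_\equ$-basic subcomplex with forms that descend through the quotient by the $G$-action) is the right one, and it is the spirit of the construction in \cite{Mass38}; note that the paper itself offers no proof of this proposition --- it is imported from \cite{Mass38}, and the only place the isomorphism is made explicit is in the proof of Lemma~\ref{lemma-pullbackbasicforms}, where the map $\lambda$ is recalled in the direction opposite to yours, built on the short exact sequence $\algzero \to \caZ \to \caN \to \Gamma_G(T\varP) \to \algzero$ and on the fact that $\caN = \Gamma_G(T\varP) \oplus \caZ$ generates $\tla(\varP,\kg)$ as a $C^\infty(\varP)$-module. The genuine gap in your proposal is that your comparison map is never actually constructed. ``Pulling $\omega$ back along the inclusion $\Gamma_G(T\varP) \hookrightarrow \tla(\varP,\kg)$'' restricts forms from the big space to the small one; it cannot produce, from a $C^\infty(\varM)$-multilinear $\omega$ defined only on the submodule $\Gamma_G(T\varP)\oplus 0$, a $C^\infty(\varP)$-multilinear form on all of $\tla(\varP,\kg) = \Gamma(T\varP)\oplus(C^\infty(\varP)\otimes\kg)$. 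The extension does exist, but only by virtue of the generation statement above: one decrees that the extension vanishes on $\caZ$, agrees with $\omega$ on $\Gamma_G(T\varP)$, and is extended $C^\infty(\varP)$-multilinearly; the well-definedness of this extension --- its compatibility with all relations among $C^\infty(\varP)$-combinations of elements of $\caN$ --- is the actual technical content of the proposition and is absent from your plan. The same fact is what makes a basic form determined by its values on $\caN$, so the ``surjectivity'' step you single out as the heart of the matter also tacitly relies on it.

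Two further inaccuracies. First, ``writing it via a local trivialization as $X\oplus\gamma$ with $X\in\Gamma(T\varP)$ the image under $\pi_\ast$'' conflates the two trivial Lie algebroids in play: the local models $\tla(U_i,\kg)$ over open subsets of the \emph{base} (where $X = \pi_\ast\sfX$ lives in $\Gamma(TU_i)$, not in $\Gamma(T\varP)$) and the global object $\tla(\varP,\kg)$ over the \emph{total space}, into which $\Gamma_G(T\varP)$ embeds simply as $\sfX \mapsto \sfX\oplus 0$. Second, the role of the hypotheses on $G$ is not to suppress ``$H^\grast(\kg)$-type corrections'': as in the proof of Theorem~\ref{thm-relationsdeRhamTLAAtiyah}, connectedness and simple connectedness are invoked to upgrade the infinitesimal $\kg_\equ$-basicity conditions (vanishing of the Lie derivatives and interior products along $\xi^\varP\oplus\xi$) to genuine $G$-equivariance and horizontality, so that the values of the descended form land in $\Gamma_G(\varP,\kg)$ rather than merely in $C^\infty(\varP)\otimes\kg$. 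Your checklist items (i)--(iii) are the right things to verify, but until the map itself is defined there is nothing on which to verify them.
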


From now on we suppose that $G$ is connected and simply connected, so that the identifications of Prop.~\ref{prop-identificationdifferentialcalculusAtiyah} apply.

We associate to a form $\omega \in \Omega^\grast_\lie(\varP, \kg)$ its family of local forms $\{\omega_{\loc}^i\}_{i \in I}$ with $\omega_{\loc}^i \in \Omega^\grast_\tla(U_i,\kg)$ satisfying \eqref{eq-relationtrivforms}. Let $\homega \in \Omega^\grast_\tla(\varP,\kg)_{\kg_\equ}$ be the $\kg_\equ$-basic form corresponding to $\omega \in \Omega^\grast_\lie(\varP, \kg)$ in the identification of Prop.~\ref{prop-identificationdifferentialcalculusAtiyah}.

\begin{lemma}
\label{lemma-pullbackbasicforms}
One has $\omega_{\loc}^i = s_i^\ast \homega \in \Omega^\grast_\tla(U_i,\kg)$.
\end{lemma}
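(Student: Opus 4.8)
The plan is to verify the identity of $\tla$-forms pointwise. Both $\omega_{\loc}^i$ and $s_i^\ast\homega$ lie in $\Omega^\grast_\tla(U_i,\kg)$, so it is enough to evaluate them on an arbitrary family $X_1\oplus\gamma_1,\dots,X_q\oplus\gamma_q\in\tla(U_i,\kg)$ at an arbitrary point $x\in U_i$. On the left, Definition~\ref{def-traivializationofforms} together with the explicit form $\Psi_i^{-1}(v)=s_i^\ast v$ of the local isomorphism gives
\[
\omega_{\loc}^i(X_1\oplus\gamma_1,\dots,X_q\oplus\gamma_q)(x)=\omega\big(S_i(X_1\oplus\gamma_1),\dots,S_i(X_q\oplus\gamma_q)\big)(s_i(x)),
\]
with $S_i(X_k\oplus\gamma_k)\in\Gamma_G(T\varP_{|U_i})$. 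On the right, writing $\Omega^\grast_\tla(\varP,\kg)$ as the total complex of $\Omega^\grast(\varP)\otimes\exter^\grast\kg^\ast\otimes\kg$, the map $s_i^\ast$ acts as $s_i^\ast\otimes\Id$ and is the identity on the $\exter^\grast\kg^\ast\otimes\kg$ factor; equivalently it is the fibrewise pull-back along $s_i$, so
\[
(s_i^\ast\homega)(X_1\oplus\gamma_1,\dots,X_q\oplus\gamma_q)(x)=\homega_{s_i(x)}\big(T_xs_i(X_1)\oplus\gamma_1(x),\dots,T_xs_i(X_q)\oplus\gamma_q(x)\big).
\]

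Next I would invoke the concrete shape of the identification of Proposition~\ref{prop-identificationdifferentialcalculusAtiyah}: the $\kg_\equ$-basic form $\homega$ corresponding to $\omega$ satisfies $\homega_p(\sfX_1\oplus 0,\dots,\sfX_q\oplus 0)=\omega(\sfX_1,\dots,\sfX_q)(p)$ for all $\sfX_k\in\Gamma_G(T\varP)$ and $p\in\varP$ (at each $p$ the subspace of elements of $T_p\varP\oplus\kg$ with vanishing $\kg$-component is complementary to $\kg_\equ$, so a $\kg_\equ$-horizontal form is determined there). Taking $\sfX_k=S_i(X_k\oplus\gamma_k)$ and $p=s_i(x)$, the left-hand side becomes $\homega_{s_i(x)}\big(S_i(X_1\oplus\gamma_1)_{|s_i(x)}\oplus 0,\dots,S_i(X_q\oplus\gamma_q)_{|s_i(x)}\oplus 0\big)$, and it remains to match this with the right-hand side above. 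Reading the local formulas recalled before the lemma at $p=s_i(x)$ (where $g=e$) gives $\nabla^{0,i}_{X_k|s_i(x)}=T_xs_i(X_k)$ and $\iota(\Psi_i\gamma_k)_{|s_i(x)}=-\gamma_k(x)^\varP_{|s_i(x)}$, whence
\[
S_i(X_k\oplus\gamma_k)_{|s_i(x)}\oplus 0=\big(T_xs_i(X_k)\oplus\gamma_k(x)\big)-w_k,\qquad w_k=\gamma_k(x)^\varP_{|s_i(x)}\oplus\gamma_k(x),
\]
where $w_k$ is the value at $s_i(x)$ of the element $\gamma_k(x)^\varP\oplus\gamma_k(x)\in\kg_\equ$. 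Since $\homega$ is $\kg_\equ$-horizontal, $i_{w_k}\homega_{s_i(x)}=0$ for each $k$; expanding $\homega_{s_i(x)}$ by multilinearity and antisymmetry, every term in which some argument carries a $w_k$ vanishes, leaving exactly $\homega_{s_i(x)}\big(T_xs_i(X_1)\oplus\gamma_1(x),\dots,T_xs_i(X_q)\oplus\gamma_q(x)\big)$. This matches the right-hand side, so $\omega_{\loc}^i=s_i^\ast\homega$.

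The argument is just an unwinding of definitions; the two points that need care, and that I would double-check against the conventions of \cite{Mass38}, are (i) the precise meaning of $s_i^\ast$ on forms on the trivial algebroid $\tla(\varP,\kg)$, and (ii) the normalizations and signs in $\iota$, $\nabla^{0,i}$ and in the fundamental vector fields generating $\kg_\equ$, so that the correction terms $w_k$ land exactly in $\kg_\equ$, where the horizontality of $\homega$ can be invoked.
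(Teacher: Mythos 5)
Your argument is correct and follows essentially the same route as the paper: both unwind the identification $\lambda$ of Prop.~\ref{prop-identificationdifferentialcalculusAtiyah} pointwise at $p=s_i(x)$, use the explicit formulas for $\nabla^{0,i}$ and $\Psi_i$ along the section $s_i$, and reduce the evaluation of $\homega_{s_i(x)}$ to the arguments $T_xs_i(X_k)\oplus\gamma_k(x)$. The only cosmetic difference is that the paper exploits the freedom of choosing the lift in $\caN$ (taking $\nabla^{0,i}_{X}\oplus\hgamma^{i}$ directly), whereas you start from the lift $\sfX\oplus 0$ and then invoke $\kg_\equ$-horizontality explicitly to shift by the correction terms $w_k$ --- which is the same mechanism that makes the lift-independence hold.
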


\begin{proof}
Let us first recall some key features of the identification of the differential calculus $(\Omega^\grast_\lie(\varP, \kg), \hd)$ with $(\Omega^\grast_\tla(\varP,\kg)_{\kg_\equ}, \hd_\tla)$. In \cite{Mass38} a short exact sequence of Lie algebras and $C^\infty(\varM)$-modules
\begin{equation*}
\xymatrix@1{{\algzero} \ar[r] & {\caZ} \ar[r] & {\caN} \ar[r]^-{\rho_\varP} & {\Gamma_G(T\varP)} \ar[r] & {\algzero}}
\end{equation*}
is used where $\caZ$ is defined to be the $C^\infty(\varP)$-module generated by $\kg_\equ$ and $\caN = \Gamma_G(T\varP) \oplus \caZ \subset \tla(\varP, \kg)$. It is shown that $\caN$ generates the space $\tla(\varP, \kg)$ as a $C^\infty(\varP)$-module.
The isomorphism $\lambda : \Omega^\grast_\tla(\varP,\kg)_{\kg_\equ} \rightarrow \Omega^\grast_\lie(\varP, \kg)$ is explicitly defined as follows. For any $\homega \in \Omega^r_\tla(\varP,\kg)_{\kg_\equ}$, for any $\sfX_1, \dots, \sfX_r \in \Gamma_G(T\varP)$, denote by $\hsfX_1, \dots, \hsfX_r \in \caN$ any family such that $\rho_\varP(\hsfX_i) = \sfX_i$, then the map $p \mapsto \lambda(\homega)(\sfX_1, \dots, \sfX_r)(p) = \homega(\hsfX_1, \dots, \hsfX_r)(p) \in \kg$ is a $G$-equivariant map.

In order to simplify the exposition, we prove the lemma for $1$-forms. The algebraic machinery is the same for $p$-forms. For any $\sfX \in \Gamma_G(T\varP)$ and any $x \in U_i$, by definition one has
\begin{equation*}
\omega_{\loc}^i (X \oplus \gamma^i)(x) 
= \Psi_{i}^{-1} (\omega(\sfX))(x) 
= \omega(\sfX)(s_i(x))
= \homega_{s_i(x)}( \hsfX_{|s_i(x)})
\end{equation*}
for any $\hsfX \in \caN$ such that $\rho_\varP(\hsfX) = \sfX$. On $\varP_{|U_i}$, let us take $\hsfX = \nabla^{0, i}_{X} - \hgamma^{i\,\varP} + (\hgamma^{i\,\varP} \oplus \hgamma^{i}) = \nabla^{0, i}_{X} \oplus \hgamma^{i}$ where $\hgamma^{i\,\varP} \oplus \hgamma^{i} \in \caZ$ \cite{Mass38}. Then $\hsfX_{|s_i(x)} = (s_{i\, \ast} X)_{|x} \oplus \gamma^i(x)$ by construction of $\nabla^{0, i}_{X}$ and $\hgamma^i$. This gives $\omega_{\loc}^i (X \oplus \gamma^i) = (s_i^\ast \homega)(X \oplus \gamma^i)$.
\end{proof}

We can summarize the identifications between these differential calculi in the following diagram:
\begin{equation*}
\xymatrix@R=10pt@C=10pt{
 & & & {\text{\parbox{5em}{\centering Trivial Lie\\ Algebroids}}} & \\
{\Omega^\grast_\lie(\varP, \kg)} \ar@{^{(}->}[rrr]_-{\text{inclusion}} \ar[dddrrr]_(0.45)*!/^-3pt/{\text{\scriptsize trivialization}} 
& & & 
{\Omega^\grast_\tla(\varP,\kg)_{\kg_\equ}} \ar[ddd]^-{\{ s_i^\ast\}} \ar@/_1.2pc/[lll]_-{\lambda}
& {\text{Global description}} \\
& & & & \\
& & & & \\
& & & {\prod_{i\in I} \Omega^\grast_\tla(U_i,\kg)} & {\text{Local description}}
}
\end{equation*}

\medskip
From now on, we suppose that $\kg$ is semi-simple, so that its Killing form $k$ is non degenerate.

On a trivialization of $\Gamma_G(T\varP)$ associated to a trivialization $(U_i, \varphi_i)$ of $\varP$, we define $h^i_\loc(\gamma,\eta) = k(\gamma, \eta)$ for any $\gamma, \eta : U_i \rightarrow \kg$. Then, using the invariance of $k$ under the adjoint action of $G$ on $\kg$, we get that the $h^i_\loc$'s define a global metric $h$ on $\lieL$ (see \eqref{eq-relationtrivh} and \eqref{eq-alphachiatiyah}).

Let us introduce a fixed connection $\nabla$ on the Lie algebroid $\Gamma_G(T\varP)$, \textsl{i.e.} an ordinary connection on the principal fibre bundle $\varP$. The mixed basis on any trivialization of $\Gamma_G(T\varP)$ will be defined relative to this connection. 

$G$ being connected, the vector bundle $\caL = \varP \times_\Ad \kg$ is orientable, so that $\Gamma_G(T\varP)$ is inner orientable.

\begin{theorem}
\label{thm-atiyah-commutationofdifferentials}
Suppose that the inner metric $h$ is such that $\sqrt{|h_\loc|}$ is locally constant in any local trivialization of $\Gamma_G(T\varP)$, and that the Lie algebra $\kg$ is unimodular. Then for any $\omega \in \Omega^\grast_\lie(\varP)$ one has
\begin{equation*}
\int_\inner \hd_\lie \omega = \dd \int_\inner \omega
\end{equation*}
\end{theorem}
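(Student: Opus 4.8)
The plan is to work locally in a Lie algebroid atlas adapted to a system of trivializations $\{(U_i,\varphi_i)\}$ of $\varP$, reduce the identity to a statement on $\Omega^\grast_\tla(U_i)$, and there exploit the explicit bidegree decomposition $\hd_\tla = \dd + \ds$ together with the fact that $\int_\inner$ extracts the coefficient of $\sqrt{|h_\loc|}\,\theta^1\ordwedge\cdots\ordwedge\theta^n$ (equivalently, by Def.~\ref{def-innerintegration}, the component $\omega^\maxinner_{\loc}$ of maximal inner degree). Since $\int_\inner$ depends only on the inner metric $h$ and not on the choice of connection, I may freely use the $\theta^a$-basis rather than the mixed basis $\lfc^a$ for the computation.

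First I would fix $\omega\in\Omega^\grast_\lie(\varP)$, pass to its family of trivializations $\{\omega_{\loc}^i\}$ with $\omega_{\loc}^i\in\Omega^\grast_\tla(U_i)$, and write $\omega_{\loc}^i = \omega^{R}_i + \omega^{(n)}_i\,\theta^1\ordwedge\cdots\ordwedge\theta^n$ where $\omega^{R}_i$ has inner degree $\leq n-1$ and $\omega^{(n)}_i\in\Omega^\grast(U_i)$; then $(\int_\inner\omega)_{|U_i} = \sqrt{|h^i_\loc|}\,\omega^{(n)}_i$ up to the sign convention $(-1)^n$, and the gluing $\alpha^j_i(\sqrt{|h^i_\loc|}\,\omega^{(n)}_i)$ agrees across $U_{ij}$ by Prop.~\ref{prop-globalinnerform}, which is exactly the statement that $\int_\inner\omega\in\Omega^{\grast-n}(\varM)$. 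Now apply $\hd_\tla = \dd+\ds$ to $\omega_{\loc}^i$: the term $\dd(\omega^{(n)}_i\,\theta^1\ordwedge\cdots\ordwedge\theta^n) = (\dd\omega^{(n)}_i)\,\theta^1\ordwedge\cdots\ordwedge\theta^n$ is the only contribution of full inner degree $n$ coming from the top piece (since $\dd$ does not touch the $\theta^a$'s), and $\ds$ raises inner degree by one so $\ds\omega^{R}_i$ can in principle reach inner degree $n$ as well. The key computation is to show that the inner-top part of $\ds\omega^{R}_i$ vanishes: writing $\omega^{R}_i$ in the $\theta$-basis, the coefficient of $\theta^1\ordwedge\cdots\ordwedge\theta^n$ in $\ds$ applied to a term of inner degree $n-1$ is, up to sign, a sum $\sum_a (\text{Chevalley-Eilenberg term})$ which contracts against the structure constants $C^a_{ab}$; this is precisely $\tr(\ad_{E_b})$, so it vanishes iff $\kg$ is \emph{unimodular}. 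Hence under the unimodularity hypothesis, $\int_\inner\hd_\lie\omega$ has local representative $(-1)^n\sqrt{|h^i_\loc|}\,\dd\omega^{(n)}_i$.

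It then remains to pull $\sqrt{|h^i_\loc|}$ through $\dd$: since $h$ is assumed such that $\sqrt{|h_\loc|}$ is locally constant, $\dd(\sqrt{|h^i_\loc|}\,\omega^{(n)}_i) = \sqrt{|h^i_\loc|}\,\dd\omega^{(n)}_i$, so the local representative of $\int_\inner\hd_\lie\omega$ equals that of $\dd\int_\inner\omega$; by the gluing relations (Prop.~\ref{prop-relationtrivformslocaldifferentialcommutes} and Prop.~\ref{prop-globalinnerform}) both sides are well-defined global forms on $\varM$ agreeing on each $U_i$, hence equal. I would also need to remark that $\hd_\lie$ on $\Omega^\grast_\lie(\varP)$ corresponds under Prop.~\ref{prop-identificationdifferentialcalculusAtiyah} to $\hd_\tla$ on $\kg_\equ$-basic forms, and that restriction to the $\kg_\equ$-basic subcomplex is compatible with the trivialization $\{s_i^\ast\}$ of Lemma~\ref{lemma-pullbackbasicforms}, so that the local computation above is legitimate; the basicness is harmless here because it only restricts which $\omega$ occur, not the formulas for $\hd_\tla$ and $\int_\inner$.

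\textbf{Main obstacle.} The only nontrivial point is the vanishing of the inner-top component of $\ds\omega^R_i$, i.e.\ identifying the obstruction with the trace of the adjoint representation and hence with unimodularity of $\kg$; everything else (the locally-constant-density hypothesis letting $\sqrt{|h_\loc|}$ commute with $\dd$, and the patching) is bookkeeping using results already established. One subtlety to handle carefully: $\omega^R_i$ may contain terms of several different inner degrees $\leq n-1$, but only the degree-$(n-1)$ part can contribute to inner degree $n$ under the single application of $\ds$, so the analysis localizes to that piece; and within it one must check that cross terms mixing $\dd$-derivatives of lower-inner-degree coefficients with $\ds$ cannot produce a $\theta^1\ordwedge\cdots\ordwedge\theta^n$ factor — they cannot, since $\dd$ preserves inner degree. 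Thus the combinatorial heart is a one-line Chevalley-Eilenberg computation yielding $\sum_a C^a_{ab}=0$.
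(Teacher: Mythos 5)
Your proposal is correct and follows essentially the same route as the paper's proof: decompose the local form into its maximal-inner-degree piece plus a remainder, observe that the only potential contribution of $\ds$ applied to the inner-degree-$(n-1)$ part of the remainder is proportional to $\tr(\ad_{E_a})$ and hence vanishes by unimodularity, and use the locally constant hypothesis to pass $\sqrt{|h_\loc|}$ through $\dd$. The only cosmetic difference is that you compute in the $\theta^a$-basis while the paper collects the coefficient of $\sqrt{|h_\loc|}\,\lfc^1\ordwedge\cdots\ordwedge\lfc^n$ in the mixed basis, a choice the paper itself justifies as equivalent since $\int_\inner$ depends only on $h$.
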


A Lie algebra is unimodular in our sense if the trace of its adjoint action vanishes. When the group is finite dimensional and connected, this definition of unimodularity is equivalent to the one defined on $G$ using Haar measures \cite{Bour72b}. There are well-known sufficient conditions for a group to be unimodular: compact, abelian, connected reductive or nilpotent etc.

This Theorem is similar to Theorem~1.2 in \cite{MR1908998} or Theorem~5.2.2 in \cite{Kuba96a}. A key condition required in these theorems is that the cross-section $\varepsilon \in \exter^n \lieL$ which defines the integral along the fibre be invariant oriented, which means that it is invariant under the $\exter^n \ad$ representation of $\lieA$ on $\exter^n \lieL$. Using the definition of $\varepsilon$ given by \eqref{eq-epsilonkubarski} in our context, this is equivalent to both $|h_\loc|$ being locally constant and the Lie algebra being unimodular. In the following proof, we will only use these two conditions. 

For instance, the inner metric $h$ defined above by the Killing metric $k$ is such that $\sqrt{|h_\loc|}$ is locally constant in any local trivialization of $\Gamma_G(T\varP)$.

\begin{proof}
Denote by $\ds$ the Chevalley-Eilenberg differential on $\exter^\grast \kg^\ast$, which satisfies $\ds \theta^c = - \frac{1}{2} C^{c}_{a b} \theta^a \ordwedge \theta^b$. Then one has
\begin{equation}
\label{eq-soninnerlessthatn}
\ds (\theta^{a_1} \ordwedge \cdots \ordwedge \theta^{a_{n-1}}) = (-1)^{n} \tr (C_{a_n})\; \theta^{a_1} \ordwedge \cdots \ordwedge \theta^{a_n}
\end{equation}
where $C_{a_n}$ is the matrix $(C_{a_n a}^{b})_{a,b}$ and $a_n$ is the missing index in the fixed multi-index $(a_1, \dots, a_{n-1})$ with $a_k \neq a_\ell$ for $k \neq \ell$.

Using this result, let us now collect the factor of $\sqrt{|h_\loc|}\; \lfc^{1} \ordwedge \cdots \ordwedge \lfc^{n}$ in $(\dd + \ds) \omega_{\loc}$ when one uses the decomposition 
\begin{equation*}
\omega_{\loc} = (-1)^n \omega^\maxinner_{\loc} \sqrt{|h_\loc|}\; \lfc^{1} \ordwedge \cdots \ordwedge \lfc^{n} + \omega^R
\end{equation*}
with $\omega^\maxinner_{\loc} \in \Omega^\grast(U)$ and $\omega^R$ containing only terms of degrees $< n$ in the $\lfc^a$'s.

$\dd \sqrt{|h_\loc|} \lfc^{1} \ordwedge \cdots \ordwedge \lfc^{n}$ and $\dd \omega^R$ do not contribute because $\sqrt{|h_\loc|}$ is locally constant and the degrees do not match for other terms. When $\kg$ is unimodular, \eqref{eq-soninnerlessthatn} implies that $\ds \sqrt{|h_\loc|} \lfc^{1} \ordwedge \cdots \ordwedge \lfc^{n}$ and $\ds \omega^R$ do not contribute. The remaining term is then $\dd \omega^\maxinner_{\loc}$, which is globally $\dd \int_\inner \omega$.
\end{proof}

From now on, we suppose that $G$ is connected, simply connected, semi-simple, unimodular and of dimension $n$. In other words, $G$ is the connected and simply connected group associated to a semi-simple unimodular $n$-dimensional Lie algebra $\kg$. As before, $k$ is the Killing metric on $\kg$.

\begin{lemma}
\label{lem-pullbackvolumeform}
The $\kg_\equ$-basic form in $\Omega^\grast_\tla(\varP)$ corresponding to the volume form $\omega_{h,\lfc} \in \Omega^\grast(\lieA)$ is
\begin{equation*}
\homega_{k,\nabla} = (-1)^n \sqrt{|k|}\; (\omega_\nabla^{1} - \theta^1) \ordwedge \cdots \ordwedge (\omega_\nabla^{n} - \theta^n)
\end{equation*}
where $\omega_\nabla = \omega_\nabla^{a} \otimes E_a \in \Omega^\grast(\varP) \otimes \kg$ is the (ordinary) connection $1$-form on $\varP$ associated to $\nabla$.
\end{lemma}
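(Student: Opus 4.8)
The plan is to verify the claimed identity locally, using the tools already assembled: the trivialization map for forms, the local expression of the volume form $\omega_{h,\lfc}$ from Prop.~\ref{prop-globalinnerform}, the identification of $\kg_\equ$-basic forms with global forms on the Atiyah Lie algebroid (Prop.~\ref{prop-identificationdifferentialcalculusAtiyah}), and Lemma~\ref{lemma-pullbackbasicforms} which says that the trivialization $\omega_\loc^i$ of a global form is exactly $s_i^\ast$ of the corresponding $\kg_\equ$-basic form. So it suffices to exhibit a $\kg_\equ$-basic form $\homega_{k,\nabla}$ in $\Omega^\grast_\tla(\varP)$ whose pullback $s_i^\ast \homega_{k,\nabla}$ agrees on each $U_i$ with the local expression of $\omega_{h,\lfc}$, namely $(-1)^n \sqrt{|h_\loc^i|}\; \lfc_i^1 \ordwedge \cdots \ordwedge \lfc_i^n$.

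First I would compute $s_i^\ast$ of the proposed form. By definition $\lfc_i^a = A_i^a - \theta^a$ where $A_i = s_i^\ast \omega_\nabla$ is the local connection form; equivalently, in the trivial-Lie-algebroid description over $\varP$ itself, the connection $1$-form of $\nabla$ on the Atiyah algebroid has local expression $\omega_\nabla^a - \theta^a$ on $\varP$, and pulling back along $s_i$ turns $\omega_\nabla$ into $A_i$ while leaving the $\theta^a$ untouched (the $\theta^a$ are ``inner'' and do not feel the base pullback). Since $k$ is the Killing metric and $h_\loc^i = k$ in every trivialization (by construction, using $\Ad$-invariance of $k$), $\sqrt{|k|}$ is a genuine constant equal to $\sqrt{|h_\loc^i|}$. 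Hence $s_i^\ast \homega_{k,\nabla} = (-1)^n \sqrt{|k|}\,(A_i^1-\theta^1)\ordwedge\cdots\ordwedge(A_i^n-\theta^n) = (-1)^n \sqrt{|h_\loc^i|}\;\lfc_i^1\ordwedge\cdots\ordwedge\lfc_i^n = (\omega_{h,\lfc})_\loc^i$, which is the desired matching.

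Next I would check that $\homega_{k,\nabla}$ is genuinely $\kg_\equ$-basic, i.e. horizontal and invariant under the Cartan operation defined by $\kg_\equ$. Invariance ($L_{\xi^\varP\oplus\xi}\homega_{k,\nabla}=0$): each factor $\omega_\nabla^a - \theta^a$ of the product has local expression that is a genuine global form on the Atiyah algebroid (it is the connection $1$-form of $\nabla$, which corresponds to a $\kg_\equ$-basic form), so the product is too, and $\sqrt{|k|}$ is constant. Horizontality ($i_{\xi^\varP\oplus\xi}\homega_{k,\nabla}=0$): contract the product with $\xi^\varP\oplus\xi$; since $(\omega_\nabla\oplus(-\theta))$ evaluated on $\xi^\varP\oplus\xi$ gives $\omega_\nabla(\xi^\varP)-\xi = \xi - \xi = 0$ by the normalization of a connection $1$-form, each term in the Leibniz expansion of the interior product vanishes. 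This identifies $\homega_{k,\nabla}$ as an element of $\Omega^\grast_\tla(\varP)_{\kg_\equ}$, and by the previous paragraph it trivializes to $\omega_{h,\lfc}$, so by Prop.~\ref{prop-identificationdifferentialcalculusAtiyah} and Lemma~\ref{lemma-pullbackbasicforms} it is precisely the $\kg_\equ$-basic form corresponding to $\omega_{h,\lfc}$.

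The main obstacle I anticipate is purely bookkeeping rather than conceptual: correctly tracking which objects live on $\varP$ versus on $U_i$, and making sure the ``$\theta^a$'' appearing in $\omega_\nabla^a - \theta^a$ (the inner dual basis over $\varP$) matches, after pullback by $s_i$, the ``$\theta^a$'' used to define $\lfc_i^a$ over $U_i$; this is exactly the content of Lemma~\ref{lemma-pullbackbasicforms} applied to the connection $1$-form, so once that lemma is invoked the computation closes with only the constancy of $\sqrt{|k|}$ and the normalization $\omega_\nabla(\xi^\varP)=\xi$ as additional inputs.
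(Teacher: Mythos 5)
Your overall strategy coincides with the paper's: exhibit the candidate form, check it is $\kg_\equ$-basic, and match its pullback $s_i^\ast\homega_{k,\nabla}$ with the local expression $(-1)^n\sqrt{|h^i_\loc|}\,\lfc_i^1\ordwedge\cdots\ordwedge\lfc_i^n$ of $\omega_{h,\lfc}$ via Lemma~\ref{lemma-pullbackbasicforms} (extended to $C^\infty(\varM)$-valued forms). The matching computation and the horizontality check ($i_{\xi^\varP\oplus\xi}(\omega_\nabla^a-\theta^a)=\xi^a-\xi^a=0$, then the antiderivation property) are correct and are exactly what the paper does.

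There is, however, a genuine flaw in your invariance argument. You claim that each scalar factor $\omega_\nabla^a-\theta^a\in\Omega^1_\tla(\varP)$ is individually (the local expression of) a $\kg_\equ$-basic form because ``it is the connection $1$-form of $\nabla$.'' That conflates the $\kg$-valued form $\sum_a(\omega_\nabla^a-\theta^a)\otimes E_a\in\Omega^1_\tla(\varP,\kg)$, which is basic for the Cartan operation that includes the adjoint action on the $\kg$ factor, with its scalar components, which are \emph{not} invariant in $\Omega^1_\tla(\varP)$: one has $L_{\xi^\varP\oplus\xi}(\omega_\nabla^a-\theta^a)=-C^a_{bc}\,\xi^b(\omega_\nabla^c-\theta^c)$. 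Consequently the top wedge product picks up a factor $-\tr(\ad_\xi)$ under the Lie derivative, and its invariance holds precisely because $\kg$ is unimodular (automatic here since $\kg$ is semi-simple, but it is the operative hypothesis, and the paper explicitly invokes it at this step). Your argument as written never uses unimodularity and would ``prove'' invariance for an arbitrary Lie algebra, where it is false; so the step needs to be replaced by the trace computation just described.
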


\begin{proof}
Notice that the proof of Lemma~\ref{lemma-pullbackbasicforms} applies to basic forms in $\Omega^\grast_\tla(\varP)$ which are trivialized as local forms in $\Omega^\grast_\tla(U_i) = \Omega^\grast(U_i) \otimes \exter^\grast \kg^\ast$ via the pull-back $s_i^\ast$. Using the unimodular property of $\kg$, a straightforward computation shows that the proposed expression for $\homega_{k,\nabla}$ is $\kg_\equ$-basic.

Now, $\omega_{h,\lfc}$ is locally defined on $U_i$ as $(-1)^n \sqrt{|h^i_\loc|}\; \lfc_i^{1} \ordwedge \cdots \ordwedge \lfc_i^{n}$ where $h^i_\loc = k$ and $\lfc_i^a = A_i^a - \theta^a \in \Omega^1_\tla(U_i)$. Notice finally that $A_i = A_i^a \otimes E_a$ is the local expression of $\omega_\nabla$ given explicitly by $A_i = s_i^\ast \omega_\nabla$.
\end{proof}

For any $\omega = \omega_{\dR} \otimes \omega_{\algebraic} \otimes \xi \in \Omega^\grast(\varP) \otimes \exter^\grast \kg^\ast \otimes \kg = \Omega^\grast_\tla(\varP,\kg)$, we now define a natural map $\int_{\algebraic} \Omega^\grast_\tla(\varP,\kg) \rightarrow \Omega^{\grast-n}(\varP) \otimes \kg$ by
\begin{equation*}
\int_{\algebraic} \omega  = 
\begin{cases}
\omega_{\dR} \otimes \xi & \text{ if $\omega = \omega_{\dR} \otimes \sqrt{|k|}\; \theta^1 \ordwedge \cdots \ordwedge \theta^n \otimes \xi$}\\
0 & \text{ if $\omega_{\algebraic} \not\in \exter^n \kg^\ast$}
\end{cases}
\end{equation*}

\begin{theorem}
\label{thm-relationsdeRhamTLAAtiyah}
The following diagram is commutative
\begin{equation*}
\xymatrix@R=8ex{
  {\Omega^\grast_\lie(\varP, \kg)} \ar@{^{(}->}[r] \ar[d]_-{\int_\inner}
& {\Omega^\grast_\tla(\varP,\kg)} \ar[d]^-{\int_{\algebraic}} 
\\
  {\Omega^{\grast-n}(\varM, \varL)} \ar@{^{(}->}[r] 
& {\Omega^{\grast-n}(\varP) \otimes \kg} 
 }
\end{equation*}
\end{theorem}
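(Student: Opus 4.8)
The plan is to check commutativity of the square on the two types of generators and then invoke linearity and the identifications already in place. Concretely, I would take an arbitrary $\omega\in\Omega^\grast_\lie(\varP,\kg)$, view it through Proposition~\ref{prop-identificationdifferentialcalculusAtiyah} as the $\kg_\equ$-basic form $\homega\in\Omega^\grast_\tla(\varP,\kg)_{\kg_\equ}\subset\Omega^\grast_\tla(\varP,\kg)$ (this realizes the top horizontal inclusion), and compare the two composites applied to $\homega$. Going down-then-right gives $\int_\inner\omega\in\Omega^{\grast-n}(\varM,\varL)$, included into $\Omega^{\grast-n}(\varP)\otimes\kg$; going right-then-down gives $\int_\algebraic\homega$.

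First I would reduce to the local picture. Lemma~\ref{lemma-pullbackbasicforms} gives $\omega^i_\loc=s_i^\ast\homega$, so the local trivialization $\omega^i_\loc\in\Omega^\grast_\tla(U_i,\kg)=\Omega^\grast(U_i)\otimes\exter^\grast\kg^\ast\otimes\kg$ is literally the restriction of $\homega$ to the section $s_i$. Writing $\omega^i_\loc$ in the mixed basis as in \eqref{eq-omegalocinmixedbasis}, with the reference connection $\nabla$ fixed above and $\lfc^a_i=A^a_i-\theta^a$, the inner integration $\int_\inner\omega$ is by Definition~\ref{def-innerintegration} the factor $\omega^\maxinner_{\loc\,i}$ of $\sqrt{|k|}\,\theta^1\ordwedge\cdots\ordwedge\theta^n$, since here $h^i_\loc=k$ is constant. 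On the other hand $\int_\algebraic\homega$ by definition extracts exactly the coefficient of $\sqrt{|k|}\,\theta^1\ordwedge\cdots\ordwedge\theta^n$ in $\homega\in\Omega^\grast(\varP)\otimes\exter^\grast\kg^\ast\otimes\kg$. So the two operations are computing the same coefficient, one after pull-back by $s_i$ and one directly on $\varP$; the point is that $\int_\algebraic$ and pull-back along $s_i$ commute, because $s_i^\ast$ acts only on the de~Rham factor $\Omega^\grast(\varP)$ and leaves the $\exter^\grast\kg^\ast\otimes\kg$ factor untouched. Hence $s_i^\ast(\int_\algebraic\homega)=\int_\inner\omega$ locally on $U_i$, which is precisely the statement that $\int_\algebraic\homega$, as a global $\kg$-valued form on $\varP$, has local trivializations $s_i^\ast$-equal to those of $\int_\inner\omega$, i.e. the bottom inclusion identifies the two.

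Second, I would verify consistency: that $\int_\algebraic\homega$ does land in the image of $\Omega^{\grast-n}(\varM,\varL)\hookrightarrow\Omega^{\grast-n}(\varP)\otimes\kg$, equivalently that it is $\kg_\equ$-basic (horizontal and invariant). Invariance: since $\homega$ is $G$-invariant and $\int_\algebraic$ commutes with the $G$-action (the action on the $\exter^\grast\kg^\ast$ factor is $\exter^\grast\Ad^\ast$ and, $\kg$ being unimodular, fixes $\sqrt{|k|}\,\theta^1\ordwedge\cdots\ordwedge\theta^n$; the action on $\kg$ is $\Ad$), the result is $G$-invariant. Horizontality: contracting $\int_\algebraic\homega$ with a fundamental vector field $\xi^\varP$ corresponds, on the $\exter^\grast\kg^\ast$ side after using that $\homega$ is $\kg_\equ$-basic, to a degree drop that is killed by the projection onto $\exter^n\kg^\ast$. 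This is exactly the computation already performed in the proof of Lemma~\ref{lem-pullbackvolumeform} for $\homega_{k,\nabla}$, and the same argument applies verbatim to a general basic $\homega$.

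The main obstacle I expect is purely bookkeeping: matching the combinatorial normalization of the mixed-basis decomposition \eqref{eq-omegalocinmixedbasis} — which is written in the $\lfc^a$'s, not the $\theta^a$'s — against the definition of $\int_\algebraic$, which is phrased in the $\theta^a$'s. One must recall (as stated just after Definition~\ref{def-innerintegration}) that the coefficient of $\sqrt{|k|}\,\lfc^1\ordwedge\cdots\ordwedge\lfc^n$ equals the coefficient of $\sqrt{|k|}\,\theta^1\ordwedge\cdots\ordwedge\theta^n$, so that $\int_\inner$ is genuinely connection-independent and no $A$-dependent correction terms survive. Once that identity is invoked the two sides agree on the nose, and the global statement follows because both $\int_\inner$ and $\int_\algebraic$ are compatible with the transition maps $\halpha^{\,i}_j$ — $\int_\inner$ by the construction preceding Definition~\ref{def-innerintegration}, and $\int_\algebraic$ because the $s_i^\ast\homega$ already satisfy the gluing relations of Proposition~\ref{prop-relationtrivformslocaldifferentialcommutes}. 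This completes the proof.
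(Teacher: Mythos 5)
Your proposal is correct and follows essentially the same route as the paper: first checking that $\int_\algebraic$ sends $\kg_\equ$-basic forms to tensorial forms (invariance via unimodularity of $\kg$, horizontality via a bidegree count), then verifying that $\int_\algebraic$ and $\int_\inner$ agree by pulling back along the local sections $s_i$ and comparing coefficients of $\sqrt{|k|}\,\theta^1\ordwedge\cdots\ordwedge\theta^n$. The paper organizes the local comparison through Lemma~\ref{lem-pullbackvolumeform} and the mixed basis $(\omega_\nabla^a-\theta^a)$ rather than extracting the $\theta$-coefficient directly, but this is only a cosmetic difference in bookkeeping.
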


In this diagram, $\Omega^{\grast}(\varM, \varL)$ is identified with the space of tensorial forms in $\Omega^{\grast}(\varP) \otimes \kg$ \cite{KobaNomi96a}.

\begin{proof}
The first point to check is that $\int_{\algebraic}$ maps $\kg_\equ$-basic forms in $\Omega^\grast_\tla(\varP,\kg)$ to tensorial forms in $\Omega^{\grast-n}(\varP) \otimes \kg$. Because $G$ is connected and simply connected, a form $\sum_a \homega^a_{\dR} \otimes \xi_a \in \Omega^{\grast}(\varP) \otimes \kg$ is tensorial if and only if $ \sum_a (L_{\xi^\varP} \homega^a_{\dR}) \otimes \xi_a + \sum_a \homega^a_{\dR} \otimes [\xi, \xi_a] = 0$ and $ \sum_a (i_{\xi^\varP}\homega^a_{\dR}) \otimes \xi_a = 0$ for any $\xi \in \kg$.

A form $\homega = \sum_a \homega^a_{\dR} \otimes \sqrt{|k|}\; \theta^1 \ordwedge \cdots \ordwedge \theta^n \otimes \xi_a$ is basic if and only if for any $\xi \in \kg$ one has
\begin{multline*}
\sum_a (L_{\xi^\varP} \homega^a_{\dR}) \otimes \sqrt{|k|}\; \theta^1 \ordwedge \cdots \ordwedge \theta^n \otimes \xi_a 
\\
+ 
\sum_a \homega^a_{\dR} \otimes (L^\kg_\xi \sqrt{|k|}\; \theta^1 \ordwedge \cdots \ordwedge \theta^n )\otimes \xi_a
\\
+
\sum_a \homega^a_{\dR} \otimes \sqrt{|k|}\; \theta^1 \ordwedge \cdots \ordwedge \theta^n \otimes [\xi,\xi_a] = 0
\end{multline*}
and
\begin{multline*}
\sum_a (i_{\xi^\varP}\homega^a_{\dR}) \otimes \sqrt{|k|}\; \theta^1 \ordwedge \cdots \ordwedge \theta^n \otimes \xi_a
\\
+
\sum_a (-1)^{|\homega^a_{\dR}|} \homega^a_{\dR} \otimes (i_\xi \sqrt{|k|}\; \theta^1 \ordwedge \cdots \ordwedge \theta^n ) \otimes \xi_a
=0.
\end{multline*}
Because $\kg$ is unimodular, one has $L^\kg_\xi \sqrt{|k|}\; \theta^1 \ordwedge \cdots \ordwedge \theta^n = 0$, so that $\int_{\algebraic} \homega = \sum_a \homega^a_{\dR} \otimes \xi_a$ is invariant. Looking at each bidegrees for the horizontality condition on $\homega$, one gets $\sum_a (i_{\xi^\varP}\homega^a_{\dR}) \otimes \sqrt{|k|}\; \theta^1 \ordwedge \cdots \ordwedge \theta^n \otimes \xi_a = 0$ so that $\int_{\algebraic} \homega$ is horizontal.

The second point to check is that $\int_{\algebraic}$ coincides on $\kg_\equ$-basic forms with $\int_\inner$. In order to do that, we consider these integrations on a trivialization of $\varP$ given by a local section $s : U \rightarrow \varP$. Then one has the following diagram:
\begin{equation*}
\xymatrix@R=8ex@C=30pt{
  {\Omega^\grast_\tla(\varP,\kg)_{\kg_\equ}} \ar[d]_-{\int_{\algebraic}} \ar[r]^-{s^\ast}
& {\Omega^\grast_\tla(U,\kg)} \ar[d]^-{\int_\inner}
\\
  {(\Omega^{\grast-n}(\varP)\otimes \kg)_{\text{tensorial}}} \ar[r]^-{s^\ast}  
& {\Omega^{\grast-n}(U) \otimes \kg} 
 }
\end{equation*}
The map $s^\ast$ (see for instance \cite{KobaNomi96a}) in the bottom row is the same as the map $s^\ast$ in the top row. For any basic form $\homega = \sum_a (-1)^n \sqrt{|k|}\; \homega^a_{\dR} (\omega_\nabla^{1} - \theta^1) \ordwedge \cdots \ordwedge (\omega_\nabla^{n} - \theta^n) \otimes \xi^a \in \Omega^\grast_\tla(\varP,\kg)_{\kg_\equ}$ one has 
$s^\ast \homega = \sum_a (-1)^n \sqrt{|k|}\; (s^\ast\homega^a_{\dR})\, \lfc^{1} \ordwedge \cdots \ordwedge \lfc^{n}  \otimes \xi^a$ 
because of Lemma~\ref{lem-pullbackvolumeform}, so that
$\int_\inner s^\ast \homega = \sum_a (s^\ast\homega^a_{\dR}) \otimes \xi^a$. 
On the other hand, one has
$\int_{\algebraic} \homega = \sum_a \homega^a_{\dR} \otimes \xi^a$
so that
$s^\ast \int_{\algebraic} \homega = \sum_a (s^\ast\homega^a_{\dR}) \otimes \xi^a$.

This proves the coincidence of the two integrals, because $\int_\inner$ on $\Omega^\grast_\lie(\varP, \kg)$ is completely determined by $\int_\inner$ on the trivializations of forms in $\Omega^\grast_\tla(U,\kg)$.
\end{proof}

Using Prop.~\ref{prop-tripleforinnernondegeneratemetric}, one can define an inner non degenerate metric $\hg$ on $\Gamma_G(T\varP)$ as a triple $(g, h, \nabla)$ where $h$ and $\nabla$ are defined as above and $g$ is an ordinary metric on the base manifold $\varM$. Then the properties of this triple are exactly the ones defining a metric for a non-abelian Kaluza-Klein theory on $\varP$ \cite{Kerner1988fn}. Notice that the geometrical point of view is generally adopted in these theories (geodesics and trajectories of particle) while our point of view here is the one from field theories.

\medskip
Denote by $\caG(\varP)$ the gauge group of $\varP$, of vertical automorphisms of $\varP$. In the following we represent an element $u \in \caG(\varP)$ as a $G$-equivariant map $u : \varP \rightarrow G$, $u(p\cdotaction g) = g^{-1} u(p) g$. It is well known that $\lieL$ is the Lie algebra of $\caG(\varP)$. 

Let $\omega \in \Omega^1_\lie(\varP, \kg)$. For any $u \in \caG(\varP)$, let us introduce $\omega^u(\sfX) = u^{-1} \omega(\sfX) u + u^{-1} (\sfX \cdotaction u)$ for any $\sfX \in \Gamma_G(T\varP)$. Using the $G$-equivariance of $\omega(\sfX)$ and $u$, it is straightforward to verify that $\omega^u(\sfX) \in \Gamma_G(\varP, \kg)$, so that $\omega^u \in \Omega^1_\lie(\varP, \kg)$.

The map $\omega \mapsto \omega^u$ is the action of $\caG(\varP)$ on connections on $\Gamma_G(T\varP)$. This action reduces to the infinitesimal action of $\lieL = \Gamma_G(\varP, \kg)$ defined in \cite{Mass38} and used in Section~\ref{sec-gaugetheories}.

Let $\varE = \varP \times_{\ell} \evF$ be an associated vector bundle for a vector space $\evF$ supporting a representation $\ell$ of $G$. Then $\phi(\sfX)\varphi = \sfX\cdotaction \varphi$ is a representation of $\Gamma_G(T\varP)$ where we look at sections of $\varE$ as $G$-equivariant maps $\varphi: \varP \rightarrow \evF$ such that $\varphi(p\cdotaction g) = \ell(g^{-1}) \varphi(p)$. The covariant derivative $\hnabla^\varE$ associated to a connection $\omega$ on $\Gamma_G(T\varP)$ is given by $\hnabla^\varE_\sfX \varphi = \sfX \cdotaction \varphi + \ell(\omega(\sfX)) \varphi$ where $\ell$ designates also the induced representation of $\kg$ on $\evF$. The gauge group $\caG(\varP)$ acts on $\Gamma(\varE)$ by $\varphi \mapsto \ell(u^{-1}) \varphi$. The covariant derivative associated to $\omega^u$ is $\hnabla^{\varE\, u}_\sfX \varphi  = \ell(u^{-1}) \hnabla^\varE_\sfX \ell(u) \varphi$, which is the usual expression of a gauge transformation on a covariant derivative.

To any $u \in \caG(\varP)$ we can associate the two $1$-forms $u^{-1} \dd u = u^\ast \theta \in \Omega^1(\varP)\otimes \kg$ and $u^{-1} \ds u = \Ad_{u^{-1}} \theta - \theta \in \kg^\dualast \otimes \kg$, so that $u^{-1} \hd_\tla u =  u^{-1}\dd u + u^{-1} \ds u \in \Omega^1_\tla(\varP,\kg)$ makes sense.

Let $\homega \in \Omega^1_\tla(\varP,\kg)$ be the $\kg_\equ$-basic $1$-form corresponding to $\omega \in \Omega^1_\lie(\varP, \kg)$. Then the gauge transformation on $\homega$ takes the form $\homega^u = u^{-1} \homega u + u^{-1} \hd_\tla u$. To verify that $\homega^u$ is $\kg_\equ$-basic requires to take into account all the $G$-equivariances of the various objects in this relation.

The Killing metric $k$ on $\kg$ defines a locally constant Killing inner metric $h$ on $\lieL$. Using a metric on $\varM$ and a background connection $\omegadot$ on $\varP$, one can write an action functional for connections on $\Gamma_G(T\varP)$. This action functional reduces to the ordinary Yang-Mills action functional on ordinary connections, as it is easy to see by imposing $\rke=0$. In the same way, the functional action of matter fields reduces to the usual action functional of minimal coupling with Yang-Mills potentials.

This means that the gauge theories proposed in Section~\ref{sec-gaugetheories}, when specified on Atiyah Lie algebroids, are generalizations of the Yang-Mills gauge theories used in physics. We have noticed in \ref{subsec-physicalremarks} that these theories are of the Yang-Mills-Higgs types, and the similitude of this approach with the one proposed in non-commutative geometry has already been noticed. But there are two main differences we would like to highlight. 

The first one is that in the present approach, the ``generalized'' gauge theories constructed here contains the ordinary gauge theories, so that we can consider the latter as \emph{special cases} in a larger class of theories. In particular, the Atiyah Lie algebroid framework gets in close contact with the ordinary geometry of fibre bundles and their connections, which are at the heart of present day gauge theories. 

The second point is more technical. One of the problems encountered in the non-commutative approach is the fact that the gauge group is extracted from some algebraic structures, for instance as automorphisms of an associative algebra. Here, every (usual) gauge group can be promoted into these new gauge theories, because these gauge groups are related to some principal fibre bundle, which in turn gives rise to an Atiyah Lie algebroid.

\subsection{Derivations on a vector bundle}
\label{subsec-Derivationsonavectorbundle}

Let $\varE$ be a rank $p$ complex vector bundle over the manifold $\varM$. Using any hermitian structure on $\varE$, we suppose that its structure group $H$ is contained in $U(p)$, the group of complex unitary $p\times p$ matrices. Denote by $\End(\varE) = \varE \otimes \varE^\ast$ the fibre bundle of endomorphisms of $\varE$ where $\varE^\ast$ is the dual vector bundle associated to $\varE$. Denote by $\algA(\varE) = \Gamma(\End(\varE))$ the algebra of endomorphisms of $\varE$.

Let $\kD(\varE)$ be the space of first order operators on $\Gamma(\varE)$ whose symbol is the identity. Then $\xymatrix@1@C=15pt{{\algzero} \ar[r] & {\algA(\varE)} \ar[r]^-{\iota} & {\kD(\varE)} \ar[r]^-{\sigma} & {\Gamma(T \varM)} \ar[r] & {\algzero}}$ is the transitive Lie algebroid of derivations of $\varE$ where $\sigma$ is the symbol map \cite{MR585879}.

Denote by $(\Omega^\grast_\lie(\varE, \algA(\varE)), \hd)$ the graded differential algebra of forms on this transitive Lie algebroid with values in its kernel, and denote by $(\Omega^\grast_\lie(\varE), \hd_\lie)$ the graded commutative differential algebra of forms on $\kD(\varE)$ with values in $C^\infty(\varM)$. The natural inclusion $C^\infty(\varM) \rightarrow \algA(\varE)$ induces a morphism of graded differential algebras $\Omega^\grast_\lie(\varE) \hookrightarrow \Omega^\grast_\lie(\varE, \algA(\varE))$.

Let $\{ U_i, \phi_i \}_{i \in I}$ be a system of trivializations of $\varE$ associated to a good cover $\{ U_i \}_{i \in I}$ of $\varM$, where $\phi_i : U_i \times \gC^p \rightarrow \varE_{|U_i}$ are linear isomorphisms. Then this system of trivializations induces a natural system of trivializations of $\End(\varE)$, $\{ U_i, \hphi_i \}_{i \in I}$, such that $\hphi_i : U_i \times M_p(\gC) \rightarrow \End(\varE)_{|U_i}$ and $\hphi_i(x,\gamma) \cdotaction \phi_i(x,v) = \phi_i(x, \gamma \cdotaction v)$ for any $\gamma \in M_p(\gC)$ and any $v \in \gC^p$. Any $s \in \Gamma(\varE)$ (resp. $a \in \algA(\varE)$) is then trivialized by a family of maps $s^i : U_i \rightarrow \gC^p$ (resp. $a^i : U_i \rightarrow M_p(\gC)$). A first order operator $\kX \in \kD(\varE)$ is trivialized as a family of elements $X_i \oplus \gamma^i \in \Gamma(T U_i) \oplus \Gamma(U_i \times M_p(\gC))$ through the relation
\begin{equation*}
(\kX \cdotaction s)(x) = \phi_i \left(x, (X_i \cdotaction s_i)(x) + \gamma^i(x) \cdotaction s_i(x) \right)
\end{equation*}
for any $x \in U_i$ where $\cdotaction$ means either the action of a vector field on vector valued functions or the action of matrices on vectors. Notice that $X_i = X_j = X = \sigma(\kX)$ on $U_{ij} \neq \ensvide$ and $\gamma^i = h_{ij} \gamma^j h_{ij}^{-1} + h_{ij} \dd h_{ij}^{-1} (X)$ where $h_{ij} : U_{ij} \rightarrow H \subset U(p)$ are the transition functions of $\varE$ such that $s^i(x) = h_{ij}(x) s^j(x)$ for any $x \in U_{ij}$.
The system of trivializations considered for $\kD(\varE)$ is thus defined by $\nabla^{0,i}_X = X$ and $\Psi_i(a^i) = a^i$ for any $a^i : U_i \rightarrow M_p(\gC)$, and one has $\alpha_j^i(\gamma) = h_{ij} \gamma h_{ij}^{-1}$ and $\chi_{ij}(X) = h_{ij} \dd h_{ij}^{-1}(X)$. The local description of $\Omega^\grast_\lie(\varE, \algA(\varE))$ is given by the differential calculi $\Omega^\grast(U_i) \otimes \exter^\grast M_p^\ast \otimes M_p$.

The Lie algebra on which this Lie algebroid is modelled is $\kg = M_p(\gC) = M_p$ with the commutator as Lie bracket and $n= p^2$. This Lie algebra decomposes as $\kg = \gC \bbbone_p \oplus \ksl_n$ where $\bbbone_p$ is the unit matrix in $M_p$ and $\ksl_p$ is the Lie algebra of traceless matrices in $M_p$. In any trivialization, one can decompose $X \oplus \gamma^i = \tla(U_i, M_p)$ as $X \oplus \left( \frac{1}{p} \lambda^i \bbbone_p \oplus \gamma_0^i \right)$ where $\lambda^i = \tr(\gamma^i)$ and $\gamma_0^i = \gamma^i - \frac{1}{p} \lambda^i \bbbone_p : U_i \rightarrow \ksl_p$. Then the family $X \oplus \lambda^i$ associated to a family $X_i \oplus \gamma^i$ of trivializations of an element $\kX \in \kD(\varE)$ defines a global element in the transitive Lie algebroid
\begin{equation*}
\xymatrix@1{{\algzero} \ar[r] & {C^\infty(\varM)} \ar[r]^-{\iota} & {\kD(\det(\varE))} \ar[r]^-{\sigma} & {\Gamma(T \varM)} \ar[r] & {\algzero}}
\end{equation*}
where $\det(\varE) = \exter^p \varE$ is the determinant line bundle associated to $\varE$. This map is the natural representation of $\kD(\varE)$ on $\det(\varE)$ given by $\kX \mapsto \exter^p \kX$ where
\begin{equation*}
\left( \exter^p \kX \right)(e_1 \ordwedge \cdots \ordwedge e_p) = \sum_{k=1}^p e_1 \ordwedge \cdots \ordwedge \kX(e_k) \ordwedge \cdots \ordwedge e_p.
\end{equation*}
The induced map $\algA(\varE) \rightarrow C^\infty(\varM)$ is the globally defined trace $\tr$, which is a Lie morphism. This representation gives rise to a natural morphism of graded commutative differential algebras $\Omega^\grast_\lie(\det(\varE)) \rightarrow \Omega^\grast_\lie(\varE)$.

The inner orientability of $\kD(\varE)$ corresponds to the orientability of the (vector) bundle $\End(\varE)$. Because $U(p)$ is unimodular, $\End(\varE)$ is always orientable. The trace map given before defines a non degenerate inner metric $h$ on $\algA(\varE)$ given by $h(a,b) = \tr(ab)$. In any local trivialization, the inner metric $h$ is represented by a constant matrix. Notice that the unimodularity of the (real) Lie algebra $\ku_p$ of $U(p)$ implies the unimodularity of the (complex) Lie algebra $\kg = M_p$.

The inner integration $\int_\inner : \Omega^\grast_\lie(\varE, \algA(\varE)) \rightarrow \Omega^{\grast-n}(\varM, \End(\varE))$ defined by the inner metric $h$ can be composed with the trace map in order to define
\begin{equation}
\label{eq-definnerintegrationtrace}
\int^{\tr}_\inner = \tr \circ \int_\inner : \Omega^\grast_\lie(\varE, \algA(\varE)) \rightarrow \Omega^{\grast-n}(\varM)
\end{equation}

\begin{proposition}
For any $\omega \in \Omega^\grast_\lie(\varE, \algA(\varE))$ one has
\begin{equation*}
\int^{\tr}_\inner \hd \omega = \dd \int^{\tr}_\inner \omega
\end{equation*}
\end{proposition}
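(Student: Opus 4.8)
The plan is to reduce the statement to the commutation result already obtained for $C^\infty(\varM)$-valued forms on an Atiyah-type Lie algebroid. By \eqref{eq-definnerintegrationtrace} we have $\int^{\tr}_\inner=\tr\circ\int_\inner$, so the first step is to record that $\tr\colon\algA(\varE)\to C^\infty(\varM)$, being a trace, annihilates commutators and is therefore a $C^\infty(\varM)$-linear morphism of $\kD(\varE)$-representations: from $\algA(\varE)$ with the adjoint action (which in a local trivialization reads $\kX\cdotaction a=\sigma(\kX)\cdotaction a+[\gamma,a]$) to $C^\infty(\varM)$ with the action $\kX\cdotaction f=\sigma(\kX)\cdotaction f$ through the symbol, since $\tr(\kX\cdotaction a)=\sigma(\kX)\cdotaction\tr(a)$. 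Consequently $\omega\mapsto\tr\circ\omega$ defines a morphism of graded differential complexes $\tr_\ast\colon(\Omega^\grast_\lie(\varE,\algA(\varE)),\hd)\to(\Omega^\grast_\lie(\varE),\hd_\lie)$, i.e. $\hd_\lie(\tr\circ\omega)=\tr\circ(\hd\omega)$.

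The second step is to check that $\tr_\ast$ is compatible with inner integration, namely $\int_\inner(\tr\circ\omega)=\tr\bigl(\int_\inner\omega\bigr)=\int^{\tr}_\inner\omega$. This is immediate from Definition~\ref{def-innerintegration}: in a local trivialization of $\kD(\varE)$ (where $\Psi_i=\Id$) the trivialization of $\tr\circ\omega$ is $\tr\circ\omega_\loc$, and being a fixed fibrewise-linear map $\tr$ commutes with extracting the coefficient of $\sqrt{|h_\loc|}\,\theta^1\ordwedge\cdots\ordwedge\theta^n$. Combining the two compatibilities, the proposition reduces to the single identity $\int_\inner\hd_\lie\psi=\dd\int_\inner\psi$ for $\psi=\tr\circ\omega\in\Omega^\grast_\lie(\varE)$, and then $\int^{\tr}_\inner\hd\omega=\int_\inner(\tr\circ\hd\omega)=\int_\inner\hd_\lie(\tr\circ\omega)=\dd\int_\inner(\tr\circ\omega)=\dd\int^{\tr}_\inner\omega$.

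The third step is to establish that last identity by the argument already used in the proof of Theorem~\ref{thm-atiyah-commutationofdifferentials}. That argument is entirely local and uses only two inputs, both available here: that $\sqrt{|h_\loc|}$ is locally constant, which holds because the inner metric $h(a,b)=\tr(ab)$ is represented by a constant matrix in every local trivialization; and that the structural Lie algebra $\kg=M_p$ is unimodular, which was noted above as a consequence of the unimodularity of $\ku_p$. Hence the same proof applies verbatim to $\kD(\varE)$ — the simply-connectedness and semi-simplicity hypotheses appearing in Theorem~\ref{thm-atiyah-commutationofdifferentials} enter only in the surrounding identifications of differential calculi, not in the commutation computation itself.

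The main obstacle, and the point I would treat most carefully, is the chain-map property of the first step: verifying that $\tr$ intertwines the adjoint representation on $\algA(\varE)$ with the anchor representation on $C^\infty(\varM)$, which is precisely what makes the Chevalley–Eilenberg part $\ds'$ of $\hd$ disappear after taking the trace. One can also bypass the reduction and argue directly: writing $\hd=\dd+\ds'$ and $\omega_\loc$ in the mixed basis as in \eqref{eq-omegalocinmixedbasis}, the $\dd$-part contributes $\dd\tr(\omega^\maxinner_\loc)=\dd\int^{\tr}_\inner\omega$ to the maximal-inner-degree coefficient exactly as in the $C^\infty(\varM)$-valued case, while for the $\ds'$-part one uses $\tr\circ\ds'=\ds\circ\tr$ together with \eqref{eq-soninnerlessthatn} and the vanishing of $\tr(\ad_{E_a})$ for unimodular $\kg$ to see that its trace never reaches maximal inner degree; both routes rest on the same two facts.
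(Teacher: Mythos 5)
Your proof is correct and follows essentially the same route as the paper's: the paper also argues locally by splitting $\hd$ into the de~Rham part, the Chevalley--Eilenberg part and the adjoint-action part, observes that the trace kills the adjoint action on $M_p$, and then invokes the argument of Theorem~\ref{thm-atiyah-commutationofdifferentials} (which, as you note, uses only the local constancy of $\sqrt{|h_\loc|}$ and the unimodularity of $\kg=M_p$). Your packaging of the trace as a chain map $\tr_\ast$ commuting with $\int_\inner$ is just a cleaner organization of the same two facts, and your ``direct'' variant at the end is precisely the paper's computation.
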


In some extent, this Proposition generalizes the result obtained in Theorem~\ref{thm-atiyah-commutationofdifferentials} for forms which are not $C^\infty(\varM)$-valued. In doing so, it completes the first row of Theorem~\ref{thm-relationsdeRhamTLAAtiyah} using the trace in order to end in the space of forms on $\varM$.

\begin{proof}
The differential $\hd$ is locally the sum of three parts on $\Omega^\grast(U_i) \otimes \exter^\grast M_p^\ast \otimes M_p$: the de~Rham differential on $\Omega^\grast(U_i)$, the Chevalley-Eilenberg differential on $\exter^\grast M_p^\ast$ and the adjoint action on $M_p$. Using similar arguments as the ones used in the proof of Theorem~\ref{thm-atiyah-commutationofdifferentials} and the fact that the trace kills the adjoint action on $M_p$, one gets the result.
\end{proof}

Let $\der(\algA(\varE))$ be the Lie algebra and $C^\infty(\varM)$-module of derivations of the associative algebra $\algA(\varE)$. In \cite{Mass14}, a natural surjection $\kD(\varE) \rightarrow \der(\algA(\varE))$ was proposed: it associates to any $\kX \in \kD(\varE)$ the derivation $a \mapsto [\kX, a]$ for any $a \in \algA(\varE)$ where the commutator takes place in the space of operators on $\Gamma(\varE)$. Locally this corresponds to $X \oplus \gamma \mapsto X \oplus \ad_\gamma$.

Now, if the structure group $H$ of $\varE$ can be reduced such that $H \subset SU(p)$, then there is a natural injection $\der(\algA(\varE)) \rightarrow \kD(\varE)$ of Lie algebroids defined locally by $X \oplus \ad_{\gamma^i} \mapsto X \oplus \gamma^i$ for any traceless $\gamma^i : U_i \rightarrow \ksl_p$. One then has a splitting of Lie algebroids $\kD(\varE) \simeq \der(\algA(\varE)) \oplus C^\infty(\varM)$. The inclusion of $\der(\algA(\varE))$ into $\kD(\varE)$ induces a natural morphism of graded differential algebras $\Omega^\grast_\lie(\varE,\algA(\varE)) \rightarrow \Omega^\grast_\der(\algA(\varE))$ where $\Omega^\grast_\der(\algA(\varE))$ is the derivation based differential calculus associated to $\algA(\varE)$ (see \cite{Mass30,Mass38} for details). This morphism connects together the integration $\int^{\tr}_\inner$ defined in \eqref{eq-definnerintegrationtrace} and the equivalent integration defined in \cite{Mass15} on the non-commutative geometry of the algebra $\algA(\varE)$.

\section*{References}
\bibliography{Gauge-theories-Lie-algebroids}

\end{document}